\def\@makefnmark{%
  \leavevmode
  \raise.9ex\hbox{\fontsize\sf@size\z@\normalfont\tiny\@thefnmark}}
\newtheorem{assu}{Assumption}
\newtheorem{prop}{Proposition}
\newtheorem{definition}{Definition}
\newcommand{\Z}{\mathcal{Z}}
\newcommand{\MW}{\underline{W}}
\newcommand{\mw}{\underline{w}}
\newcommand{\wkp}{\text{wkp}}
\newcommand{\res}{\text{res}}
\newcommand{\fed}{\text{fed}}
\newcommand{\pre}{\text{Pre}}
\newcommand{\post}{\text{Post}}
\DeclareMathOperator{\E}{\mathbb{E}}
\newcommand{\ZIPMWeventsUnbal}{\textnormal{7,626}}
\newcommand{\StateMWeventsUnbal}{\textnormal{82}}
\newcommand{\CityCountyMWeventsUnbal}{\textnormal{121}}
\newcommand{\ZIPMWeventsBase}{\textnormal{2,782}}
\newcommand{\WkpOnResCoeffBaseTen}{8.63}
\newcommand{\OnlyResGammaBase}{0.0372}
\newcommand{\OnlyResGammaBaseTen}{0.37}
\newcommand{\OnlyResGammaBaseTenSE}{0.15}
\newcommand{\OnlyResGammaBasetStat}{2.57}
\newcommand{\OnlyWkpBetaBase}{0.0449}
\newcommand{\OnlyWkpBetaBaseTen}{0.45}
\newcommand{\OnlyWkpBetaBaseTenSE}{0.16}
\newcommand{\OnlyWkpBetaBasetStat}{2.88}
\newcommand{\BothGammaBase}{-0.0219}
\newcommand{\BothGammaBasetStat}{-1.25}
\newcommand{\BothBetaBase}{0.0685}
\newcommand{\BothBetaBaseTen}{0.69}
\newcommand{\BothBetaBaseTenSE}{0.29}
\newcommand{\BothBetaBasetStat}{2.38}
\newcommand{\BothSumBase}{0.0466}
\newcommand{\BothSumBaseTen}{0.47}
\newcommand{\BothSumBasetStat}{2.95}
\newcommand{\GammaEqBetaBasePval}{0.051}
\newcommand{\BothWkpDynGammaBase}{-0.0231}
\newcommand{\BothWkpDynGammaBasetStat}{-1.28}
\newcommand{\BothWkpDynBetaBase}{0.0695}
\newcommand{\BothWkpDynBetaBasetStat}{2.45}
\newcommand{\BothWkpDynSumBase}{0.0464}
\newcommand{\BothWkpDynSumBasetStat}{2.90}
\newcommand{\GammaEqBetaBaseDynPval}{0.045}
\newcommand{\BetaPretrendDynBasePVal}{0.563}
\newcommand{\BothSumStackTen}{\textnormal{0.463}}
\newcommand{\TildeBetaZero}{0.097}
\newcommand{\TildeBetaZeroSE}{0.030}
\newcommand{\TildeBetaZeroPlusBetaOne}{0.181}
\newcommand{\TildeBetaZeroPlusBetaOneSE}{0.065}
\newcommand{\gammaCf}{-0.0219}
\newcommand{\betaCf}{0.0685}
\newcommand{\epsilonCf}{0.1013}
\newcommand{\totIncidenceFedNine}{\textnormal{0.093}}
\newcommand{\totIncidenceCentsFedNine}{\textnormal{9.3}}
\newcommand{\rhoMedianFedNine}{\textnormal{0.103}}
\newcommand{\cbsaLowIncFedNine}{\textnormal{61}}
\newcommand{\rhoMedCentsIndirFedNine}{\textnormal{15.9}}
\newcommand{\rhoMedCentsDirFedNine}{\textnormal{9.7}}
\newcommand{\zipNoIncFedNine}{\textnormal{1,043}}
\newcommand{\zipIncFedNine}{\textnormal{5,741}}
\newcommand{\rhoMedianCentsFedNine}{\textnormal{10}}
\newcommand{\zipcodesFedNine}{\textnormal{6,784}}
\newcommand{\zipNoIncPctFedNine}{\textnormal{15.4}}
\newcommand{\zipIncPctFedNine}{\textnormal{84.6}}
\newcommand{\zipBoundFedNine}{\textnormal{3,616}}
\newcommand{\totIncidenceCentsChiFourteen}{\textnormal{11.2}}
\newcommand{\rhoMedCentsIndirChiFourteen}{\textnormal{15.8}}
\newcommand{\rhoMedCentsDirChiFourteen}{\textnormal{9.2}}
\newcommand{\zipNoIncChiFourteen}{\textnormal{323}}
\newcommand{\zipIncChiFourteen}{\textnormal{62}}
\newcommand{\stateBindingMW}{\textnormal{30}}
\newcommand{\localBindingMW}{\textnormal{37}}
\newcommand{\corrShWkrMedInc}{\textnormal{-0.26}}
\newcommand{\corrShWkrS}{\textnormal{0.30}}
\newcommand{\BottomDecPrRent}{\textnormal{60}}
\newcommand{\TopDecPrRent}{\textnormal{12}}
\title{ From Workplace to Residence: The Spillover Effects of \\
        Minimum Wage Policies on Local Housing Markets%
        \thanks{This article was previously circulated under the title
        ``Minimum Wage as a Place-Based Policy: Evidence from US Housing Rental 
        Markets.''
        We are grateful to Kenneth Chay, John N.\ Friedman, Oded Galor, 
        Peter Hull, Bobby Pakzad-Hurson, Jonathan Roth, Jesse M.\ Shapiro, 
        Neil Thakral, and Matthew Turner 
        for their guidance and support during the development of this project.
        Additionally, we are thankful to Jesse Bruhn, Kyle Butts, Ben Hyman, 
        Samuel K. Hughes, Rafael La Porta, Lorenzo Lagos, Matthew Pecenco, 
        Juan Pereira, Atsushi Yamagishi, and three anonymous 
        referees for valuable suggestions to previous versions of this article.
        We thank seminar participants at 
        the 11th European Meeting of the Urban Economics Association, 
        the Montevideo Graduate Workshop at Universidad Católica del Uruguay, and
        the Applied Micro Lunch at Brown University for valuable comments.
        We acknowledge support from 
        the James M.\ and Cathleen D.\ Stone Wealth and Income Inequality Project and 
        the Orlando Bravo Center for Economic Research, both at Brown University.
        Finally, we thank Martín Gallardo for excellent research assistance.
        All errors are our own.}}
\author{Gabriele Borg \and Diego Gentile Passaro \and Santiago Hermo%
        \footnote{Corresponding author: Santiago Hermo (email: \url{santiago_hermo@brown.edu}).
        Borg: Amazon;
        Gentile Passaro: Amazon;
        Hermo: Department of Economics, Brown University.}}
\date{October 27, 2023}
\begin{document}


\maketitle

\begin{abstract}
    \noindent
    The recent rise of sub-national minimum wage (MW) policies in the US has 
    resulted in significant dispersion of MW levels within urban areas.
    In this paper, we study the spillover effects of these policies on local
    rental markets through commuting.
    To do so, for each USPS ZIP code we construct a ``workplace'' MW measure 
    based on the location of its resident's jobs, and use it to estimate the 
    effect of MW policies on rents.
    We use a novel identification strategy that exploits the fine timing of 
    differential changes in the workplace MW across ZIP codes that share the 
    same ``residence'' MW, defined as the same location's MW.
    Our baseline results imply that a 10 percent increase in the workplace MW 
    increases rents at residence ZIP codes by $\BothBetaBaseTen$ percent.
    To illustrate the importance of commuting patterns, we use our estimates 
    and a simple model to simulate the impact of federal and city counterfactual 
    MW policies.
    The simulations suggest that landlords pocket approximately 10 cents of each 
    dollar generated by the MW across directly and indirectly affected areas, 
    though the incidence on landlords varies systematically across space.
\end{abstract}

\noindent \textit{JEL codes}: H70, J38, R21, R38.

\noindent \textit{Keywords}: Minimum wages, Housing rents, Commuting, Place-based policy, Inequality.

\clearpage


\section{Introduction}\label{sec:intro}


Many US jurisdictions have recently enacted minimum wage policies surpassing the 
federal level of \$7.25, creating considerable variation in minimum wage 
(hereafter MW) levels across and even within metropolitan areas.
These policies are inherently \textit{place-based} in that they are tied to 
a location, and workers may live and work in locations under different 
statutory MW levels, suggesting the presence of spatially heterogeneous policy 
effects.
While most research on the effects of the MW has focused on employment and 
wages irrespective of residence and workplace location
\parencite[e.g.,][]{CardKrueger1994, CegnizEtAl2019},
a full account of the welfare effects of MW policies requires an understanding 
of how they affect different markets and how their effects spill over across 
neighborhoods.
In fact, while the MW appears to lower income inequality through the labor 
market \parencite{Lee1999, AutorEtAl2016},
its overall effect on income for low-wage workers may be smaller if there is 
a significant pass-through from MW changes to prices, including housing
\parencite{Macurdy2015}.

In this paper, we study the effect of MW policies on local rental housing 
markets estimating their effects across neighborhoods within a metropolitan
area.
Consider the introduction of a new MW policy in certain neighborhoods of a
metropolitan area in which low-wage workers are more likely to reside outside 
the jurisdiction that enacted it.
The higher MW will cause an increase in the wage income of those low-wage 
workers that commute to work in the affected neighborhoods, causing a boost in 
housing demand and rental prices in their residence neighborhoods.
This effect, arising from the MW at the workplace, could undermine the 
distributional objective of the policy.
Additionally, the MW hike may affect the jurisdiction that enacted the policy, 
for instance by increasing prices of non-tradable consumption.
This effect, operating through the MW at the residence, will affect the 
demand for housing as well, and consequently rental prices.
Thus, commuting patterns become an essential ingredient to understand the 
heterogeneous effects of local MW policies.

To operationalize this insight we collect granular data on commuting patterns 
and construct, for each USPS ZIP code (hereafter ZIP code) and month,
the \textit{workplace MW}, defined as the log statutory MW where
the average worker of the ZIP code works.
We also define the \textit{residence MW}, which is just the log statutory MW in 
the same ZIP code.
Figure \ref{fig:map_mw_chicago_jul2019} visually represents these MW-based 
measures by illustrating their changes for the Chicago-Naperville-Elgin 
Core-Based Statistical Area (hereafter CBSA) in July 2019,
when the city of Chicago and Cook County increased the MW from \$12 to \$13 and 
from \$11 to \$12, respectively.
Even though the statutory MW only changed in some locations in the CBSA, the 
increase affected the workplace MW of most locations.
We formulate a simple partial-equilibrium model that suggests that these 
measures are sufficient to determine the change in rents in a 
local housing market.

Studying the within-city spillover effects of the MW requires granular data on 
rents, which is why we employ a novel ZIP code-level panel dataset from Zillow.
Our main rent variable is calculated as the median rental price per square foot 
for listings within a specific ZIP code-month for Single Family houses, 
Condominiums, and Cooperative units (SFCC).%
\footnote{Single family houses are standalone housing units, while condominiums 
    and cooperatives are multi-unit buildings with varying ownership structures 
    \parencite{ZillowTypesOfHomes}.}
This variable captures the posted price of newly available units, 
thereby avoiding tenure biases and more accurately reflecting current market 
conditions \parencite{AmbroseEtAl2015}.
We find that low-wage households are more likely to be renters,
tend to reside in these housing types,
and that rents per square foot are surprisingly uniform across the income 
distribution.
These findings suggest that the Zillow data can feasibly capture any MW effects.
Moreover, the data varies monthly, aligning with the frequency of MW 
changes, thus allowing us to construct an estimation strategy that exploits the 
exact timing of hundreds of policy changes staggered across jurisdictions and 
months.

To estimate the spillover effects of MW policies on rents, we develop a novel 
difference-in-differences strategy that exploits our granular and high-frequency 
data to compare the evolution of rents across ZIP codes differentially exposed 
to workplace MW changes, conditional on the residence MW.
To further illustrate the importance of commuting patterns in the propagation 
of MW shocks, we use our simple model and our main estimated elasticities to 
evaluate two MW policies: 
a federal MW increase and
a local MW increase in the city of Chicago.
We estimate the share of each dollar of extra income (generated by the MW) that 
accrues to landlords both combining all affected areas and in each particular 
location.
We then discuss our results' implications for assessing the distributional 
impact of MW policies.


We start by introducing a motivating partial equilibrium model of a ZIP code's 
rental market, which is part of a larger geography.
The model is populated by workers who demand housing, and the interaction with 
a supply of rental units by absentee landlords determines the equilibrium 
rental price.
Importantly, residents of the ZIP code can commute to work in other ZIP 
codes, possibly under a different MW policy.
Workers' demand for square footage of homogeneous housing space is modelled as 
a function of prices of non-tradable consumption and income, both of which are 
influenced by the MW levels at residence and workplace locations.
The model illustrates that the impact of a change in MW legislation would vary 
across ZIP codes, depending on whether it alters the MW at the workplace,
the residence, or both.
The model implies that the impact of MW changes in certain ZIP codes on rents 
can be summarized by the workplace and residence MW measures,
emphasizing the need to account for the residence MW in the empirical analysis.


Guided by the theoretical model, we pose an empirical model where log rents in 
a location depend linearly on
leads and lags of the workplace MW,
the residence MW,
ZIP code and time period fixed effects, and 
time-varying controls.
This compares ZIP codes that are differentially exposed to the workplace MW 
but equally affected by the residence MW, conditional on other factors that 
affect the evolution of rents.
The identification assumption is that, within a ZIP code, 
changes in the workplace MW are strictly exogenous with respect to 
unobserved changes in rents after partialing out the confounding variation 
generated by the residence MW.
Given that MW policies are not typically enacted considering their spillover
effects on local rental markets, we argue that this assumption is plausible.
In an appendix, we discuss a general potential outcomes framework following
\textcite{CallawayEtAl2021}.
We demonstrate that, under the assumptions of \textit{parallel trends} and 
\textit{no selection on gains}, 
the effects of the workplace MW and residence MW are identified from the 
conditional slope of log rents with respect to each MW measure.


Our preferred specification implies that 
a 10 percent rise in the workplace MW (holding constant the residence MW) 
increases rents by $\BothBetaBaseTen$ percent (SE=$\BothBetaBaseTenSE$).
Failing to control for the residence MW results in an estimated effect of 
$\OnlyWkpBetaBaseTen$ (SE=$\OnlyWkpBetaBaseTenSE$), and a model that uses 
the residence MW only results in an even lower estimate of 
$\OnlyResGammaBaseTen$ (SE=$\OnlyResGammaBaseTenSE$).
The reasons these estimates are lower are two-fold.
First, by accounting for the difference between workplace and residence 
locations, the workplace MW is a better measurement of the change in the MW 
that is relevant for a ZIP code's wage income, thus reducing measurement 
error.
Second, controlling for the residence MW removes confounding variation in
rents that is generated by unobserved factors that may respond to it, 
such as prices of non-tradable consumption.
Using a rough approximation to the share of MW workers in each ZIP code, we show 
that the elasticity of rents to the workplace MW is larger in locations 
with more MW residents, consistent with the fact that the effect operates by
changing the income of low-wage workers.
Likewise, we find a lower elasticity in locations with larger average incomes.
These results imply that MW changes spill over spatially through commuting, 
affecting local housing markets in places beyond the boundary of the 
jurisdiction that originally enacted the policy.

When including both the workplace and residence MW in our model, we find that 
the coefficient on the residence MW is negative, consistent with the notion 
that increases in local non-tradable consumption ameliorate the effect of the
MW on rents, as in the theoretical model.
However, the coefficient is not statistically significant in our baseline
estimates, and the lack of data on prices of non-tradable consumption of a 
ZIP code's residents prevents us from drawing strong conclusions about 
this effect.


We provide support for our identification assumptions with a battery of 
additional analysis.
First, we test for pre-period coefficients and construct a non-parametric 
analysis of the relationship between log rents and the MW measures.
We find that future MW changes do not predict rents, and the conditional
relationship of log rents with respect to each MW measure is nearly linear,
suggesting that the identification assumptions are plausible.
Second, we estimate our model using a rental index constructed by Zillow that 
controls for variation in the available housing stock at each time.
This variable alleviates concerns that changes in the composition of available 
units, coinciding with MW changes, drive our estimates.
Third, our estimates are robust to using commuting shares for different
years, and they are stronger when we use shares based on jobs below a 
certain nominal income threshold or on younger workers, both of which are 
more likely to be affected by the MW.
This is consistent with the view that identification arises from the ``shares,''
as in \textcite{GoldsmithpinkhamEtAl2020}.
Finally, we construct a ``stacked'' regression model, similar to 
\textcite{CegnizEtAl2019}, that explicitly compares ZIP codes within 
metropolitan areas where some but not all experienced a change in the 
statutory MW.
This helps alleviate concerns that our estimates stem from undesired 
comparisons in difference-in-differences models with staggered treatment 
timing, as highlighted by recent literature 
\parencite{deChaisemartinEtAl2022,RothEtAl2022}.%
\footnote{We also estimate a model that includes the lagged first difference 
	of rents as a control, and is estimated via instrumental variables following 
    \textcite{ArellanoBond1991}.}

Our results remain robust across different sets of controls,
alternative samples of ZIP codes, and reweighing observations to match 
demographics of the population of urban ZIP codes.
We find similar (but noisier) results when we use median rents in 
different housing categories, which are available for smaller samples of 
ZIP codes.


In the final part of the paper, we use our motivating model and simulate 
counterfactual exercises to capture the incidence of MW policies on landlords.
We compute the share pocketed by landlords in each ZIP code, and also
compute the total incidence summing across locations.
We simulate two counterfactual MW policies in January 2020, keeping all other
MW policies in their December 2019 levels.
In the first scenario, we change the federal MW from \$7.25 to \$9.
In the second, we propose a rise in the Chicago City MW from \$13 to \$14.
We estimate that landlords capture $\totIncidenceCentsFedNine$ cents of each 
dollar across locations in affected CBSAs in the former, and 
$\totIncidenceCentsChiFourteen$ cents of each dollar across locations in the 
Chicago-Naperville-Elgin CBSA in the latter.
We find systematic spatial variation in incidence,
with the share pocketed usually being larger in locations that experience an
increase in the workplace MW but not in the residence MW.
These exercises illustrate that commuting patterns are essential to 
understanding the spatial incidence of MW policies within metropolitan areas.


Our analysis has some important limitations.
First, while our workplace MW measure is consistent with a large body of 
empirical work relying on variation generated by shift-share instruments 
(e.g., see recent work by \cite{GoldsmithpinkhamEtAl2020} and
\cite{BorusyakHullJaravel2021}), our formal justification in the theoretical 
model relies on strong constant-elasticity assumptions.
We discuss their plausibility in the body of the paper.
Second, while our model is useful to motivate our empirical strategy, it does 
not account for general equilibrium effects such as changes in commuting 
patterns.
We discuss the potential consequences of relaxing this assumption for our 
empirical results in the context of our model.
Additionally, we caution that our counterfactual simulations based on the model 
should be taken as an approximation to the effects of a small change in the MW.
Third, our exercises do not capture the full welfare effect of MW policies
in the US.
Such an effort would require a general equilibrium model that accounts for 
responses to the MW across several margins.
However, as low-wage households are more likely to rent and thus will be more 
affected by rent effects, our analysis suggest that such a model should 
consider the homeownership status of households.


Our findings expand our understanding of the effects of MW policies on
housing rents by showing how they spill over across local housing markets.
To our knowledge, the only papers whose goal is to estimate the effect of the 
MW on rents in the same location are \textcite{Tidemann2018} and 
\textcite{Yamagishi2021}.
\textcite{AgarwalEtAl2022} studies whether the MW affects eviction 
probabilities, and presents complementary estimates of the effect of the MW on 
rents.
Our paper also relates to \textcite{Hughes2020}, who studies the effect of 
MW policies on rent-to-income ratios and also presents estimates of their 
effect on rents.
None of these papers account for the spatial spillover effects of the MW,
which is essential in a context of within-city variation in MW policies, 
such as the recent experience in the US.

We also contribute to the understanding of place-based policies and the spatial 
transmission of shocks.
\textcite{KlineMoretti2014} argue that place-based policies may result in 
welfare losses due to finite housing supply elasticites.
\textcite{AllenEtAl2020} estimate the within-city transmission of expenditure 
shocks in Barcelona.
We contribute by, first, showing the MW policies result in rent increases
and real income losses for affected workers, and second, by showing that
that local MW policies also transmit across space.

More broadly, our paper relates to the large literature estimating the effects
of MW policies on employment
(see \cite{Dube2019} and \cite{NeumarkShirley2021} for recent reviews of the 
literature), 
the distribution of income \parencite[e.g.,][]{Lee1999, AutorEtAl2016, 
	Dube2019Income}, 
and the overall welfare effect of the MW \parencite{AhlfeldtEtAl2022,
	BergerHerkenhoffMongey2022}.%
\footnote{Our paper is also related to work studying 
	the effects of local MW policies 
	\parencite[e.g.,][]{DubeLindner2021, JardimEtAl2022seattle}, 
	the effect of MW policies on commuting and migration 
	\parencite[e.g.,][]{Cadena2014, Monras2019, PerezPerez2021}, 
	and prices of consumption goods 
	\parencite[e.g.,][]{Aaronson2001, AllegrettoReich2018, Leung2021}.} 
Our contributions are to incorporate spillovers across locations 
\parencite[as in the recent work by][]{JardimEtAl2022discontinuity} and to show 
that rent increases erode some income gains of low-wage workers.
We also contribute by developing a novel panel dataset of MW levels at the 
ZIP code level for the entire US.

Finally, our paper relates to work in econometrics that focuses on spillover 
effects across units,
both in the context of MW policies 
\parencite{Kuehn2016, JardimEtAl2022discontinuity}, 
and more generally of any policy that spills over spatially
\parencite{DelgadoFlorax2015, Butts2021}.
Our approach is similar to \textcite{GiroudMueller2019}: we specify an explicit 
model for spillovers across units that allows us to estimate rich effect 
patterns of the MW on rents.

The rest of the paper is organized as follows.
Section \ref{sec:model} introduces a motivating model of the rental market.
In Section \ref{sec:data} we discuss the empirical relationship between income 
and housing and present our estimation data.
In Section \ref{sec:empirical_strategy} we discuss our empirical strategy and
identification assumptions.
In Section \ref{sec:results} we present our estimation results.
Section \ref{sec:counterfactual} discusses counterfactual MW policies, and
Section \ref{sec:conclusion} concludes.

\section{A Partial-Equilibrium Model}\label{sec:model}
    
In this section we lay out a simple demand and supply model of local rental 
markets.
We use the model to motivate our research design and interpret our empirical
findings.
Specifically, we obtain two results.
First, the model shows that a new MW legislation will have a different effect 
depending on whether it affects the workplace location, residence location, or 
both.
Second, the model shows that under certain conditions the effect of a MW 
policy on rents can be summarized in two MW-based measures: 
the workplace MW and the residence MW.
Derivations rely on several assumptions, the importance of which is 
discussed in the last part of this section.

\subsection{Setup}

We consider the rental market of some ZIP code $i$ embedded in a larger geography 
composed of a finite number of ZIP codes $\Z$.
Workers with residence $i$ work in ZIP codes $z\in\Z(i)$, where 
$\Z(i)\subseteq\Z$.
We let $L_{iz}$ denote the number of $i$'s residents who work in $z$ and 
$L_i = \sum_{z \in \Z(i)} L_{iz}$ the number of residents in $i$.%
\footnote{To simplify, we assume that all of $i$' residents work, so that the 
number of residents equals the number of workers.}
Commuting shares are given by $\pi_{iz} = \frac{L_{iz}}{L_i}$.
We assume that the vector of shares $\{\pi_{iz}\}_{z\in\Z(i)}$ is fixed,
which we think is a good approximation for our empirical setting where we 
observe MW changes at a monthly frequency.%
\footnote{\textcite{AllenEtAl2020} study the within-city transmission of 
expenditure shocks by tourists within Barcelona over a period of two years.
The authors maintain an analogous assumption of constant shares of income that
each location in the city earns from every other location.}
We discuss the consequences of relaxing this assumption later in this section.

\paragraph{Minimum Wages.}

Each ZIP code has a binding nominal minimum wage.
The vector of binding MW levels relevant for $i$ is $\{\MW_z\}_{z\in\Z(i)}$.

\paragraph{Housing demand.}

Each group $(i,z)$ consumes
square feet of living space $H_{iz}$, 
a non-tradable good produced in their residence $C_{iz}^{NT}$, and
a tradable good $C_{iz}^T$.
A representative $(i,z)$ worker chooses between these alternatives by maximizing
a quasi-concave utility function 
$u_{iz} = u \left(H_{iz}, C^{\text{NT}}_{iz}, C^{\text{T}}_{iz}\right)$
subject to a budget constraint
$R_i H_{iz} + P_i(\MW_i) C^{\text{NT}}_{iz} + C^{\text{T}}_{iz} \leq Y_{iz}(\MW_z).$
In this equation 
$R_i$ gives the rental price of housing per square feet,
$P_i(\MW_i)$ gives the price of local consumption,
the price of tradable consumption is normalized to one, and 
$Y_{iz}(\MW_z)$ is an income function.
We specify the effect of MWs on these functions below.

\begin{assu}[Effect of Minimum Wages]\label{assu:mws}
    We assume that
    (i) the price of non-tradable goods is increasing in $i$'s MW, 
    $\frac{d P_i}{d \MW_i} > 0$, and
    (ii) incomes are weakly increasing in $z$'s MW, 
    $\frac{d Y_{iz}}{d \MW_z} \geq 0$, with strict inequality 
    for at least one $z\in\Z(i)$.
\end{assu}

The problem's structure, along with Assumption \ref{assu:mws}, is aligned with 
the prevailing literature.
First, \textcite{MiyauchiEtAl2021} show that individuals tend to consume close 
to home.
Consequently, it's expected that they would be sensitive to local consumption 
prices within their own neighborhoods, justifying the assumption that workers 
consume non-tradables in the same ZIP code.%
\footnote{An extension of the model would allow workers to consume in any ZIP 
    code.
    While theoretically feasible, this extension would require data on 
    consumption trips, which we lack.
    We think of our model as an approximation.}
Second, MW hikes have been shown to increase prices of local consumption 
\parencite[e.g.,][]{Leung2021},
and also to increase wage income even for wages above the MW level 
\parencite[e.g.,][]{CegnizEtAl2019, Dube2019Income}.%
\footnote{An extension would allow separate wage income and business income in 
the budget constraint.
If firm owners tend to live where they work, and MW increases damage profits
\parencite[as found by, e.g.,][]{DracaMachinVanreenen2011, HarasztosiLidner2019},
then business income would depend negatively on the MW level.}

For convenience, we define the per-capita housing demand function as 
$h_{iz} \equiv \frac{H_{iz}}{L_{iz}}$.
The solution to the worker's problem for each $z$ then yields a set of 
continuously differentiable per-capita housing demand functions 
$\{h_{iz} (R_i, P_i, Y_z)\}_{z\in\Z(i)}$.
The following assumption summarizes the properties of these functions.

\begin{assu}[Housing demand]\label{assu:housing_demand}
    Consider the set of functions $\{h_{iz} (R_i, P_i, Y_z)\}_{z\in\Z(i)}$.
    We assume that
    (i) housing is a normal good, 
    $\frac{d h_{iz}}{d Y_z} > 0$ for all $z\in\Z(i)$,
    and
    (ii) housing demand is decreasing in prices of non-tradable consumption, 
    $\frac{d h_{iz}}{d P_i} < 0$.
\end{assu}

Using the first assumption, standard arguments imply that 
$\frac{d h_{iz}}{d R_i} < 0$.
For the second assumption to hold, 
a sufficient (albeit not necessary) condition is that housing and non-tradable
consumption are complements.%
\footnote{To formalize the required condition, let $h_{iz}$ and $c_{iz}$ denote 
    per-capita Marshallian demands resulting from the choice problem, and 
    $\tilde h_{iz}$ denote the corresponding Hicksian housing demand.
    The Slutsky equation implies that
    $$\frac{\partial h_{iz}}{\partial P_i}
    = \frac{\partial \tilde h_{iz}}{\partial P_i}
    - \frac{\partial h_{iz}}{\partial Y_{iz}} c_{iz}.$$
    To obtain $\frac{\partial h_{iz}}{\partial P_i} < 0$, we require that
    $\frac{\partial \tilde h_{iz}}{\partial P_i}
    < \frac{\partial h_{iz}}{\partial Y_{iz}} c_{iz}$, i.e., the income effect 
    of an increase in non-tradable prices is larger than the corresponding 
    substitution effect.}
While direct empirical evidence on this particular channel is lacking,
we view the evidence of workers sorting towards locations with high housing 
costs and more expensive amenities as consistent with it 
\parencite[e.g.,][]{CoutureEtAl2019}.

\paragraph{Housing supply.}

We assume that absentee landlords supply square feet in $i$ according to the 
function $S_i(R_i)$,
and we assume that this function is weakly increasing in $R_i$, 
$\frac{d S_i(R_i)}{d R_i} \ge 0$.
Note that this formulation allows for an upper limit on the number of housing 
units at which point the supply becomes perfectly inelastic.

\subsection{Equilibrium and Comparative Statics}

Total demand of housing in ZIP code $i$ is given by the sum of the demands of 
each group.
Thus, we can write the equilibrium condition in this market as
\begin{equation}\label{eq:equilibrium}
	L_i \sum_{z\in\Z(i)} \pi_{iz} h_{iz} \left(R_i, P_i(\MW_i), Y_z(\MW_z)\right) = S_i(R_i) .
\end{equation}
Given that per-capita housing demand functions are continuous and decreasing 
in rents, under a suitable regularity condition there is a unique equilibrium 
in this market.%
\footnote{To see this, assume that 
$S_i(0)/L_i - \sum_{z\in\Z(i)} \pi_{iz} h_{iz} (0, P_i, Y_z) < 0$
and apply the intermediate value theorem.
Intuitively, we require that at low rental prices per-capita demand exceeds 
per-capita supply.}
Equilibrium rents are a function of the entire set of minimum wages, formally, 
$R^*_i = f(\{\MW_i\}_{i\in\Z(i)})$.

We are interested in two questions.
First, what is the effect of a change in the vector of MWs 
$(\{d \ln \MW_i\}_{i\in\Z(i)})'$ on equilibrium rents?
Second, under what conditions can we reduce the dimensionality of the rents 
function and represent the effects of MW changes on equilibrium rents in a 
simpler way?
We start with the first question.

\begin{prop}[Comparative Statics]\label{prop:comparative_statics}
    Consider residence ZIP code $i$ and a change in MW policy at a larger
    jurisdiction such that for $z\in\Z_0 \subset \Z(i)$ binding MWs increase 
    and for $z\in\Z(i)\setminus \Z_0$ binding MWs do not change,
    where $\Z_0$ is non-empty.
    Under the assumptions of unchanging $\{\pi_{iz}\}_{z\in\Z(i)}$ 
    and Assumptions \ref{assu:mws} and \ref{assu:housing_demand},
    we have that
    \begin{enumerate}
        \item[(a)]
        for any $z'\in\Z_0\setminus\{i\}$ for which $\frac{d Y_{iz'}}{d \MW_{z'}}>0$, 
        the policy has a positive partial effect on rents, 
        $\frac{d\ln R_i}{d\ln\MW_{z'}} > 0$;
        \item[(b)]
        the partial effect of the MW increase in $i$ on rents is ambiguous, 
        $\frac{d\ln R_i}{d\ln\MW_i} \lessgtr 0$; and
        \item[(c)]
        as a result, the overall effect on rents is ambiguous if $i\in\Z_0$ 
        and positive if $i\notin\Z_0$.
    \end{enumerate}
\end{prop}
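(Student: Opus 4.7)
The plan is to apply the implicit function theorem to the equilibrium condition \eqref{eq:equilibrium}, which implicitly defines $R_i^*$ as a function of the full vector of MWs. Writing the equation as $G(R_i; \{\MW_z\}) = 0$, I would compute the denominator $\partial G / \partial R_i$ once, and then compute a separate numerator $\partial G / \partial \MW_{z'}$ for each case in the proposition. The sign of the partial derivative $dR_i^*/d\MW_{z'}$ then equals minus the ratio, so the whole argument reduces to sign-tracking.

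First I would sign the denominator: $\partial G/\partial R_i = L_i \sum_{z\in\Z(i)} \pi_{iz} (\partial h_{iz}/\partial R_i) - S_i'(R_i)$. By Assumption \ref{assu:housing_demand}(i) combined with the standard normal-good/Slutsky argument noted in the text, $\partial h_{iz}/\partial R_i < 0$, while $S_i'(R_i) \ge 0$ by assumption on housing supply. Hence the denominator is strictly negative, so every partial $dR_i^*/d\MW_{z'}$ inherits the sign of its numerator.

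For part (a), take $z' \in \Z_0 \setminus\{i\}$. Since $P_i$ depends only on $\MW_i$ (Assumption \ref{assu:mws}(i)) and $Y_{iz}$ depends only on $\MW_z$, the only surviving channel in $\partial G/\partial \MW_{z'}$ is the single term $L_i \pi_{iz'} (\partial h_{iz'}/\partial Y_{z'})(dY_{iz'}/d\MW_{z'})$. This is strictly positive by normality (Assumption \ref{assu:housing_demand}(i)) and the hypothesis $dY_{iz'}/d\MW_{z'} > 0$, delivering $dR_i^*/d\MW_{z'} > 0$. For part (b), $\MW_i$ enters $G$ through \emph{two} channels: through $P_i(\MW_i)$, contributing $L_i \sum_z \pi_{iz} (\partial h_{iz}/\partial P_i)(dP_i/d\MW_i)$, which is strictly negative by Assumption \ref{assu:housing_demand}(ii) and \ref{assu:mws}(i); and, if $i\in\Z(i)$, through $Y_i(\MW_i)$ for the stayers, contributing a nonnegative term analogous to (a). Because these pull the numerator in opposite directions without a uniform bound, the sign is ambiguous. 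Part (c) then follows by writing the total effect as a sum of partials along the ray of the policy change: when $i\notin\Z_0$, every summand is covered by (a) and is strictly positive; when $i\in\Z_0$, the sum contains the ambiguous term from (b), so the total sign cannot be pinned down without further assumptions.

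The main obstacle, and the only place where I expect to have to be careful, is confirming in part (a) that changes in $\MW_{z'}$ for $z' \neq i$ truly shut down every channel except income of commuters going to $z'$: one must rule out any effect through $P_i$ (immediate from Assumption \ref{assu:mws}(i) making $P_i$ a function of $\MW_i$ alone) and through the incomes $Y_{iz}$ for $z\neq z'$ (immediate from the specification $Y_{iz}(\MW_z)$). Once this isolation is explicit, the rest of the proof is mechanical sign arithmetic on the implicit-function formula.
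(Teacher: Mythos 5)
Your proposal is correct and follows essentially the same route as the paper: the paper's proof also totally differentiates the market-clearing condition \eqref{eq:equilibrium} (in log/elasticity form rather than levels), signs the common denominator $\eta_i - \sum_z \pi_{iz}\xi^R_{iz}$ using $\eta_i\ge 0$ and $\xi^R_{iz}<0$, and then isolates the single income channel for $z'\ne i$ versus the two opposing price and income channels for $i$. The only difference is cosmetic — elasticities versus level derivatives — so the sign arithmetic is identical.
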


Proofs are available in Online Appendix \ref{sec:proofs}.

The first part of Proposition \ref{prop:comparative_statics} shows that,
if at least some low-wage worker commutes to a ZIP code $z'$ where the MW 
increased  (so that $\frac{d Y_{iz'}}{d \MW_{z'}}>0$),
then the MW hike will tend to increase rents.
This follows from an increase in housing demand in $i$ due to the increase in
income of workers who commute to $z'$.
The second part of Proposition \ref{prop:comparative_statics} establishes that 
decreasing rents may follow if the minimum wage also increases in ZIP code $i$.
This is because the increase in $i$ lowers housing demand by a substitution
effect, so that the overall effect on rents is ambiguous.
Consequently, the sign of the overall effect of the policy in $i$ is not 
determined a priori.

As apparent from the proof of Proposition \ref{prop:comparative_statics}, 
the effect of the MW on rents at workplaces depends on the elasticities of 
per-capita housing demand to incomes
$\xi^Y_{iz} = \frac{d h_{iz}}{d Y_z} \frac{Y_z}{\sum_z \pi_{iz} h_{iz}}$ and
on the elasticities of income to minimum wages
$\epsilon_{iz}^Y = \frac{d Y_z}{d \MW_z} \frac{\MW_z}{Y_z}$.
These $(i,z)$-specific terms weigh the change in MW levels at workplaces,
and their sum over $z$ impacts rents.
The following proposition establishes conditions under which we can reduce the 
dimensionality of the rent gradient.

\begin{prop}[Representation]\label{prop:representation}
    Assume that for all ZIP codes $z\in\Z(i)$ we have
    (a) homogeneous elasticity of per-capita housing demand to incomes,
    $\xi^Y_{iz}=\xi^Y_{i}$, and
    (b) homogeneous elasticity of income to minimum wages,
    $\epsilon_{iz}^Y=\epsilon_i^Y$.
    Then, we can write
    \begin{equation} \label{eq:theory_representation}
        d r_i = \beta_i  d \mw^{\wkp}_i
              + \gamma_i d \mw^{\res}_i
    \end{equation}
    where 
    $r_{i} = \ln R_i$,
    $\mw_{i}^{\wkp} = \sum_{z\in\Z(i)} \pi_{iz} \ln \MW_z$ 
    is ZIP code $i$'s \textbf{workplace MW}, 
    $\mw^{\res}_i = \ln \MW_i$ 
    is ZIP code $i$'s \textbf{residence MW}, and 
    $\beta_i > 0$ and $\gamma_i < 0$ are parameters.
\end{prop}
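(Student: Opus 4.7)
The plan is to totally differentiate the equilibrium condition in \eqref{eq:equilibrium} with respect to the vector of minimum wages and $R_i$, then use the two homogeneity assumptions to collapse the resulting sum of log-MW changes into the single workplace-MW statistic $\sum_{z} \pi_{iz}\,d\ln\MW_z$.

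First I would differentiate both sides of \eqref{eq:equilibrium}, recalling that $P_i$ depends only on $\MW_i$ while $Y_z$ depends only on $\MW_z$. This yields
\[
L_i \sum_{z\in\Z(i)} \pi_{iz} \Bigl[\tfrac{\partial h_{iz}}{\partial R_i}\,dR_i + \tfrac{\partial h_{iz}}{\partial P_i}\,\tfrac{dP_i}{d\MW_i}\,d\MW_i + \tfrac{\partial h_{iz}}{\partial Y_z}\,\tfrac{dY_z}{d\MW_z}\,d\MW_z\Bigr] \;=\; S_i'(R_i)\,dR_i .
\]
Isolating $dR_i$ on the left gives an expression of the form $dR_i \cdot D_i = N_i$, where $D_i \equiv S_i'(R_i)-L_i\sum_z \pi_{iz}\,\partial h_{iz}/\partial R_i > 0$ by Assumption \ref{assu:housing_demand}(i) (together with the sign of the supply slope and standard reasoning that gives $\partial h_{iz}/\partial R_i<0$, as noted after Assumption \ref{assu:housing_demand}). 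This guarantees that $\beta_i$ and $\gamma_i$ are well defined once we divide through by $R_i$ to turn $dR_i$ into $dr_i = d\ln R_i$.

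Next I would convert the right-hand side terms into elasticity form. Writing $d\MW_z = \MW_z\,d\ln\MW_z$ and using $\epsilon^Y_{iz} = \tfrac{dY_z}{d\MW_z}\tfrac{\MW_z}{Y_z}$ together with the given definition of $\xi^Y_{iz}$, the income channel becomes
\[
L_i \sum_{z} \pi_{iz}\,\tfrac{\partial h_{iz}}{\partial Y_z}\,\tfrac{dY_z}{d\MW_z}\,d\MW_z \;=\; L_i \Bigl(\sum_z \pi_{iz} h_{iz}\Bigr) \sum_z \pi_{iz}\,\xi^Y_{iz}\,\epsilon^Y_{iz}\,d\ln\MW_z.
\]
This is the step where the homogeneity assumptions do the real work: with $\xi^Y_{iz}=\xi^Y_i$ and $\epsilon^Y_{iz}=\epsilon^Y_i$ uniform across $z$, the product $\xi^Y_i\epsilon^Y_i$ pulls outside the sum, leaving exactly $\sum_z \pi_{iz}\,d\ln\MW_z = d\mw^{\wkp}_i$. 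The non-tradable term depends only on $\MW_i$, so after multiplying and dividing by $\MW_i$ it contributes a multiple of $d\ln\MW_i = d\mw^{\res}_i$ with no sum to collapse. Putting these pieces together yields \eqref{eq:theory_representation} with explicit expressions for $\beta_i$ and $\gamma_i$ in terms of $D_i$, $R_i$, $L_i$, $\xi^Y_i$, $\epsilon^Y_i$, and the non-tradable elasticities.

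Finally I would read off the signs. Since $D_i>0$, $\xi^Y_i>0$ (normal good), and $\epsilon^Y_i>0$ (Assumption \ref{assu:mws}(ii) combined with homogeneity forcing the common value to be strictly positive), $\beta_i$ inherits a positive sign. For $\gamma_i$, the residence-MW channel enters only through $P_i$; Assumption \ref{assu:mws}(i) gives $dP_i/d\MW_i>0$ and Assumption \ref{assu:housing_demand}(ii) gives $\partial h_{iz}/\partial P_i<0$, so $\gamma_i<0$. The main obstacle I anticipate is bookkeeping rather than a conceptual step: one has to be careful that $\MW_i$ enters both as the argument of $P_i$ and, when $i\in\Z(i)$, as a component of $\mw^{\wkp}_i$ with weight $\pi_{ii}$, so the representation should be interpreted as a decomposition of the \emph{partial} effect of $\MW_i$ through the residence channel (captured by $\gamma_i$) separately from its workplace-channel contribution (already absorbed into $\beta_i\,d\mw^{\wkp}_i$).
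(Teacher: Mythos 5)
Your proposal is correct and follows essentially the same route as the paper: the paper's proof also totally differentiates the market-clearing condition \eqref{eq:equilibrium}, rewrites it in elasticity form (its equation for $d\ln R_i$ in the proof of Proposition \ref{prop:comparative_statics}), and then uses the homogeneity assumptions to pull $\xi^Y_i\epsilon^Y_i$ out of the sum so that only $\sum_z \pi_{iz}\,d\ln\MW_z$ remains, with $\beta_i$ and $\gamma_i$ signed exactly as you argue. Your added remarks — that homogeneity plus Assumption \ref{assu:mws}(ii) forces the common $\epsilon^Y_i$ to be strictly positive, and that $\MW_i$'s income channel is absorbed into the workplace term with weight $\pi_{ii}$ while $\gamma_i$ captures only the non-tradable price channel — are correct clarifications of details the paper leaves implicit.
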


Proposition \ref{prop:representation} shows that, under a homogeneity assumption
on the elasticities of per-capita housing demand to income and 
of income to the MW,%
\footnote{The assumptions stated in Proposition \ref{prop:representation} are 
actually stronger than needed.
It is enough to have that the product $\xi^Y_{iz} \epsilon_{iz}^Y$ does not vary 
by $z$.}
the change in rents following a small change in the profile of MWs can be 
expressed as a function of two MW-based measures:
one summarizing the effect of MW changes in workplaces $z\in\Z(i)$,
and another one summarizing the effect of the MW in the same ZIP code $i$.
This motivates our empirical strategy, where we regress log rents on the 
empirical counterparts of these measures.

\subsection{Summary and discussion}
\label{sec:model_summary}

Under the stated assumptions, Proposition \ref{prop:comparative_statics} shows 
that the effect of a MW increase on rents depends on whether it affects the 
ZIP code via the workplace or the residence.
ZIP codes exposed via their workplace only will tend to experience an 
increase in rents.
ZIP codes exposed to both via their workplace and residence will tend to 
experience smaller increases, or even decreases, in rents.
Controlling for the residence MW is thus important to characterize the spatial 
effects of MW policies.
Proposition \ref{prop:representation} provides guidance for the empirical
analysis of spillover effects of the MW by formally justifying the 
workplace MW, our summary measure of the changes in the MW at workplaces.

In this subsection we discuss the plausibility of the assumptions that yield 
these results and the consequences of relaxing them.

\paragraph{The mechanics of rent adjustments.}

The model is static, in the sense that everyone chooses their housing demand 
simultaneously.
Once a MW policy is enacted, a new equilibrium is reached where rents adjust
to the new demand levels to clear the market.
This adjustment takes place because workers can in principle move within ZIP 
codes, and even across ZIP codes, in search of housing that satisfies their 
demand, as long as the commuting shares remain unchanged.
In fact, \textcite{AgarwalEtAl2022} finds an increase in the probability of 
moving after a MW increase, suggesting changes in housing demand.
Online Appendix \ref{sec:dyn_theory_model} discusses an extension of the model
with discrete time periods in which the process of workers moving within the 
ZIP code to renew expiring rental contracts is modelled explicitly.
Therefore, we see that rents adjust because workers can demand more or less 
housing in response to the MW change.%
\footnote{This is a supply and demand story. 
    Another story that would yield increasing rents as response to workplace MW
    hikes is one where landlords and tenants bargain over rents, 
    and the MW improves the outside options of tenants.}

\paragraph{Homogeneity assumptions and the workplace MW.}

How likely are the assumptions that yield Proposition \ref{prop:representation}
to hold?
The assumption that the elasticity of income to the MW is constant will fail if 
the income of some $(i,z)$ groups is more sensitive to the MW than others.
This would be the case if, for example, the share of low-wage workers within 
each $\pi_{iz}$ varies strongly by workplace.
The assumption that the elasticity of housing demand to income is constant 
will hold trivially for preferences
$h_{iz} = g\left(R_i, P_i\right) Y_i$ for some $g\left(\cdot\right)$, such as 
those in Cobb-Douglas or Constant Elasticity of Substitution utility functions.
However, one would expect the elasticity of $(i,z)$ groups with many low-wage 
workers to be larger, suggesting that this type of preferences may not be 
appropriate.

We thus see that the homogeneity assumptions are strong and will likely not hold 
exactly.
However, we expect our empirical model based on Proposition 
\ref{prop:representation} to offer a reasonable approximation to study the 
spillover effects of MW policies on the housing market.
In fact, unless the heterogeneity in $\{\xi_{iz}^Y\epsilon_{iz}^Y\}_{z\in\Z(i)}$ 
has a strongly asymmetric distribution across workplace locations, we expect to 
correctly capture the average contribution of the workplace MW on rents.
In other words, the value of $\beta_i  d \mw^{\wkp}_i$ is likely to be close 
to the value of the elasticity-weighted changes in workplace MW levels that, 
according to the model, determine rents.%
\footnote{More precisely, say that 
    $\xi^Y_{iz}\epsilon_{iz}^Y = \overline{\xi\epsilon}_i + \nu_{iz}$ where 
    $\nu_{iz}$ has a mean of zero.
    In that case, a similar logic than the one in the proof of Proposition 
    \ref{prop:representation} will result in the following expression for 
    rents changes:
    $$
        d r_i = \gamma_i d \mw^{\res}_i
            + \frac{\overline{\xi\epsilon}_i}
                    {\eta_{i} - \sum_z \pi_{iz} \xi_{iz}^R} \sum_z d\ln \MW_z
            + \frac{1}
                    {\eta_{i} - \sum_z \pi_{iz} \xi_{iz}^R} \sum_z \nu_{iz} d\ln \MW_z .
    $$
    The second term on the right-hand side is equivalent to $\beta_i \mw_{i}^{\wkp}$
    in Proposition \ref{prop:representation}.
    The third term reflects the heterogeneity.
    If $\nu_{iz}$ has a symmetric distribution, and $d\ln \MW_z$ is the same across 
    workplaces (because it originates from a single jurisdiction), then this third 
    term will equal zero.}
Moreover, in our empirical exercises we allow for heterogeneity in elasticities
based on observable characteristics of workers, 
such as the share of MW workers residing in each location, empirically 
exploring this channel.

\paragraph{Representation under changing commuting shares.} 

Online Appendix \ref{sec:model_endogenous_shares} discusses the consequences
of relaxing the assumption of fixed commuting shares.
In particular, we assume that the commuting shares are declining in the MW level
at the workplace \parencite[as found by, e.g.,][]{PerezPerez2021}.
In this case, Proposition \ref{prop:representation_endog_shares} in the online
appendix shows that MW hikes at workplace locations will affect rents through 
two channels:
a positive channel via increases in housing demand, and
a negative channel via decreases in commuting shares.
Therefore, potential declines in commuting shares at the month of the MW change 
will tend to attenuate the positive effect of the workplace MW on rents, biasing 
our estimates towards not finding an effect.
Alternatively, if one thinks that commuting shares adjust more slowly than
posted rental prices, then an increase in the workplace MW will tend to
lower rents after the MW change.
We do not find evidence of this in our estimates, suggesting that the assumption
of fixed commuting shares when focusing on monthly changes in rents is
reasonable.

\section{Context and Data}\label{sec:data}

We begin the section by describing the construction of a ZIP code by month panel
of MW levels in the US.
We use our panel to describe trends in MW policies in the 2010s.
Later, we discuss the relationship between income and housing consumption at the
household level.
The data suggests that rents are likely to respond to MW changes.
We also explore how housing expenditure varies across ZIP codes.
Finally, we document the construction of our analysis sample and discuss
its strengths and limitations.

\subsection{Minimum Wage Policies in the 2010s}
\label{sec:data_mw_panel}

We collected data on federal-, state-, county-, and city-level statutory MW 
levels from \textcite{VaghulZipperer2016}.
We extended their data, available up to 2016, using data from 
\textcite{BerkeleyLaborCenter} and from official government offices for the 
years 2016--2020.%
\footnote{Some states and cities issue different MW levels for small businesses
    (usually identified by having less than 25 employees).
    In these cases, we select the general MW level as the prevalent one.
    In addition, there may be different (lower) MW levels for tipped employees.
    We do not account for them because employers are typically required to make 
    up for the difference between the tipped MW plus tips and the actual MW.}
%
%
Most ZIP codes are contained within a jurisdiction, and for them the statutory 
MW is simply the maximum of the federal, state, and local levels.
Some ZIP codes cross jurisdictions, and so are bound by multiple statutory MW 
levels.
In these cases we assign a weighted average of the statutory MW levels in its
constituent census blocks, exploiting a novel correspondence table between 
blocks and ZIP codes detailed in Online Appendix \ref{sec:blocks_to_uspszip}, 
where weights correspond to the number of housing units.
The result is a ZIP code-month panel of statutory MW levels in the US between
January 2010 and June 2020.
More details on the construction of the panel can be found in Online Appendix 
\ref{sec:assigning_mw_levels}.

Online Appendix Figure \ref{fig:mw_policies} shows the different levels of 
binding MW policies over time in our data.
Panel A focuses on state-level MW policies.
There are $\stateBindingMW$ states with MW policies in 2010--2020, all of 
which started prior to January 2010.
Panel B shows sub-state MW policies.
In total, there are $\localBindingMW$ counties and cities with some binding MW
policy in this period.
The number of new local jurisdictions instituting a MW policy increases strongly 
after 2013 and declines after 2018.
Overall, we observe strong variations in MW levels across jurisdictions.

Figure \ref{fig:map_mw_perc_changes} maps the percentage change in 
the statutory MW level from January 2010 to June 2020 in each ZIP code.
We observe a great deal of spatial heterogeneity in MW levels within the US.
Importantly, many metropolitan areas across and within state borders have 
differential MW changes, which will be central to distinguishing the effect 
of the two MW-based measures proposed in Section \ref{sec:model}.
We describe the construction of these measures later in this section.

\subsection{Households, Income, and Housing}
\label{sec:data_income_housing}

We compare individuals and households within metropolitan areas using data 
from the 2011 and 2013 waves of the American Housing Survey \parencite{ahs2020}.
Figure \ref{fig:ahs_pr_renters} shows that low-income households are much
more likely to rent.
While only $\TopDecPrRent$ percent of households in the top income quintile 
are renters, around $\BottomDecPrRent$ percent of them are when focusing on 
the bottom one.
Online Appendix Figure \ref{fig:ahs_hhead} shows that, while low income 
individuals are less likely to be household heads, many of them are.
The average probability for the bottom three income deciles is 50 percent.
Online Appendix Figure \ref{fig:ahs_rent_sqft} shows that, among households 
that rent, rents per square foot are surprisingly constant across household 
income levels.
These facts suggest that the MW is likely to affect household income, at least 
for lower income households, and that rents per square foot can plausibly
respond to MW changes.
Online Appendix Figure \ref{fig:ahs_unit_types} shows the type of building 
households live in by household income decile.
Low-income households are more likely to live in buildings with more units,
though they are spread across all building types.


We explore variations over space in housing expenditure.
To do so, we collected Individual Income Tax Statistics aggregated at the 
ZIP code level from the IRS \parencite{IRS},
and Small Area Fair Market Rents (SAFMRs hereafter) data from the HUD 
\parencite{hudSAFMR}.%
\footnote{SAFMRs data are constructed by the HUD as an extension of the
    Fair Market Rents (FMRs) data using, for each year, ZIP code-level information
    from previous years' American Community Survey 
    \parencite[][, p.\ 35]{SafmrReport2018}.
    SAFMRs are an estimate of the 40th percentile of the rents distribution
    based on constant housing quality \parencite[][, p.\ 1]{SafmrReport2018}.}
For each ZIP code in 2018, we constructed a housing expenditure share dividing 
the 2 bedroom SAFMR rental value from the HUD by the average monthly wage per 
household from the IRS.%
\footnote{We impute a small share of missing values using a regression model 
	where the ZIP code-level covariates include data from LODES and the US 
	Census.
	See Online Appendix \ref{sec:measure_housing_expenditure} for details.}%
\textsuperscript{,}%
\footnote{This computation will be a good approximation for the housing 
	expenditure share insofar total housing expenditure and total wage income 
	are roughly proportional to their averages under the same constant of 
	proportionality.
	This computation also assumes away differences in the number of bedrooms 
	across ZIP codes.}
Online Appendix Figure \ref{fig:map_hous_exp_chicago} maps our estimates for the 
Chicago CBSA.
There is considerable variation in housing expenditure over space, with poorer
areas generally spending a higher share of their income on housing.

To get a sense of the spatial distribution of minimum wage earners we construct 
a proxy variable using the number of workers across income bins in the 5-year 
2010-2014 American Community Survey \parencite[ACS;][]{CensusACS}.
See details in Online Appendix \ref{sec:assigning_mw_levels}.
Our variable for the share of MW workers is negatively correlated with median 
household income from the ACS (corr.\ $=\corrShWkrMedInc$) and 
positively correlated with our estimate of the housing expenditure share 
(corr.\ $=\corrShWkrS$).
This latter correlation also suggests that the MW is likely to affect rents.

\subsection{Estimation Data and Samples}

\subsubsection{Rents Data}
\label{sec:data_rents}

Zillow is the leading online real estate platform in the US, hosting more than 
170 million unique monthly users in 2019 \parencite{ZillowFacts}.
Zillow provides the median rental and sales price among units listed on the 
platform for different house types and at different geographic and time 
aggregation levels \parencite{ZillowData}.%
\footnote{As of the release of this article, the data are no longer available 
    for download.
    See \textcite{ZillowDataArchive} for a snapshot of the website as of 
    February 2020, the last month the data were available.}
We collected the ZIP code level data, available from February 2010 
to December 2019.
There is variation in the entry of a ZIP code to the data, and locations with a 
small number of listings are omitted. %

Our main analyses use the median rental price per square foot among housing 
units listed in the category Single Family houses, Condominium and Cooperative 
units (SFCC).
This is the most populated time series, as it includes the most common US 
rental house types \parencite{Fernald2020}.
We focus on rents \textit{per square foot} to account for systematic 
differences in housing size.
Online Appendix Figure \ref{fig:trend_zillow_safmr} shows that this series 
follows a similar trend over time when compared to SAFMR.
It is important to note that these data reflect rents of newly available 
units, for which new information is likely to be quickly incorporated into 
prices \parencite{AmbroseEtAl2015}.
As a result, we expect them to react quickly to economic shocks, such as 
changes in the MW.
On the other hand, rents among the universe of leased units should react 
more slowly, as they are only updated when the lease is renewed.
This is the pattern of results in \textcite{AgarwalEtAl2022} who use
data from contract rents.

We use Zillow's Observed Rental Index (ZORI) for a robustness check,
computed following the repeat rent index methodology used by
\textcite{AmbroseEtAl2015}, among others.
To compute the index, \textcite{ZillowZORI} uses all units with more than one 
transaction in a ZIP code and estimates a weighted regression of the 
log of the change in rents between two months on year-month indicators, 
upweighting observations that correspond to housing types that are 
underrepresented in the Zillow sample relative to census data.
The published index is a smoothed version of these coefficients, achieved 
through a three-month moving average that uses data from previous months. 
To account for this smoothing, we shift our MW measures when using the index as 
outcome in our regression models.

The Zillow data have several limitations.
First, Zillow's market penetration dictates the sample of ZIP codes available.
Online Appendix Figure \ref{fig:map_zillow_sample} shows that the sample of ZIP 
codes with SFCC rents data coincides with areas of high population density.
Second, we only observe the median rental value.
No data on the distribution of rents, nor the number of units listed for rent, 
are available.
Finally, we observe posted rents rather than contract rents.
We did not find any information on how correlated posted rents and contract 
rents are, so we decided to ask this as a question in the online platform Quora.
Online Appendix \ref{sec:posted_rents} shows a selection of quotes from the 
replies, which overwhelmingly suggest that contract rents generally do not 
differ from posted rents.
Some answers also suggest that rents of long-tenured landlords may not
reflect current market conditions.

\subsubsection{The residence and workplace minimum wage measures}
\label{sec:data_mw_measures}

Using the panel described in Section \ref{sec:data_mw_panel} at hand, computing 
the residence MW is straightforward.
We define it as $\mw^{\res}_{it} = \ln \MW_{it}$.

To construct the workplace MW we need commuting data, which we obtain from the 
Longitudinal Employer-Household Dynamics Origin-Destination Employment Statistics 
\parencite[LODES;][]{CensusLODES} for the years 2009 through 2018.
We collected the datasets for ``All Jobs.''
The raw data are aggregated at the census block level. 
We further aggregate it to ZIP codes using the original correspondence between 
census blocks and USPS ZIP codes described in Online Appendix 
\ref{sec:blocks_to_uspszip}.
This results in residence-workplace matrices that, for each ZIP code and year, 
indicate the number of jobs of residents in every other ZIP code.

We use the 2017 residence-workplace matrix to build exposure weights.
Let $\Z(i)$ be the set of ZIP codes in which $i$'s residents work 
(including $i$).
We construct the set of weights $\{\pi_{iz}\}_{z\in\Z(i)}$ as 
$ \pi_{iz} = N_{iz}/{N_i} , $
where 
$N_{iz}$ is the number of jobs with residence in $i$ and workplace in $z$, 
and $N_i$ is the total number of jobs originating in $i$.
The workplace MW measure is defined as
\begin{equation*}\label{eq:mw_wkp_def}
    \mw^{\wkp}_{it} = \sum_{z\in\Z(i)} \pi_{iz} \ln \MW_{zt} \ .
\end{equation*}
The workplace MW has a shift-share structure.
Our strategy, which exploits differential exposure to common shocks for 
identification, is most related to recent work in this area by 
\textcite{GoldsmithpinkhamEtAl2020}.

While our baseline uses commuting shares from 2017,
for robustness we present estimates in which the workplace MW measure
is constructed using alternative sets of weights.
In particular, we use different years and alternative job categories,
such as jobs for young or low-income workers.%
\footnote{The LODES data reports origin-destination matrices for the number of 
    workers 29 years old and younger, and the number of workers earning less 
    than \$1,251 per month.
The resulting workplace MW measures with any set of weights are highly correlated 
among each other (corr.\ $>0.99$ for every pair).}

Figure \ref{fig:map_mw_chicago_jul2019}, already discussed in the introduction,
illustrates the difference in the MW-based measures mapping their change in the 
Chicago CBSA on July 2019.
For completeness, Online Appendix Figure \ref{fig:map_rents_chicago_jul2019} 
shows the changes in our main median rents variable around the same date.

\subsubsection{Other data sources}\label{sec:data_other}

While our MW assignment recognizes that ZIP codes cross census geographies, 
we assign to each ZIP code a unique geography based on where the largest 
share of its houses fall.
We do this for descriptive purposes and also to use geographic indicators  
in our estimates.
Additionally, we collect ZIP code demographics from the ACS 
\parencite{CensusACS} and the 2010 US Census \parencite{CensusDecennial}.
We collect these data at the block or tract levels, and assign them to ZIP codes
using the correspondence table described in Online Appendix 
\ref{sec:blocks_to_uspszip}.

To proxy for local economic activity we collect data from the 
Quarterly Census of Employment and Wages \parencite[QCEW;][]{QCEW} 
at the county-quarter and county-month levels for several industrial divisions 
and from 2010 to 2019.%
\footnote{The QCEW covers the following industrial aggregates: 
    ``Natural resources and mining,'' ``Construction,'' ``Manufacturing,'' 
    ``Trade, transportation, and utilities,'' ``Information,'' 
    ``Financial activities'' (including insurance and real state), 
    ``Professional and business services,'' ``Education and health services,'' 
    ``Leisure and hospitality,'' ``Other services,'' ``Public Administration,''
    and ``Unclassified.''
    We observe, for each county-quarter-industry cell, the number of 
    establishments and the average weekly wage, and for each 
    county-month-industry cell, the level of employment.}
We use these data as controls for the state of the local economy in our 
regression models.

\subsubsection{Estimation samples}\label{sec:data_final_panel}

We put together an unbalanced monthly panel of ZIP codes available in Zillow in 
the SFCC category from February 2010 to December 2019.
This panel contains $\ZIPMWeventsUnbal$ MW changes at the ZIP code level, 
which arise from $\StateMWeventsUnbal$ state and $\CityCountyMWeventsUnbal$ 
county and city changes.
Online Appendix Figure \ref{fig:mw_changes_dist_zillow} shows the distribution 
of positive MW increases among ZIP codes in the Zillow data.
To prevent our estimates from being affected by changes in sample composition,
we construct a ``baseline'' panel keeping ZIP codes with valid rents data 
starting on January 2015.
The resulting fully-balanced panel contains $\ZIPMWeventsBase$ MW changes at 
the ZIP code level.%
\footnote{To avoid losing observations in models with leads and lags we include 
    six leads and lags of the MW measures, so the dataset actually runs from 
    July 2014 to June 2020.}

Table \ref{tab:stats_zip_samples} compares the sample of ZIP codes in the Zillow
data to the population of ZIP codes along sociodemographic dimensions.
The first and second columns report data for the universe of ZIP codes and 
for the set of urban ZIP codes, respectively.
The third column shows the set of ZIP codes in the Zillow data with any 
non-missing value of rents per square foot in the SFCC category.
Finally, the fourth column shows descriptive statistics for our estimation 
sample, which we call the ``baseline'' sample.
While our baseline sample contains only 11.8 percent of urban ZIP codes, it 
covers 25.0 percent of their population and 25.8 percent of their households.
With respect to demographics, ZIP codes in the baseline sample tend to be 
more populated, richer, with a higher share of Black and Hispanic inhabitants, 
and with a higher share of renter households than both the average ZIP code 
and the average urban ZIP code.
This is so because Zillow is present in large urban regions, but 
it does not usually operate in smaller urban or rural areas.
In an attempt to capture the treatment effect for the average urban ZIP code 
we conduct an exercise where we re-weight our sample to match the average 
of a handful of characteristics of those.

Finally, Online Appendix Table \ref{tab:stats_est_panel} shows statistics 
of our baseline panel.
The distribution of the residence and workplace MW measures is, as expected,
quite similar.
We also show median rents in Zillow in the SFCC category.
The average monthly median rent is \$1,757.9 and \$1.32 per square foot, 
although these variables show a great deal of variation.
Finally, we show average weekly wage, employment, and establishment count 
for the QCEW industries we use as controls in our models.

\section{Empirical Strategy}\label{sec:empirical_strategy}

In this section we discuss our empirical strategy.
We start with an intuitive presentation of our identification argument, which
is formalized in an appendix.
Next, we specialize our discussion under the functional form suggested by
the model in Section \ref{sec:model}. 
We also discuss alternative estimation strategies, concerns related to
the sample of ZIP codes we use, and heterogeneity of estimated effects.

\subsection{Intuitive Identification Argument}

Our data consist of rents, the residence and workplace MW measures, and 
economic controls.
We can learn the effect of the workplace MW from the slope of the relationship 
between the workplace MW and rents conditioning to places with a similar change 
in the residence MW.
We need to condition on the residence MW to remove confounding variation 
as it may affect locations through other channels, 
such as changes in prices of non-tradable consumption.
Likewise, a similar argument suggests that identifying the effect of the 
residence MW requires controlling for the workplace MW. 

For these slopes to correspond to causal effects, we need to make 
two assumptions.
The first one is a form of \textit{parallel trends}: among ZIP codes with 
the same residence MW, ZIP codes with higher and lower workplace MW levels would 
have had parallel trends in rents if not for the change in the workplace MW.
The second one is \textit{no selection on gains}: ZIP codes that receive
different levels of the workplace MW must experience a similar treatment effect
on average, conditional again on the residence MW.
If these assumptions hold, the (conditional) slope of the relationship between
the workplace MW and rents is actually the causal effect. 
We similarly need these assumptions to hold for the residence MW if we hope 
to give a causal interpretation to its coefficient.
Online Appendix \ref{sec:potential_outcomes} formalizes these assumptions in a 
potential outcomes framework following \textcite{CallawayEtAl2021}.
We discuss the plausibility of these assumptions later in this section.

\subsection{Parametric Model}

Consider the two-way fixed effects model relating rents and the MW measures
given by
\begin{equation} \label{eq:func_form}
    r_{it} = \alpha_i + \tilde{\delta}_t 
           + \beta \mw^{\wkp}_{it} + \gamma \mw^{\res}_{it} 
           + \mathbf{X}^{'}_{it}\eta
           + \upsilon_{it} ,
\end{equation}    
where
$i$ and $t$ index ZIP codes and time periods (months), respectively,
$r_{it}$ represents the log of rents per square foot,
$\mw^{\wkp}_{it} = \sum_{z\in\Z(i)} \pi_{iz}\ln \MW_{zt}$ is the ZIP code's 
workplace MW,
$\mw^{\res}_{it} = \ln \MW_{it}$ is the ZIP code's residence MW,
$\alpha_i$ and $\tilde{\delta}_t$ are fixed effects, and 
$\mathbf{X}_{it}$ is a vector of time-varying controls.
Time runs from January 2015 $\left(\underline{T}\right)$ 
to December 2019 $\left(\overline{T}\right)$.
The parameters of interest are $\beta$ and $\gamma$ which, 
following the model in Section \ref{sec:model}, 
we interpret as the elasticity of rents to the residence MW and the workplace MW, 
respectively.

By taking first differences in equation \eqref{eq:func_form} we obtain
\begin{equation}\label{eq:fd}
    \Delta r_{it} = \delta_t
                  + \gamma \Delta \mw^{\res}_{it} + \beta \Delta \mw^{\wkp}_{it}
                  + \Delta \mathbf{X}^{'}_{it}\eta
                  + \Delta \upsilon_{it} ,
\end{equation}
where $\delta_t = \tilde{\delta}_t - \tilde{\delta}_{t-1}$.
We estimate the model in first differences because we expect unobserved shocks
to rental prices to be serially autocorrelated over time, making the levels
model less efficient.
Online Appendix Table \ref{tab:autocorrelation} shows strong evidence of serial 
auto-correlation in the error term of the model in levels.
While estimated coefficients are similar in levels and in first differences, 
standard errors are seven to nine times larger in the former.

A standard requirement for a linear model like \eqref{eq:fd} to be
estimable is a rank condition, which implies that both MW measures must have 
independent variation, conditional on the controls.
For instance, if there were a single national minimum wage level or if everybody 
lived and worked in the same location, then we would have
$\Delta \mw^{\res}_{it} = \Delta \mw^{\wkp}_{it}$ for all $(i,t)$.
If so, $\gamma$ and $\beta$ could not be separately identified.
We check in the data that the rank condition is satisfied.

The main results of the paper are obtained under the model in \eqref{eq:fd}. 
In order to compare with the literature we also estimate versions of the 
model that exclude either one of the MW measures.

\subsection{Validity of Identification Assumptions}

The model in \eqref{eq:func_form} imposes a linear functional form.
This property rules out selection on gains, since then ZIP codes receiving a 
particular level of the MW measures will experience the same (constant) effect 
than ZIP codes that receive a different level.
This is one of the assumptions required for identification according to 
Online Appendix \ref{sec:potential_outcomes}.
We view this as a reasonable assumption.
For it to not hold, workers would need to anticipate not only future MW policies 
but also how future rental markets would be affected by them given the commuting 
structure, and select their residence so that rents react differently to the 
MW in different ZIP codes with similar levels of the MW measures.
We show in the results section that the (conditional) slope of log rents with 
respect to each of the MW measures appears linear, suggesting that the 
assumption of no selection on gains is plausible.

For estimates of $\beta$ and $\gamma$ from \eqref{eq:func_form} to have a causal 
interpretation we need another assumption: the error term $\Delta\upsilon_{it}$ 
must be \textit{strictly exogenous} with respect to the MW measures, and in 
particular with respect to the workplace MW.
This is in the spirit of parallel trends, the second assumption required for
identification in Online Appendix \ref{sec:potential_outcomes}.
This assumption implies that rents prior to a change in the workplace MW must
evolve in parallel.
We test for pre-trends adding leads and lags of the workplace MW in 
\eqref{eq:fd},%
\footnote{Specifically, we estimate
    \begin{equation*}
        \Delta r_{it} = \delta_t
                    + \gamma \Delta \mw^{\res}_{it} 
                    + \sum_{k=-s}^{s} \beta_k \Delta \mw^{\wkp}_{ik}
                    + \Delta \mathbf{X}^{'}_{it}\eta
                    + \Delta \upsilon_{it} ,
    \end{equation*}
    where $s=6$.
    Our results are very similar for different values of the window $s$.}
though we also experiment with adding leads and lags of the residence MW.
We only shift the workplace MW because estimating its effect is our focus, 
seeing the residence MW as a key control.
In fact, Online Appendix \ref{sec:potential_outcomes} suggests that we 
only need to condition on one of the MW measures for parallel trends of the 
other measure to hold.
Under the assumption of no anticipatory effects in the housing market, we 
interpret the absence of pre-trends as evidence against the presence 
of unobserved economic shocks driving our results.
Given the high frequency of our data and the focus on short windows around 
MW changes, the assumption of no anticipatory effects seems plausible.%
\footnote{We can also interpret the absence of pre-trends as a test for 
    anticipatory effects if we are willing to assume that the controls embedded 
    in $\mathbf{X}_{it}$ capture all relevant unobserved heterogeneity arising 
    from local business cycles.
    While we find the interpretation given in the text more palatable, the data 
    are consistent with both.}

Another implication of the strict exogenity assumption is that it allows for 
arbitrary correlation between $\alpha_i$ and both MW variables.
This means that our empirical strategy is robust to the fact that districts 
with more expensive housing tend to vote for MW policies.

We worry that unobserved shocks, such as those caused by local business cycles, 
may systematically affect both rent changes and MW changes, violating the 
strict exogeneity assumption.
To account for common trends in the housing market we include time-period 
fixed effects $\delta_t$, which in some specifications are allowed to vary by 
jurisdiction.
To control for variation in local labor markets trends we include economic 
controls from the QCEW in the vector $\mathbf{X}_{it}$.
Specifically, we control for average weekly wage and establishment counts at the 
county-quarter level, and for employment counts at the county-month level, 
for the sectors ``Professional and business services,'' ``Information,'' and 
``Financial activities.''%
\footnote{We assume that these sectors are not affected by the MW.
    In fact, according to the \textcite[][Table 5]{MinWorkersReportBLS}, in 
    2019 the percent of workers paid an hourly rate at or below the federal MW 
    in those industries was 0.8, 1.5, and 0.2, respectively.
    In comparison, 9.5 percent of workers in ``Leisure and hospitality'' were 
    paid an hourly rate at or below the federal MW.}
We also try models where we control for ZIP code-specific linear
trends, which should account for time-varying heterogeneity at the ZIP 
code-level that follows a linear pattern.

A second worry is that changes in the composition of rentals may drive the 
results.
For instance, if on the same month of the MW change more expensive listings go
to the market, then what looks like a rent increase may actually be changes in 
quality.%
\footnote{We thank an anonymous referee for pointing out this concern.}
We note that changes in housing size, which seem to be the key driver 
in price heterogeneity according to Online Appendix Figure 
\ref{fig:ahs_rent_sqft}, are controlled for because we use rents per square foot
as our outcome.
To more directly address this concern we present evidence using Zillow's 
observed rental index (ZORI), which is constructed using rental prices for the 
same housing unit in different moments in time.
Given that the index is averaged using three lags, a regression analysis
that relates period-$t$ MWs would be expected to affect the ZORI index
at $t-3$, and its first difference at $t-4$.
To adjust for this, in these models we use the 4th lead of the change in 
the MW measures.

\subsection{Alternative Strategies}\label{sec:alt_emp_strategies}

Recent literature has shown that usual estimators in a difference-in-differences 
setting do not correspond to well-defined average treatment effects when the 
treatment roll-out is staggered and there is treatment-effect heterogeneity 
\parencite{deChaisemartinEtAl2022,RothEtAl2022}.
While our setting does not correspond exactly to the models discussed in this
literature, we worry about the validity of our estimator.
To ease these concerns, in an appendix we construct a ``stacked'' implementation 
of equation \eqref{eq:fd} in which we take six months of data around MW changes 
for ZIP codes in CBSAs where some ZIP codes received a direct MW change and 
some did not, 
and then estimate the model on this restricted sample including event-by-time 
fixed effects.
This strategy limits the comparisons used to compute the coefficients of 
interest to ZIP codes within the same metropolitan area and event.

In a separate exercise we relax the strict exogeneity assumption.
We do so in an appendix as well, where we propose a model that includes 
lagged rents as an additional control.
In such a model, $\beta$ and $\gamma$ have a causal interpretation under a 
weaker \textit{sequential exogeneity} assumption
\parencite{ArellanoBond1991, ArellanoHonore2001}.
This alternative assumption requires innovations to rents to be uncorrelated 
only with past changes in the MW measures, and thus allows for feedback of 
rent shocks onto MW changes in future periods.
We estimate this model using an IV strategy in which the first lag of the 
change in rents is instrumented with the second lag.

\subsection{Heterogeneity and Sample Selection Concerns}\label{sec:emp_start_heterogeneity}

We explore heterogeneity of our results with respect to pre-determined 
variables.
Given the mechanism proposed in Section \ref{sec:model}, we expect the effect 
of the residence MW to be stronger in locations where many workers earn 
close to the MW.
The reason is that the production of non-tradable goods presumably uses more
low-wage work, and thus the increase in the MW would affect prices more.
Similarly, we expect the effect of the workplace MW to be stronger in locations
with lots of MW workers as residents since income would increase more 
strongly there.
We then estimate the following model:
\begin{equation*}\label{eq:fd_heterogeneity}
    \Delta r_{it} = \Xi_t
                  + \tilde\gamma_0 \Delta \mw^{\res}_{it}
                  + \tilde\gamma_1 \iota_i \Delta \mw^{\res}_{it}
                  + \tilde\beta_0 \Delta \mw^{\wkp}_{it}
                  + \tilde\beta_1 \iota_i \Delta \mw^{\wkp}_{it}
                  + \Delta \mathbf{X}^{'}_{it}\tilde\eta
                  + \Delta \tilde\upsilon_{it} ,
\end{equation*}
where $\iota_i$ represents the standardized share of MW workers residing in $i$.
Because we cannot estimate the share of MW workers working in a given location,
we interact both the residence and workplace MW with the estimated share of 
MW residents.
We conduct a similar exercise using median household income and the share of 
public housing units.

As explained in Section \ref{sec:data_final_panel}, 
the model in equation \eqref{eq:fd} relies on a selected sample.
In an alternative estimation exercise we use an unbalanced panel with all 
ZIP codes with Zillow rental data in the SFCC category 
from February 2010 to December 2019, controlling for time period by 
quarterly date of entry fixed effects.
However, even all ZIP codes available in the Zillow data may be 
a selected sample of the set of urban ZIP codes.
To approximate the average treatment effect in urban ZIP codes we follow
\textcite{Hainmueller2012} and estimate our main models re-weighting 
observations to match key moments of the distribution of characteristics of 
those.

\section{Estimation Results}\label{sec:results}

In this section we present our empirical results.
First, we show our baseline estimates and discuss our identifying assumptions
and other robustness checks.
Second, we present results of models that use alternative empirical strategies.
Third, we discuss concerns that arise from the selectivity of our sample of 
ZIP codes and show heterogeneity analyses. 
Finally, we summarize our results and compare them with existing literature.

\subsection{Baseline estimates}
\label{sec:results_main}

\subsubsection{Main results}

Table \ref{tab:static} displays our estimates using the baseline sample 
described in Section \ref{sec:data_final_panel} and the parametric model
in equation \eqref{eq:fd}.
Column (1) shows the results of a regression of the workplace MW on the 
residence MW, economic controls, and time fixed effects.
We observe that a 10 percent increase in the residence MW is associated with an
$\WkpOnResCoeffBaseTen$ percent increase in the workplace MW.
While the measures are strongly correlated, this estimate shows that this 
correlation is far from exact, confirming the presence of independent variation
that allows the inclusion of both MW measures in our models.

Columns (2) and (3) of Table \ref{tab:static} show estimates of models that 
include a single MW measure.
Column (2) uses only the residence MW.
In this model, only locations with a statutory MW change are assumed to 
experience effects, similar to the existing literature.
The elasticity of rents to the MW is estimated to be 
$\OnlyResGammaBase$ ($t=\OnlyResGammaBasetStat$).
Column (3) uses solely the workplace MW.
The coefficient on the MW variable increases slightly to 
$\OnlyWkpBetaBase$ ($t=\OnlyWkpBetaBasetStat$), 
supporting the view that changes in the workplace MW are a better measure of 
the changes in the MW that are relevant for a ZIP code.

Column (4) of Table \ref{tab:static} show estimates of equation \eqref{eq:fd}
including both MW measures.
The coefficient on the workplace MW ($\beta$) increases to $\BothBetaBase$ and 
is statistically significant ($t=\BothBetaBasetStat$).
These results suggest that omitting the residence MW generates a downward
bias on the coefficient of the workplace MW.
Consistent with the theoretical model in Section \ref{sec:model}, the 
coefficient on the residence MW ($\gamma$) now turns negative and equals 
$\BothGammaBase$, although it is not statistically significant 
($t=\BothGammaBasetStat$).
We reject the hypothesis that $\gamma=\beta$ at the 10\% significance level 
($p = \GammaEqBetaBasePval$).
Finally, $\gamma+\beta$ is estimated to be $\BothSumBase$, which is highly 
significant ($t=\BothSumBasetStat$).
Thus, our results imply that a 10 percent increase in both MW measures will 
increase rents by approximately $\BothSumBaseTen$ percent.
However, our results also imply substantial heterogeneity across space.
If only the residence MW increases then rents are expected to decline,
and if only the workplace MW goes up then the rents increase will likely 
be larger.

\subsubsection{Identification assumptions}

A central concern with these results is whether our identifying assumptions are 
likely to hold.
Panel A of Figure \ref{fig:dynamic_workplace} shows estimates of the 
parametric first-differences model including leads and lags of the workplace MW, 
so that the coefficients are 
$\{\{\beta_s\}_{s=-6}^{-1},\beta,\{\beta_s\}_{s=1}^6,\gamma\}$.
We cannot reject the hypothesis that $\beta_{-6}=...=\beta_{-1}=0$ 
($p = \BetaPretrendDynBasePVal$).
Post-event coefficients $\{\beta_s\}_{s=1}^6$ are also estimated to be 
statistically zero.
The only significant estimate is $\beta$ at $\BothWkpDynBetaBase$ 
($t=\BothWkpDynBetaBasetStat$).
The estimate of $\gamma$ is $\BothWkpDynGammaBase$ 
($t=\BothWkpDynGammaBasetStat$).
We now reject the hypothesis of equality of coefficients at the 5\% 
significance level ($p = \GammaEqBetaBaseDynPval$).
Our estimate of $\gamma+\beta$ is $\BothWkpDynSumBase$.
It is significant ($t=\BothWkpDynSumBasetStat$) and almost identical to our 
baseline.
Panel B shows the implied effects in levels of our first-differences models,
assuming that pre- and post-coefficients are zero.
Online Appendix Figure \ref{fig:dynamic_residence} shows that a similar story 
obtains when we add leads and lags of the residence MW only.
We interpret these results as evidence in favor of the parallel trends 
assumption.

Online Appendix Figure \ref{fig:non_parametric} plots the relationship between 
log rents and each of the MW measures for ZIP codes in CBSAs and months 
in which at least one residence MW changed.
Panel A displays the raw data, which shows a positive correlation between log 
rents and both MW measures.
Panel B displays the same relationships after residualizing each variable on 
ZIP code fixed effects and indicators for different values of the other MW 
measure.
We observe a positive slope for the workplace MW, and a negative one for
the residence MW.
This provides evidence in favor of the assumption of no selection on gains, and
also of the linear functional form assumed in equation \eqref{eq:func_form}. 
Furthermore, the slopes in these figures show a similar magnitude to our 
baseline estimates of $\gamma$ and $\beta$.

Online Appendix Figure \ref{fig:map_residuals_chicago_jul2019} illustrates the 
identifying variation we use by mapping the residualized change in workplace MW 
and the residualized change in log rents.%
\footnote{To maximize the number of ZIP codes with valid data on this map we
    use the results of the unbalanced panel discussed in Section 
    \ref{sec:results_heterogeneity}.}
Panel A of Online Appendix Figure \ref{fig:map_residuals_chicago_jul2019}, to be 
contrasted with the left panel of Figure \ref{fig:map_mw_chicago_jul2019}, 
shows that the residualized change in the workplace MW is high outside of Cook 
County, where the statutory MW increased.
For completeness, Panel B of Online Appendix Figure 
\ref{fig:map_residuals_chicago_jul2019} shows residualized rents.

\subsubsection{Zillow's repeat rental index}

Online Appendix Figure \ref{fig:dynamic_zori} shows regression results
using the ZORI index.
Since the index has more missing values than our baseline before 2020,
we use the entire sample of ZIP codes and set the window to 4.%
\footnote{While our baseline analysis in Table \ref{tab:static} uses 80,241 
    observations, the models that are estimated using the ZORI index use 
    only 22,984.}
Panel A controls for CBSA by year-month fixed effects, and shows 
a strong increase in rents following a change in the workplace MW.
We observe that the effects lasts for a few months, which is to be expected
as this variable is computed as an average over 3 months.
The residence MW has a negative coefficient, although it is not statistically 
significant.
We observe precisely estimated null pre-trends.
Panel B controls for year-month fixed effects only, and shows similar though 
weaker patterns.
We see this evidence as supportive of the view that the estimated effects
using our main rental variable are not driven by changes in the composition
of listings that coincide with changes in the MW.

\subsubsection{Robustness checks}

Table \ref{tab:robustness} shows how our results change when we vary the
specification of the regression model and the commuting shares used 
to construct the workplace MW measure.
Each row of the table shows estimates analogous to those of columns (1) and (4)
of Table \ref{tab:static}.

Panel A of Table \ref{tab:robustness} groups the results when varying the 
regression equation.
Row (b) shows that our results are very similar when we exclude the 
economic controls from the QCEW.
Rows (c) and (d) show that interacting our time fixed effects with indicators 
for county or CBSA yields similar conclusions.
In all these cases our baseline point estimates are contained in relevant 
confidence intervals and, in the case of CBSA by month fixed effects, 
the results seem even larger.
This supports the view that our results are not caused by regional trends 
in housing markets correlated with our MW variables.
Row (e) shows that the results are non-significant when using state by 
monthly date fixed effects.
While our baseline estimates are within relevant confidence intervals, the 
signs of the point estimates are flipped.
However, these results are much noisier.
In fact, a formal test does not reject the hypothesis that the coefficients
on the workplace MW ($p=0.2168$) and the residence MW ($p=0.2668$) are 
equal to our baseline.
Row (f) includes ZIP code fixed effects in the first differences model, which
is equivalent to allowing for a ZIP code-specific linear trend in the model in 
levels.
These results are also very similar to our baseline.

Panel B of Table \ref{tab:robustness} estimates the baseline model but 
computing the workplace MW using alternative commuting structures.
Rows (g) and (h) use commuting shares from 2014 or 2018 instead of 2017.
Row (i) allows the commuting shares to vary by year, introducing additional
cross-year variation in the workplace MW measure that does not arise from 
changes in the statutory MW.
The fact that these specifications yield very similar results suggests that 
changes in commuting correlated with MW changes are unlikely to be the driver
of the results.
Furthermore, as discussed in the last subsection of Section \ref{sec:model_summary},
changing commuting shares would load on the workplace MW.
However, we see no effects of the workplace MW beyond the month of the 
change, suggesting that commuting shares are empirically quite stable.
Finally, rows (j) and (k) use 2017 commuting shares for workers earning 
less than \$1,251 per month and workers that are less than 29 years old, 
respectively.
If anything, the results seem stronger and more significant in this case, 
consistent with the idea that these workers are more likely to earn close to the 
minimum wage.

\subsubsection{County-level estimates}

Can we rely on larger geographical units to estimate the effect of the 
workplace MW?
We compare our results with estimates obtained from an alternative panel 
where the unit of observation is the county by month, using median rents 
from Zillow and commuting shares aggregated at this level.
Online Appendix Figure \ref{fig:dynamic_county_month} shows our estimates 
of a dynamic model.
The point estimates on impact have the same sign as our main results,
although they are much noisier.
We also observe some evidence of pre-trends in the coefficients of this 
model, suggesting that estimates obtained at a larger geographical 
resolution may not rely on plausibly exogenous identifying variation.
Finally, we note that these estimates would miss rich heterogeneity of 
effects within counties.

\subsection{Alternative Strategies}
\label{sec:results_alternative_strategies}

Online Appendix Table \ref{tab:stacked_w6} estimates our main models using a 
``stacked'' sample, as discussed in Section \ref{sec:alt_emp_strategies}.
Our sample contains 184 ``events,'' that is, CBSA-month pairs that had some 
strict subset of ZIP codes increasing the residence MW and had at least 10
ZIP codes.
These estimates interact the year-month fixed effects with event ID indicators, 
limiting comparisons to ZIP codes in the same event.
This is in line with recent literature that focuses on carefully selecting the 
comparison groups in difference-in-differences settings
\parencite{deChaisemartinEtAl2022, RothEtAl2022}.
We find that our key MW-based measures have little predictive power on their own,
but the model including both measures yields similar patterns as our baseline.
If anything, results are stronger in this case.
Now, both MW measures are strongly significant.
A 10 percent increase in both MW measures is estimated to increase rents 
by $\BothSumStackTen$ percent.
Online Appendix Figure \ref{fig:dynamic_stacked} shows the results of a similar 
model that includes leads and lags of the workplace MW.
Estimates of leads and lags are statistically non-distinguishable from zero.
However, they are noisier than in our baseline.

Online Appendix Table \ref{tab:arellano_bond} shows estimates of a model that 
includes the lagged difference in log rents as a covariate.
This specification relaxes the strict exogeneity assumption and allows for 
feedback effects of rent increases on the MW measures.
To avoid the well-known endogeneity problem of including this covariate, the 
models are estimated using an IV strategy where we instrument the first lag of 
the change in rents with the second lag of this variable 
\parencite{ArellanoBond1991, ArellanoHonore2001}.
Columns (1) and (2) show estimates of models in levels, both of which imply
confidence intervals for the coefficients that include our preferred estimates.
Columns (3) and (4) show preferred models in first differences, where results
are very similar across strategies. 

\subsection{Heterogeneity and Sample Selection Concerns}
\label{sec:results_heterogeneity}

Table \ref{tab:heterogeneity} explores heterogeneity of our estimates.
Column (1) reproduces the baseline results.
Column (2) presents estimates interacting the MW measures with an estimated 
share of MW workers residing in each ZIP code.
At the mean share of MW workers, our estimates indicate that the coefficient
on the workplace MW is $\TildeBetaZero$ (SE $=\TildeBetaZeroSE$).
For a ZIP code that is one standard deviation above the average share of MW 
workers, the effect of the workplace MW is stronger at
$\TildeBetaZeroPlusBetaOne$ (SE$=\TildeBetaZeroPlusBetaOneSE$).
Housing more MW workers in a ZIP code implies that income is likely to be more 
sensitive to the MW and so, consistent with our model, the effect of the MW
on rents is larger.
The coefficient on the residence MW also presents significant heterogeneity.

Column (3) of Table \ref{tab:heterogeneity} interacts both MW measures with the
standardized median household income from the ACS.
We find analogous patterns to Column (2), as a higher median income is 
correlated with a lower share of MW workers.
Column (4) interacts the MW measures with the standardized share of public 
housing units.
The effects for ZIP codes with more public housing seems larger,
although the coefficient on the interaction is not statistically significant. 
This result suggests that public housing does not necessarily diminish the 
scope for landlords to increase rents.
However, it is possible that this variable is capturing a high presence of 
low-wage residents and workers who, per our previous discussion, are more 
affected by the MW.

Online Appendix Table \ref{tab:static_sample} explores the sensitivity of our 
estimates to the sample of ZIP codes used in estimation.
Column (1) replicates our baseline estimates.
In column (2) we estimate the same model but re-weighting observations 
to match pre-treatment characteristics of the sample of urban ZIP codes,
defined in Table \ref{tab:stats_zip_samples}.%
\footnote{Our weights follow \textcite{Hainmueller2012} and are designed to 
    match the averages of three variables from the 2010 US Census:
    the share of urban households,
    the share of renter-occupied households, and
    the share of white households.}
The coefficient on the workplace MW is somewhat smaller, but it remains 
strongly significant.
Column (3) uses an unbalanced sample of ZIP codes and controls for 
quarter-of-entry by year-month fixed effects.
The coefficient on the workplace MW is again somewhat smaller than our baseline,
and now it is significant only at the 10\% level.

\subsection{Alternative rental categories}
\label{sec:alternative_categories}

Online Appendix Table \ref{tab:zillow_categories} shows how our results change 
when we use other rental categories available in the Zillow data.
For each rental variable we use an unbalanced panel that controls for
year-month fixed effects interacted with indicators for the quarter of entry
to the data in the given rental category.
We note that the number of observations varies widely across housing categories, 
and is always much lower than for our baseline SFCC variable.

Given the reduced precision of these estimates is hard to obtain strong 
conclusions on what type of housing is reacting more strongly to MW changes.
We observe that the sum of the coefficients on our MW variables is 
statistically significant at conventional levels in the categories 
``Single Family'' (SF),  ``Condominium and Cooperative Houses'' (CC), and 
``Multifamily 5+ units.''
Online Appendix Figure \ref{fig:ahs_unit_types} shows that low-wage households 
are likely to reside in these type of housing units.
However, the coefficients on each of the MW measures are typically much noisier 
than baseline.
We observe inconsistent results for the category ``1 bedroom,'' for which the 
sign of the coefficients is flipped relative to baseline.
However, these estimates are not statistically significant.

\subsection{Summary and Discussion}
\label{sec:results_discussion}

We find strong evidence that increases in the MW at workplace locations 
increase rents, supporting the view that MW policies spill over across local
housing markets through commuting.
Our baseline estimated elasticity of rents to the MW is $\BothBetaBase$.
The magnitude of our estimates is similar to estimates of the elasticity of 
restaurant prices to the MW 
(\cite{AllegrettoReich2018} finds an average restaurant price elasticity of 0.058),
and the elasticity of grocery store prices to the MW 
(\cite{Leung2021} finds an elasticity ranging from 0.06 to 0.08; see also 
\cite{RenkinEtAl2020}).

Our results are also consistent with existing research on the elasticity 
of wages to the MW.
For instance, estimates in \textcite{CegnizEtAl2019} (obtained using state MW 
events), and an assumed share of MW workers of 0.14 (as the average in Table 
\ref{tab:heterogeneity}), imply an elasticity of income to the MW of 0.094.%
\footnote{\textcite[][Table I]{CegnizEtAl2019} find that a ``MW event'' 
    increases wages by 6.8 percent, and their average MW event 
    represents an increase of 10.1 percent.
    Assuming that 14 percent of workers in a location earn the MW results 
    in an elasticity of $(6.8/10.1)\times 0.14 \approx 0.0943$.
    In their data, the authors find that ``8.6\% of workers were below the 
    minimum wage.''
    Our estimates of this share is likely larger because we account for local 
    MW policies.}
\textcite[][Table 1]{Hughes2020} finds an elasticity of household income of 
affected households to the MW of 0.189.
\textcite{Dube2019Income} finds minimum wage elasticities of household income 
between 0.152 and 0.430 for households at the lower end of the distribution.
\textcite{WiltshireEtAl2023} explores recent large MW increases in the US and 
finds an earnings elasticity of 0.18 for fast food workers.
For a given elasticity of wages to the MW $\varepsilon$, and a share of housing 
expenditure $s$, our elasticity of rents to the workplace MW implies that
$\gamma = 100 \times \left(s \BothBetaBase /\varepsilon\right)$ percent 
of the income generated by the MW is spent in housing.
For instance, if we assume $s=1/3$ and $\varepsilon=0.1$, then $\gamma=22.8$.

We find that our estimates of the elasticity of the MW to rents are on
the lower end of the range of estimates in the literature.
Relying on repeated cross-sectional survey data from the ACS, 
\textcite[][, Table 1]{Hughes2020} finds an elasticity of rents to the MW 
for affected households of 0.0543, similar to ours.
Our estimates are significantly smaller than those in 
\textcite{Yamagishi2021} and \textcite{AgarwalEtAl2022}.
Exploiting heterogeneity in MW levels across Japanese prefectures,
\textcite{Yamagishi2021} estimates an elasticity of 0.25--0.45.
\textcite{AgarwalEtAl2022} estimate the effect of the MW on rents using 
state-level MW changes between 2000 and 2009, and their findings
imply an elasticity of 0.73.
There are many factors that could account for the differential results.
For instance, while these authors use micro-data on total rental payments,
we rely on the median per-square foot rent in a ZIP code.
Additionally, \textcite{Yamagishi2021} focuses on low-quality apartments.
While precisely pinning down the source of the differences is hard,
given the existing evidence of the effect of the MW on income and consumption 
prices we see our results as plausible.

\section{Counterfactual Analysis}\label{sec:counterfactual}

We use our empirical results to explore the incidence of counterfactual 
MW policies across space.
We evaluate two policies:
an increase in the federal MW from \$7.25 to \$9, and
an increase in the local MW of the city of Chicago from \$13 to \$14.
To measure incidence, we compute the share of the extra income generated by the 
policy that is pocketed by landlords.

\subsection{Empirical Approach}\label{sec:emp_cf}

Following the notation in Section \ref{sec:model}, define the ZIP code-specific 
share pocketed by landlords as
\begin{equation*}
    \rho_i = \frac{\Delta H_i R_i}{\Delta Y_i} 
           = \frac{H^{\post}_i R^{\post}_i - H^{\pre}_i R^{\pre}_i}{\Delta Y_i} 
\end{equation*}
where
``$\pre$'' and ``$\post$'' denote moments before and after the MW change,
$H_i R_i = \sum_{z\in\Z(i)} H_{iz} R_{iz}$ denotes total housing expenditure 
in $i$, and
$Y_i = \sum_{z\in\Z(i)} Y_{iz}$ denotes total wage income in $i$.

Changes in rented square footage (if any) are unobserved.
Therefore, we assume $H^{\pre}_i = H^{\post}_i = H_i$ and the share becomes
\begin{equation}\label{eq:share_pocketed}
    \rho_i = \frac{H^{\post}_i R^{\post}_i - H^{\pre}_i R^{\pre}_i}{\Delta Y_i} = 
                H_i \frac{\Delta R_i}{\Delta Y_i} .
\end{equation}
If $\Delta H_i > 0$ instead, then our estimates of $\rho_i$ will be a lower 
bound.

We predict rent changes for all ZIP codes using the model in equation 
\eqref{eq:fd}.
Because we are interested in the partial effect of the policy, we hold constant 
common shocks affecting all ZIP codes,
local economic trends reflected in the controls, and
idiosyncratic shocks that show up in the error term.
Then,
\begin{equation}\label{eq:cf_rents_model}
    \Delta r_i = \beta \Delta \mw_i^{\wkp} + \gamma \Delta \mw_i^{\res} .
\end{equation}
We define the change in log total wages using a first-differenced model as well:
\begin{equation}\label{eq:cf_wages_model}
    \Delta y_i = \varepsilon \Delta \mw_i^{\wkp} ,
\end{equation}
where $y_i=\ln Y_i$.
The residence MW is excluded because we are considering the effect of the MW on 
nominal wages.
Estimates of the income elasticity $\varepsilon$ are not readily available 
in the literature.
Given the discussion in Section \ref{sec:results_discussion}, we set a 
baseline value of $\varepsilon = 0.1$.
However, we show how our results change for different values of $\varepsilon$.

Assuming that we know the value of $\varepsilon$, we can substitute
\eqref{eq:cf_rents_model} and \eqref{eq:cf_wages_model} into equation
\eqref{eq:share_pocketed} to obtain
\begin{equation*}\label{eq:rho}
    \begin{split}
        \rho_i & = H_i \left[ 
        \frac{\exp \left(\Delta r_i + r_i \right) - R_i }
             {\exp \left(\Delta y_i + y_i \right) - Y_i }
        \right] \\
        & = s_i \left[
            \frac{\exp \left( \beta \Delta \mw_i^{\wkp} + \gamma \Delta \mw_i^{\res} \right) - 1 }
                {\exp \left( \varepsilon \Delta \mw_i^{\wkp} \right) - 1 }
            \right]
    \end{split}
\end{equation*}
where $s_i = \left(H_i R_i\right)/Y_i$ is the share of $i$'s expenditure in 
housing.
As discussed in Section \ref{sec:data_income_housing},
we estimate this share as the ratio of the 2-bedroom SAFMR rental value, 
$\tilde R_i$, and monthly average wage per household, $\tilde Y_i$.

We also compute the total incidence of the policy on ZIP codes $i\in\Z_1$,
for some subset $\Z_1\subseteq\Z$, as follows:
\begin{equation*}\label{eq:tot_incidence}
    \rho_{\Z_1} = 
        \frac{\sum_{i\in\Z_1} \tilde R_i \left(\exp \left( \beta \Delta \mw_i^{\wkp} 
                                    + \gamma \Delta \mw_i^{\res} \right) - 1\right) }
            {\sum_{i\in\Z_1} \tilde Y_i \left( \exp \left( \varepsilon \Delta \mw_i^{\wkp} \right) 
                                    - 1\right) } .
\end{equation*}
In words, total incidence is defined as the ratio of the total change in rents
per household in $\Z_1$ to the total change in wage income per household 
in $\Z_1$.

\subsection{Results}\label{sec:results_cf}

We use our estimates to compute the shares 
$\{\{\rho_i\}_{i\in\Z_1},\rho_{\Z_1}\}$ for two counterfactual scenarios:
an increase of the federal MW from \$7.25 to \$9 and 
an increase in the Chicago City MW from \$13 to \$14.
In the federal case, we let $\Z_1$ be the set of ZIP codes located in urban 
CBSAs (as defined in Table \ref{tab:stats_zip_samples}) and exclude ZIP codes 
that are part of a CBSA where the average estimated increase in log total wages 
is less than 0.1\%.%
\footnote{\label{foot:restriction_on_zipcodes}
The goal of this restriction is to exclude metropolitan areas located 
in jurisdictions with a MW level above the new counterfactual federal level.
Because all those ZIP codes experience a small and similar increase in the 
workplace MW, the estimated share pocketed will be equal to the estimated
housing expenditure share times the constant 
$\left(\exp(\beta x)-1\right)/\left(\exp(\varepsilon x)-1\right)$,
where $x$ is the value of the workplace MW increase.
These estimates, however, are not economically meaningful because the increase
in income due to the policy is negligible.}
In the local case, we let $\Z_1$ represent ZIP codes in the 
Chicago-Naperville-Elgin CBSA, which are the most exposed to this policy.

\subsubsection{Counterfactual increases in residence and workplace MW levels}
\label{sec:cf_res_and_wkp_changes}

We compute the counterfactual statutory MW in January 2020 at a given ZIP code 
by taking the max between (i) the state, county, and local MW in December 2019, 
and (ii) the assumed value for the federal or city MW in January 2020.%
\footnote{To be more precise, we take the maximum between the MW levels of 
different jurisdictions at the level of the block.
Then, we aggregate up to ZIP codes using the correspondence table in 
Online Appendix \ref{sec:blocks_to_uspszip}.
We do so to account for the fact that the new MW policy may be partially 
binding in some ZIP codes.}
Then, we compute the counterfactual values of the residence MW and the workplace
MW following the procedure outlined in Section \ref{sec:data_mw_measures}.
Like in our baseline estimates, we use commuting shares for all workers in 2017.

\paragraph{Federal increase.}

The distributions of counterfactual increases in the MW measures are displayed 
in Online Appendix Figure \ref{fig:cf_hist_res_and_wkp_mw}.
Out of the $\zipcodesFedNine$ ZIP codes that satisfy our criteria, 
$\zipNoIncFedNine$ (or $\zipNoIncPctFedNine$\%) experience no increase in 
the residence MW at all.
The residence MW increases in $\zipIncFedNine$ ZIP codes  (or $\zipIncPctFedNine$\%), 
$\zipBoundFedNine$ of which were bound by the previous federal MW, and 
so the residence MW increases by $\ln(9)-\ln(7.25)\approx 0.2162$ in them.
Correspondingly, we observe mass points in the distribution of the residence MW,
with the two largest ones at $0$ and $0.2162$.
Since many people reside and work under the same statutory MW, the mass points 
are still visible in the histogram of the workplace MW.
However, we observe more places experiencing moderate increases in this measure.

Panel A of Online Appendix Figure \ref{fig:map_chicago_cf_wkp_res} maps the 
changes in the residence and workplace MW in the Chicago-Naperville-Elgin CBSA.
Unlike in Figure \ref{fig:map_mw_chicago_jul2019}, we observe the MW increasing 
from the outside of Cook County and spilling over inside it.

\paragraph{Local increase.}

In our second counterfactual experiment we increase the Chicago City MW 
from \$13 to \$14 on January 2020, keeping constant other MW policies.
Importantly, under this assumption the difference between the Chicago
and Cook County MW levels increases by \$1.

In this case, there are $\zipIncChiFourteen$ ZIP codes whose 
residence MW are affected by this change and $\zipNoIncChiFourteen$ 
that remain directly unaffected.
Panel B of Online Appendix Figure \ref{fig:map_chicago_cf_wkp_res} shows the 
changes in both MW measures after this policy.
As expected, we observe large increases in the workplace MW in the city, 
which become smaller as one moves away from it.

\subsubsection{The share of extra wage income pocketed by landlords}
\label{sec:cf_rents_and_wage_changes}

We couple the counterfactual increases in residence and workplace MW with 
estimates of $\beta$ and $\gamma$.
Following the results in Table \ref{tab:static}, we take 
$\beta = \betaCf$ and 
$\gamma = \gammaCf$.
Based on previous discussion, we set $\varepsilon = 0.1$.
We follow the procedure outlined in the previous subsection to estimate the 
incidence of the counterfactual policy.

\paragraph{Federal increase.}

Panel A of Figure \ref{fig:cf_hist_shares} displays a histogram of the
estimated shares $\{\rho_i\}_{i\in\Z_1}$.
The median share is $\rhoMedianFedNine$, which implies that at the median 
ZIP code landlords capture roughly $\rhoMedianCentsFedNine$ cents of each 
additional dollar generated by the MW change.
The distribution of the shares is skewed to the right.
However, we observe a long left-tail with a few negative values which arise due 
to declines in rents in locations where the increase in the residence MW is much 
larger than the increase in the workplace MW.

Panel A of Figure \ref{fig:map_chicago_cf_shares} maps the estimated shares 
in the Chicago-Naperville-Elgin CBSA.
Panel A of Online Appendix Figure \ref{fig:map_chicago_cf_rents_wages} shows
estimated increases in rents and wage income.
We estimate a larger share pocketed in Cook County.
The reason is that these ZIP codes experience the new policy only through
their workplace MW and, as a result, rents increase relatively more than
wage income.
We also observe a larger incidence on landlords in the south of Cook County,
where the housing expenditure share is larger.

The top rows of Panel A in Table \ref{tab:counterfactuals} show the medians of 
the key estimated objects for two groups:
ZIP codes where the residence MW did not change, and 
ZIP codes where it did.
ZIP codes in the first group see rent increases that are moderated by the 
negative effect of the residence MW.
The median incidence on landlords for this group is $\rhoMedCentsDirFedNine$ 
cents of each dollar.
Locations in the second group are only affected through changes in the 
workplace MW, so median incidence for this group is larger at 
$\rhoMedCentsIndirFedNine$ cents of each dollar.
The bottom row of Panel A in Table \ref{tab:counterfactuals} shows our estimate
of total incidence of the policy, which is given by $\totIncidenceFedNine$.
The share is lower than the median values reported earlier because landlords 
capture more in locations with lower rent increases.

More generally, one can think of the average share for different values of the 
gap between the residence MW and the workplace MW, i.e., 
$\Delta \mw_i^{\wkp} - \Delta \mw_i^{\res}$.
Figure \ref{fig:rho_by_decile_MW_gap} displays the average estimated share for 
each decile of that gap.
We observe a positive and nearly monotonic relation.
The share is lower in ZIP codes that had a low increase in the workplace MW 
relative to the residence MW, highlighting how the share pocketed depends on
the incidence of the federal MW increase on the MW measures.

\paragraph{Local increase.}

Panel B of Figure \ref{fig:cf_hist_shares} shows the distribution of the 
estimated shares in the Chicago-Naperville-Elgin CBSA.
Panel B of Table \ref{tab:counterfactuals} displays median values for ZIP codes
inside the city and outside it.
The incidence on landlords is of $\rhoMedCentsDirChiFourteen$ cents of each 
dollar for the median directly treated ZIP code and of 
$\rhoMedCentsIndirChiFourteen$ cents for the median not directly treated one.

Panel B of Figure \ref{fig:map_chicago_cf_shares} maps the shares.
Panel B of Online Appendix Figure \ref{fig:map_chicago_cf_rents_wages} shows the 
estimated changes in rents and total wages.
Unlike the previous exercise, the share pocketed by landlords is now higher 
right outside of Chicago City.
Many commuters reside there, and thus the workplace MW changes the most.
This translates into higher rent increases, implying a large share pocketed.%
\footnote{It is worth emphasizing that we estimate large increases in wage income
inside the city due to the fact that our model in \eqref{eq:cf_wages_model}
excludes heterogeneity based on the share of MW workers.
In a setting where this equation accounts for the share of MW workers we would 
not expect a strong effect on wages inside the city.} 

\paragraph{Sensitivity to $\varepsilon$.}

Our estimates of the incidence of the policy depend on the value of the
income elasticity $\varepsilon$.
Online Appendix Figure \ref{fig:cf_share_by_epsilon} displays total share 
pocketed by landlords for different values of $\varepsilon$, both for the 
federal and the local counterfactual policies.
As expected, the share pocketed is decreasing in $\varepsilon$.
For instance, if we assume $\varepsilon = 0.06$ instead of 0.1, then
for the federal policy the share pocketed would be about $0.16$ cents and
for the local policy it would be about $0.19$ cents.

\subsection{Discussion}\label{sec:discussion_cf}

Overall, we observe that landlords capture a significant portion of the income 
generated by MW policies.
We also found strong spatial heterogeneity in incidence depending on 
commuting patterns.
The share pocketed by landlords tends to be larger in ZIP codes located in 
jurisdictions where the MW policy did not change,
particularly those located close to the MW change as many of their residents
work under the new MW level and experience no change in the residence MW.
According to the model in Section \ref{sec:model}, the mechanism behind
this result is the offsetting effect of increases in prices of non-tradable 
consumption in the same location.

Because of the housing market, 
the impact of the MW will be less equalizing in terms of the distribution of
real incomes than nominal incomes.
There are many reasons for this.
First, poorer areas tend to have a higher share of expenditure in housing.
Second, as we discussed in Section \ref{sec:data_income_housing},
low-wage households are more likely to rent.
Finally, in the case of high-income cities enacting MW policies, affected 
low-wage workers are more likely to live outside the city where rent
increases will be larger.

\section{Conclusions}\label{sec:conclusion}

We explore whether minimum wage changes affect housing rental prices, and 
whether MW shocks propagate spatially through commuting.
To answer this question we develop a theoretical approach that accounts for
the fact that MW workers typically reside and work in different locations.
Our model suggests that MW changes at workplaces will tend to increase
rents, and highlights the importance of accounting for the MW at the residence
location when estimating the effect of the workplace MW on rents.

We collect data on rents, statutory MW levels, and commuting flows, and estimate 
the effect of the residence and workplace MW on rents.
We find evidence supporting the main conclusions of our model: the workplace MW
increases rents, and thus MW policies spill over spatially through commuting.
Our conclusions are robust to a variety of robustness checks, and suggest
stronger effects in locations that are residence to more MW workers.
Our two-parameter model is able to capture rich heterogeneity in the effect 
of the MW on rents depending on the prevailing commuting structure.

To explore the incidence of the MW on landlords, we explore two counterfactual 
MW policies.
Our results suggest that landlords pocket a non-negligible portion of the newly 
generated wage income, and that this share varies spatially.
Because low-wage households are more affected by MW policies and more likely to 
be renters,
the omission of the housing market channel would lead to an overstatement of the 
equalizing effects of the MW on disposable income.

Our analysis takes a partial equilibrium perspective, exploring the incidence 
of small increases in the MW within metropolitan areas.
However, one would expect general equilibrium adjustments to large changes 
in MW levels, such as worker mobility and changes in housing supply.
Exploring these issues in the context of a spatial model with worker mobility
that distinguishes between renters and homeowners appears as a fruitful 
avenue for future work.


\clearpage
\printbibliography

@article{Dube2019,
  title={Impacts of Minimum Wages: Review of the International Evidence},
  author={Dube, Arindrajit},
  journal={Independent Report to the UK Government},
  year={2019},
  howpublished = {\url{https://www.gov.uk/government/publications/impacts-of-minimum-wages-review-of-the-international-evidence}},
}

@article{Macurdy2015,
  title={How Effective Is the Minimum Wage at Supporting the Poor?},
  author={MaCurdy, Thomas},
  journal={Journal of Political Economy},
  volume={123},
  number={2},
  pages={497--545},
  year={2015},
  publisher={University of Chicago Press Chicago, IL}
}

@article{HarasztosiLidner2019,
  title={Who Pays for the Minimum Wage?},
  author={Harasztosi, P{\'e}ter and Lindner, Attila},
  journal={American Economic Review},
  volume={109},
  number={8},
  pages={2693--2727},
  year={2019}
}

@techreport{NeumarkShirley2021,
  title={Myth or Measurement: What Does the New Minimum Wage Research Say About Minimum Wages and Job Loss in the United States?},
  author={Neumark, David and Shirley, Peter},
  year={2021},
  institution={National Bureau of Economic Research},
  type = {NBER Working Papers},
  number = {28388}
}

@article{Lee1999,
  title={Wage Inequality in the United States During the 1980s: Rising Dispersion or Falling Minimum Wage?},
  author={Lee, David S.},
  journal={Quarterly Journal of Economics},
  volume={114},
  number={3},
  pages={977--1023},
  year={1999},
  publisher={MIT Press}
}

@article{CardKrueger1994,
  title={Minimum Wages and Employment: A Case Study of the Fast-Food Industry in New Jersey and Pennsylvania},
  author={Card, David and Krueger, Alan B.},
  journal={American Economic Review},
  volume={84},
  number={4},
  pages={772--93},
  year={1994}
}

@article{Dube2019Income,
  title={Minimum Wages and the Distribution of Family Incomes},
  author={Dube, Arindrajit},
  journal={American Economic Journal: Applied Economics},
  volume={11},
  number={4},
  pages={268--304},
  year={2019}
}

@techreport{WiltshireEtAl2023,
  title={High Minimum Wages and the Monopsony Puzzle},
  author={Wiltshire, Justin C. and McPherson, Carl and Reich, Michael},
  year={2023},
  institution={Institute for Research on Labor and Employment},
  type={Working Paper},
  abstract={We present the first causal analysis of recent large minimum wage increases, focusing on 47 larger U.S. counties that reached \$15 or more by 2022q1. Using stacked county-level synthetic control estimators, we find substantial pay growth, no disemployment effects and reduced wage inequality. Our novel procedure ameliorates pandemic-related bias. We pose and address a monopsony puzzle: Researchers often invoke monopsony to explain absent negative employment effects, yet the model generally predicts positive employment effects. When we reduce selection and attenuation biases—by excluding areas with local minimum wages and high-wage counties—we find large, significant positive employment effects.},
}

@article{AutorEtAl2016,
  title={The Contribution of the Minimum Wage to US Wage Inequality Over Three Decades: a Reassessment},
  author={Autor, David and Manning, Alan and Smith, Christopher L.},
  journal={American Economic Journal: Applied Economics},
  volume={8},
  number={1},
  pages={58--99},
  year={2016}
}

@article{CegnizEtAl2019,
  title={The Effect of Minimum Wages on Low-Wage Jobs},
  author={Cengiz, Doruk and Dube, Arindrajit and Lindner, Attila and Zipperer, Ben},
  journal={Quarterly Journal of Economics},
  volume={134},
  number={3},
  pages={1405--1454},
  year={2019},
  publisher={Oxford University Press}
}

@techreport{AhlfeldtEtAl2022,
  title={Optimal minimum wages},
  author={Ahlfeldt, Gabriel M and Roth, Duncan and Seidel, Tobias},
  year={2022},
  institution={Center for Economic and Policy Research},
  type={CEPR Discussion Paper},
  number={DP16913}
}

@techreport{BergerHerkenhoffMongey2022,
  title={Minimum wages, efficiency and welfare},
  author={Berger, David W and Herkenhoff, Kyle F and Mongey, Simon},
  year={2022},
  institution={National Bureau of Economic Research},
  type = {NBER Working Papers},
  number={29662}
}

@article{VaghulZipperer2016,
  title={Historical State and Sub-State Minimum Wage Data},
  author={Vaghul, Kavya and Zipperer, Ben},
  journal={Washington Center for Equitable Growth Working Paper},
  volume={90716},
  year={2016}
}

@article{JardimEtAl2022seattle,
    Author = {Jardim, Ekaterina and Long, Mark C. and Plotnick, Robert and van Inwegen, Emma and Vigdor, Jacob and Wething, Hilary},
    Title = {Minimum-Wage Increases and Low-Wage Employment: Evidence from Seattle},
    Journal = {American Economic Journal: Economic Policy},
    Volume = {14},
    Number = {2},
    Year = {2022},
    Month = {May},
    Pages = {263-314},
    DOI = {10.1257/pol.20180578},
    URL = {https://www.aeaweb.org/articles?id=10.1257/pol.20180578}
}

@article{DubeLindner2021,
  title={City Limits: What Do Local-Area Minimum Wages Do?},
  author={Dube, Arindrajit and Lindner, Attila},
  journal={Journal of Economic Perspectives},
  volume={35},
  number={1},
  pages={27--50},
  year={2021}
}

@article{DracaMachinVanreenen2011,
  title={Minimum Wages and Firm Profitability},
  author={Draca, Mirko and Machin, Stephen and Van Reenen, John},
  journal={American Economic Journal: Applied Economics},
  volume={3},
  number={1},
  pages={129--51},
  year={2011}
}

@techreport{JardimEtAl2022discontinuity,
  title={Boundary Discontinuity Methods and Policy Spillovers},
  author={Jardim, Ekaterina S and Long, Mark C and Plotnick, Robert and van Inwegen, Emma and Vigdor, Jacob L and Wething, Hilary},
  year={2022},
  institution={National Bureau of Economic Research},
  type = {NBER Working Papers},
  number={30075}
}

@article{Cadena2014,
  title={Recent Immigrants as Labor Market Arbitrageurs: Evidence From the Minimum Wage},
  author={Cadena, Brian C.},
  journal={Journal of Urban Economics},
  volume={80},
  pages={1--12},
  year={2014},
  publisher={Elsevier}
}

@article{Hughes2020,
  title={Housing Demand and Affordability for Low-Wage Households: Evidence from Minimum Wage Changes},
  author={Hughes, Samuel},
  journal={Available at SSRN 3541813},
  year={2020}
}

@article{PerezPerez2021,
  title={City Minimum Wages and Spatial Equilibrium Effects},
  author={P{\'e}rez P{\'e}rez, Jorge},
  journal={SocArXiv preprint}, 
  URL={https://osf.io/preprints/socarxiv/fpx9e/},
  year={2021},
  doi={10.31235/osf.io/fpx9e}
}

@article{Tidemann2018,
  title={Minimum Wages, Spatial Equilibrium, and Housing Rents},
  author={Tidemann, Krieg},
  journal={Job Market Paper},
  year={2018}
}

@article{AgarwalEtAl2022,
  title={Minimum Wage Increases and Eviction Risk},
  author={Agarwal, Sumit and Ambrose, Brent W. and Diop, Moussa},
  journal={Journal of Urban Economics},
  pages={103421},
  year={2022},
  publisher={Elsevier}
}

@article{Yamagishi2021,
  title={Minimum Wages and Housing Rents: Theory and Evidence},
  author={Yamagishi, Atsushi},
  journal={Regional Science and Urban Economics},
  volume={87},
  pages={103649},
  year={2021},
  publisher={Elsevier}
}

@article{Monras2019,
  title={Minimum Wages and Spatial Equilibrium: Theory and Evidence},
  author={Monras, Joan},
  journal={Journal of Labor Economics},
  volume={37},
  number={3},
  pages={853--904},
  year={2019},
  publisher={University of Chicago Press Chicago, IL}
}

@article{Leung2021,
  title={Minimum Wage and Real Wage Inequality: Evidence From Pass-Through to Retail Prices},
  author={Leung, Justin H.},
  journal={Review of Economics and Statistics},
  volume={103},
  number={4},
  pages={754--769},
  year={2021}
}

@article{Aaronson2001,
  title={Price Pass-Through and the Minimum Wage},
  author={Aaronson, Daniel},
  journal={Review of Economics and Statistics},
  volume={83},
  number={1},
  pages={158--169},
  year={2001},
  publisher={MIT Press}
}

@article{AllegrettoReich2018,
  title={Are Local Minimum Wages Absorbed by Price Increases? Estimates From Internet-Based Restaurant Menus},
  author={Allegretto, Sylvia and Reich, Michael},
  journal={ILR Review},
  volume={71},
  number={1},
  pages={35--63},
  year={2018},
  publisher={SAGE Publications Sage CA: Los Angeles, CA}
}

@article{RenkinEtAl2020,
  title={The Pass-Through of Minimum Wages Into US Retail Prices: Evidence From Supermarket Scanner Data},
  author={Renkin, Tobias and Montialoux, Claire and Siegenthaler, Michael},
  journal={Review of Economics and Statistics},
  pages={1--99},
  year={2020}
}

@article{KlineMoretti2014,
title = {People, Places, and Public Policy: Some Simple Welfare Economics of Local Economic Development Programs},
author = {Kline, Patrick and Moretti, Enrico},
journal = {Annual Review of Economics},
volume = {6},
number = {1},
pages = {629-662},
year = {2014},
doi = {10.1146/annurev-economics-080213-041024},
}

@article{GiroudMueller2019,
  title={Firms' internal networks and local economic shocks},
  author={Giroud, Xavier and Mueller, Holger M},
  journal={American Economic Review},
  volume={109},
  number={10},
  pages={3617--49},
  year={2019}
}

@techreport{CoutureEtAl2019,
  title={Income Growth and the Distributional Effects of Urban Spatial Sorting},
  author={Couture, Victor and Gaubert, Cecile and Handbury, Jessie and Hurst, Erik},
  year={2019},
  institution={National Bureau of Economic Research},
  type={NBER Working Papers},
  number={26142}
}

@article{AllenEtAl2020,
  title={Is Tourism Good for Locals? Evidence from Barcelona},
  author={Allen, Treb and Fuchs, Simon and Ganapati, Sharat and Graziano, Alberto and Madera, Rocio and Montoriol-Garriga, Judit},
  journal={Unpublished manuscript},
  year={2020}
}

@techreport{MiyauchiEtAl2021,
  title={Consumption Access and Agglomeration: Evidence from Smartphone Data},
  author={Miyauchi, Yuhei and Nakajima, Kentaro and Redding, Stephen J.},
  year={2021},
  institution={National Bureau of Economic Research},
  type={NBER Working Papers},
  number={28497}
}

@article{AngristImbens1995,
  title={Two-stage least squares estimation of average causal effects in models with variable treatment intensity},
  author={Angrist, Joshua D and Imbens, Guido W},
  journal={Journal of the American Statistical Association},
  volume={90},
  number={430},
  pages={431--442},
  year={1995},
  publisher={Taylor \& Francis}
}

@article{ArellanoBond1991,
  title={Some Tests of Specification for Panel Data: Monte Carlo Evidence and an Application to Employment Equations},
  author={Arellano, Manuel and Bond, Stephen},
  journal={Review of Economic Studies},
  volume={58},
  number={2},
  pages={277--297},
  year={1991},
  publisher={Wiley-Blackwell}
}

@incollection{ArellanoHonore2001,
  title={Panel Data Models: Some Recent Developments},
  author={Arellano, Manuel and Honor{\'e}, Bo},
  booktitle={Handbook of Econometrics},
  volume={5},
  pages={3229--3296},
  year={2001},
  publisher={Elsevier}
}

@article{CallawayEtAl2021,
  title={Difference-in-Differences With a Continuous Treatment},
  author={Callaway, Brantly and Goodman-Bacon, Andrew and Sant'Anna, Pedro HC},
  journal={arXiv preprint arXiv:2107.02637},
  year={2021}
}

@article{Hainmueller2012,
  title={Entropy Balancing for Causal Effects: A Multivariate Reweighting Method to Produce Balanced Samples in Observational Studies},
  author={Hainmueller, Jens},
  journal={Political Analysis},
  pages={25--46},
  year={2012},
  publisher={JSTOR}
}

@article{RothEtAl2022,
  title={What's Trending in Difference-in-Differences? A Synthesis of the Recent Econometrics Literature},
  author={Roth, Jonathan and Sant'Anna, Pedro HC and Bilinski, Alyssa and Poe, John},
  journal={arXiv preprint arXiv:2201.01194},
  year={2022}
}

@techreport{deChaisemartinEtAl2022,
  title={Two-way Fixed Effects and Differences-in-Differences with Heterogeneous Treatment Effects: A Survey},
  author={Cl{\'e}ment {de Chaisemartin} and Xavier {D'Haultfoeuille}},
  year={2022},
  journal = {Econometrics Journal}
}

@article{BorusyakHullJaravel2021,
  author = {Borusyak, Kirill and Hull, Peter and Jaravel, Xavier},
  title = {Quasi-Experimental Shift-Share Research Designs},
  journal = {Review of Economic Studies},
  volume = {89},
  number = {1},
  pages = {181-213},
  year = {2021}
}

@article{GoldsmithpinkhamEtAl2020,
  author = {Goldsmith-Pinkham, Paul and Sorkin, Isaac and Swift, Henry},
  title = {Bartik Instruments: What, When, Why, and How},
  journal = {American Economic Review},
  volume = {110},
  number = {8},
  year = {2020},
  month = {August},
  pages = {2586-2624},
  DOI = {10.1257/aer.20181047},
  URL = {https://www.aeaweb.org/articles?id=10.1257/aer.20181047}
}

@article{Kuehn2016,
  title={Spillover Bias in Cross-Border Minimum Wage Studies: Evidence From a Gravity Model},
  author={Kuehn, Daniel},
  journal={Journal of Labor Research},
  volume={37},
  number={4},
  pages={441--459},
  year={2016},
  publisher={Springer}
}

@article{DelgadoFlorax2015,
  title={Difference-in-Differences Techniques for Spatial Data: Local Autocorrelation and Spatial Interaction},
  author={Delgado, Michael S. and Florax, Raymond JGM},
  journal={Economics Letters},
  volume={137},
  pages={123--126},
  year={2015},
  publisher={Elsevier}
}

@article{Butts2021,
  title={Difference-in-Differences Estimation with Spatial Spillovers},
  author={Butts, Kyle},
  journal={arXiv preprint arXiv:2105.03737},
  year={2021}
}

@article{Fernald2020,
  title={Americas Rental Housing 2020},
  author={Fernald, M.},
  journal={Joint Center for Housing Studies of Harvard University, Cambridge, MA},
  year={2020}
}

@article{AmbroseEtAl2015,
  title={The Repeat Rent Index},
  author={Ambrose, Brent W and Coulson, N Edward and Yoshida, Jiro},
  journal={Review of Economics and Statistics},
  volume={97},
  number={5},
  pages={939--950},
  year={2015},
  publisher={The MIT Press}
}

@misc{ZillowData,
  title = {Zillow Research Data},
  author = {Zillow},
  howpublished = {\url{https://www.zillow.com/research/data/}},
  year={2020},
  note = {Accessed: 2020-02-15}
}

@misc{ZillowDataArchive,
  title = {Zillow Research Data, February 2020 snapshot},
  author = {{Internet Archive}},
  howpublished = {\url{https://web.archive.org/web/20200222220950/https://www.zillow.com/research/data/}},
  year = {2021},
  note = {Accessed: 2022-01-09}
}

@misc{ZillowZORI,
  title = {Methodology: Zillow Observed Rent Index (ZORI)},
  author = {Zillow},
  howpublished = {\url{https://www.zillow.com/research/methodology-zori-repeat-rent-27092/}},
  year={2023},
  note = {Accessed: 2023-08-02}
}

@misc{ZillowTypesOfHomes,
  title = {21 Different Types of Homes},
  author = {{Zillow}},
  howpublished = {\url{https://www.zillow.com/resources/stay-informed/types-of-houses/}},
  year = {2023},
  note = {Accessed: 2023-06-04}
}

@misc{ESRI,
  title = {Shapefile Layer Package USA ZIP Code Areas 2020},
  author = {ESRI},
  howpublished = {\url{https://www.arcgis.com/home/item.html?id=8d2012a2016e484dafaac0451f9aea24}},
  year={2020},
  note = {Accessed: 2021-01-09}
}

@misc{ZillowFacts,
  title = {Zillow Facts and Figures},
  author = {Zillow},
  howpublished = {\url{https://www.zillowgroup.com/facts-figures/}},
  year={2020},
  note = {Accessed: 2020-10-23}
}

@misc{QCEW,
  title = {Quarterly Census of Employment and Wages},
  author = {{US Bureau of Labor Statistics}},
  howpublished = {\url{https://www.bls.gov/cew/downloadable-data-files.htm}},
  year={2020},
  note = {Accessed: 2020-05-10}
}

@misc{IRS,
  title = {SOI Tax Stats - Individual Income Tax Statistics - ZIP Code Data (SOI)},
  author = {{Internal Revenue System}},
  howpublished = {\url{https://www.irs.gov/statistics/soi-tax-stats-individual-income-tax-statistics-zip-code-data-soi}},
  year={2022},
  note = {Accessed: 2022-02-03}
}

@misc{IRSfulltime,
  title = {Identifying Full-time Employees},
  author = {{Internal Revenue System}},
  howpublished = {\url{https://www.irs.gov/affordable-care-act/employers/identifying-full-time-employees}},
  year={2022},
  note = {Accessed: 2022-03-01}
}

@misc{CensusLODES,
  title = {LEHD Origin-Destination Employment Statistics Data (2009-2018) [version 7]},
  author = {{US Census Bureau}},
  howpublished = {US Census Bureau, Longitudinal-Employer Household Dynamics Program. \url{https://lehd.ces.census.gov/data/}},
  year = {2021},
  note = {Accessed: 2021-10-03}
}

@misc{CensusACS,
  title = {American Community Survey 5-year Data 2010-2014},
  author = {{US Census Bureau}},
  howpublished = {\url{https://www.census.gov/data/developers/data-sets/acs-5year.html}},
  year = {2022},
  note = {Accessed: 2022-04-01}
}

@misc{CensusDecennial,
  title = {Decennial Census Data 2010},
  author = {{US Census Bureau}},
  howpublished = {\url{https://www.census.gov/data/developers/data-sets/decennial-census.html}},
  year = {2022},
  note = {Accessed: 2022-03-01}
}

@misc{cbTiger,
  title = {Special Release - Census Blocks with Population and Housing Counts},
  author = {{US Census Bureau}},
  howpublished = {\url{https://www.census.gov/geographies/mapping-files/2010/geo/tiger-line-file.html}},
  year = {2012},
  note = {Accessed: 2022-02-10}
}

@misc{hudPreamble,
  title = {Federal Register},
  author = {{US Department of Housing and Urban Development}},
  volume = {82},
  number = {169},
  howpublished = {\url{https://www.huduser.gov/portal/datasets/fmr/fmr2018/FY2018-FMR-Preamble.pdf}},
  year = {2017}
}

@misc{hudSAFMR,
  title = {Small Area Fair Market Rents},
  author = {{US Department of Housing and Urban Development}},
  howpublished = {\url{https://www.huduser.gov/portal/datasets/fmr.html}},
  year = {2020},
  note = {Accessed: 2020-03-10}
}

@misc{hudCrosswalks,
  title = {USPS ZIP code crosswalk files},
  author = {{US Department of Housing and Urban Development}},
  howpublished = {\url{https://www.huduser.gov/portal/datasets/usps_crosswalk.html}},
  year = {2022},
  note = {Accessed: 2022-02-10}
}

@misc{hudHousing,
  title = {Assisted Housing: National and Local - ZIP code level data based on Census 2010 geographies},
  author = {{US Department of Housing and Urban Development}},
  howpublished = {\url{https://www.huduser.gov/portal/datasets/assthsg.html}},
  year = {2022},
  note = {Accessed: 2022-02-10}
}

@misc{ahs2020,
  title = {American Housing Survey 2011 and 2013},
  author = {{US Department of Housing and Urban Development}},
  year = {2020},
  howpublished = {\url{https://www.census.gov/programs-surveys/ahs.html}},
  note = {Accessed: 2022-06-01}
}

@misc{SafmrReport2018,
  title = {Small Area Fair Market Rent Demonstration Evaluation},
  author = {{US Department of Housing and Urban Development}},
  year = {2018},
  howpublished = {\url{https://www.huduser.gov/portal/sites/default/files/pdf/SAFMR-Evaluation-Final-Report.pdf}},
  note = {Accessed: 2022-06-06}
}

@book{wooldridge2010,
  title={Econometric analysis of cross section and panel data},
  author={Wooldridge, Jeffrey M.},
  year={2010},
  publisher={MIT press}
}

@misc{BerkeleyLaborCenter,
  author = {{UC Berkeley Labor Center}},
  title = {Inventory of US City and County Minimum Wage Ordinances},
  year = {2022},
  howpublished = {\url{https://laborcenter.berkeley.edu/inventory-of-us-city-and-county-minimum-wage-ordinances/}},
  note = {Accessed: 2022-05-30}
}

@misc{MinWorkersReportBLS,
  title={Characteristics of minimum wage workers, 2019},
  author={{US Bureau of Labor Statistics}},
  institution={United States Department of Labor},
  year={2020},
  howpublished = {\url{https://www.bls.gov/opub/reports/minimum-wage/2019/home.htm}},
  note = {Accessed: 2020-11-01}
}


\clearpage
\section*{Figures and Tables}

\begin{figure}[h!]
    \centering
    \caption{Changes in minimum wage measures in the Chicago-Naperville-Elgin CBSA, July 2019}
    \label{fig:map_mw_chicago_jul2019}

    \begin{subfigure}{0.5\textwidth}
        \centering
        \caption*{Residence MW}
        \includegraphics[width = 1\textwidth]
            {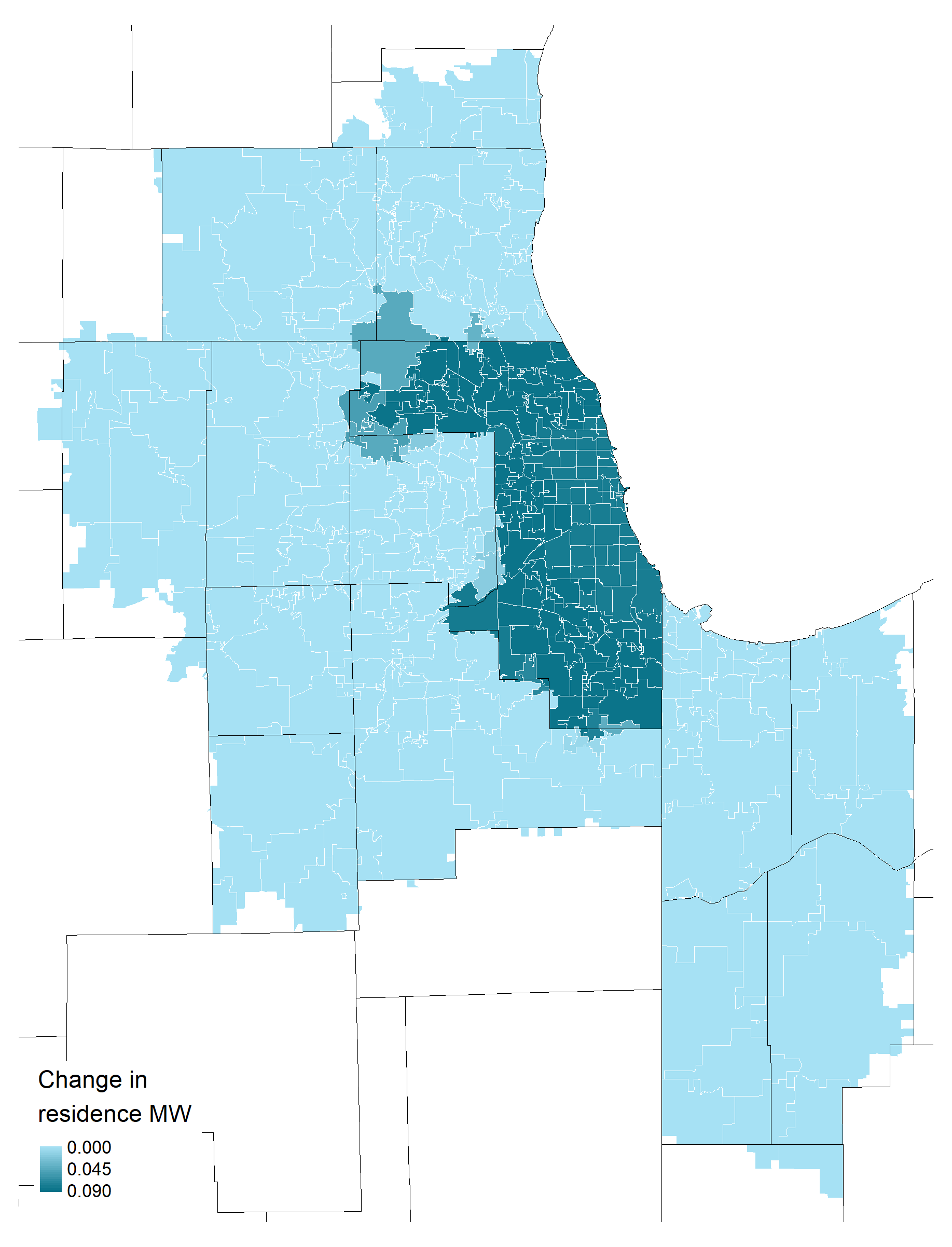}
    \end{subfigure}%
    \begin{subfigure}{0.5\textwidth}
        \centering
        \caption*{Workplace MW}
        \includegraphics[width = 1\textwidth]
            {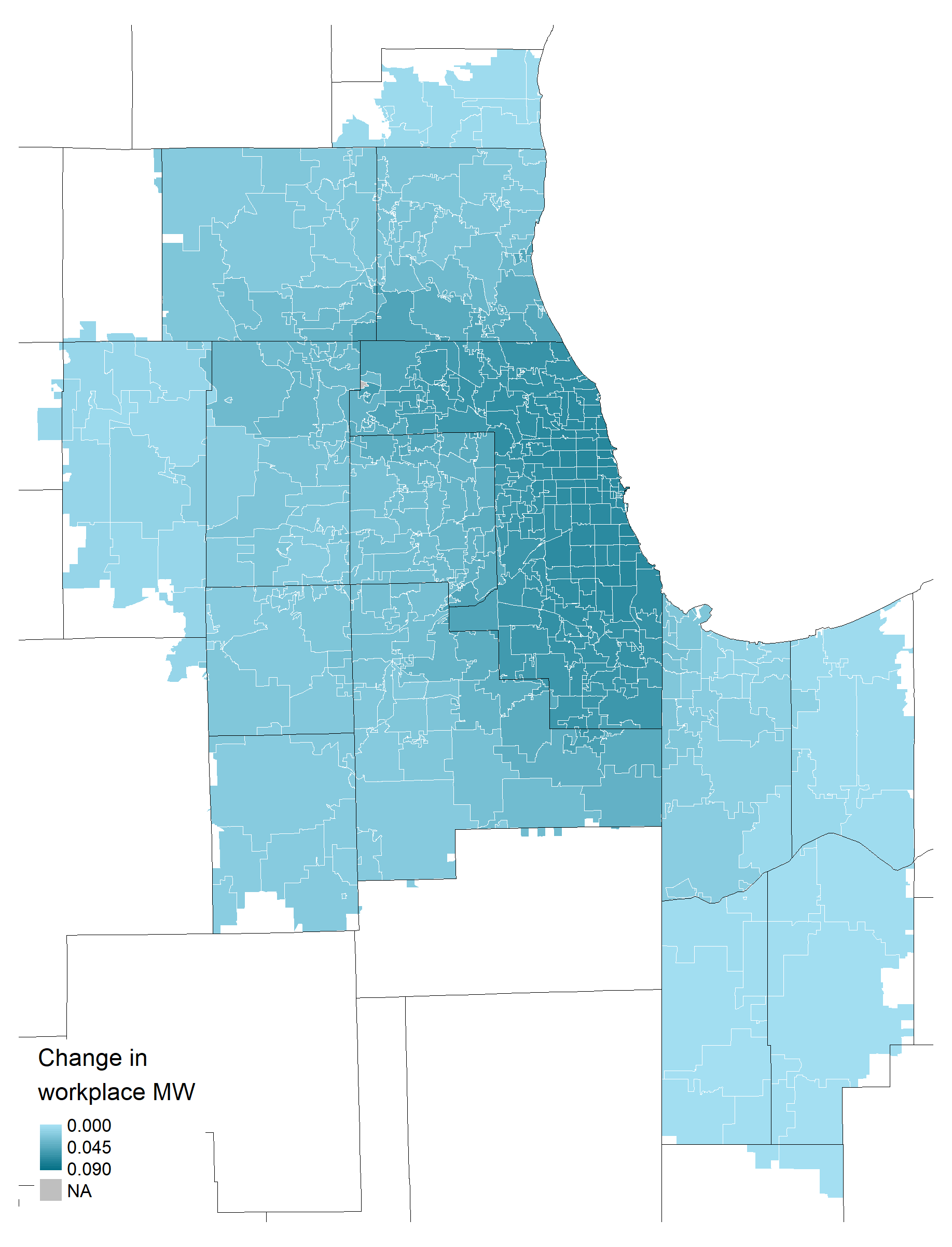}
    \end{subfigure}

    \begin{minipage}{.95\textwidth} \footnotesize
        \vspace{3mm}
        Notes: 
        Data are from the MW panel described in
        Section \ref{sec:data_mw_panel} and from LODES.
        The figures show changes in the MW measures in July 2019 in the 
        metropolitan area of Chicago.
        The figure on the left shows the change in the residence MW.
        The figure on the right shows the change in the workplace MW. 
        The residence MW is defined as the log of the statutory MW of the given
        ZIP code.
        The workplace MW is defined as the weighted average of the log of the
        statutory MW levels in workplace locations of a ZIP code's residents,
        where weights are given by commuting shares.
        Smaller colored polygons correspond to ZIP codes, and larger polygons 
        correspond to counties.
    \end{minipage}
\end{figure}

\clearpage
\begin{figure}[h!]
    \centering
    \caption{Spatial distribution of minimum wage changes between January 2010 
             and June 2020, mainland US}
    \label{fig:map_mw_perc_changes}

    \includegraphics[width = 1\textwidth]
        {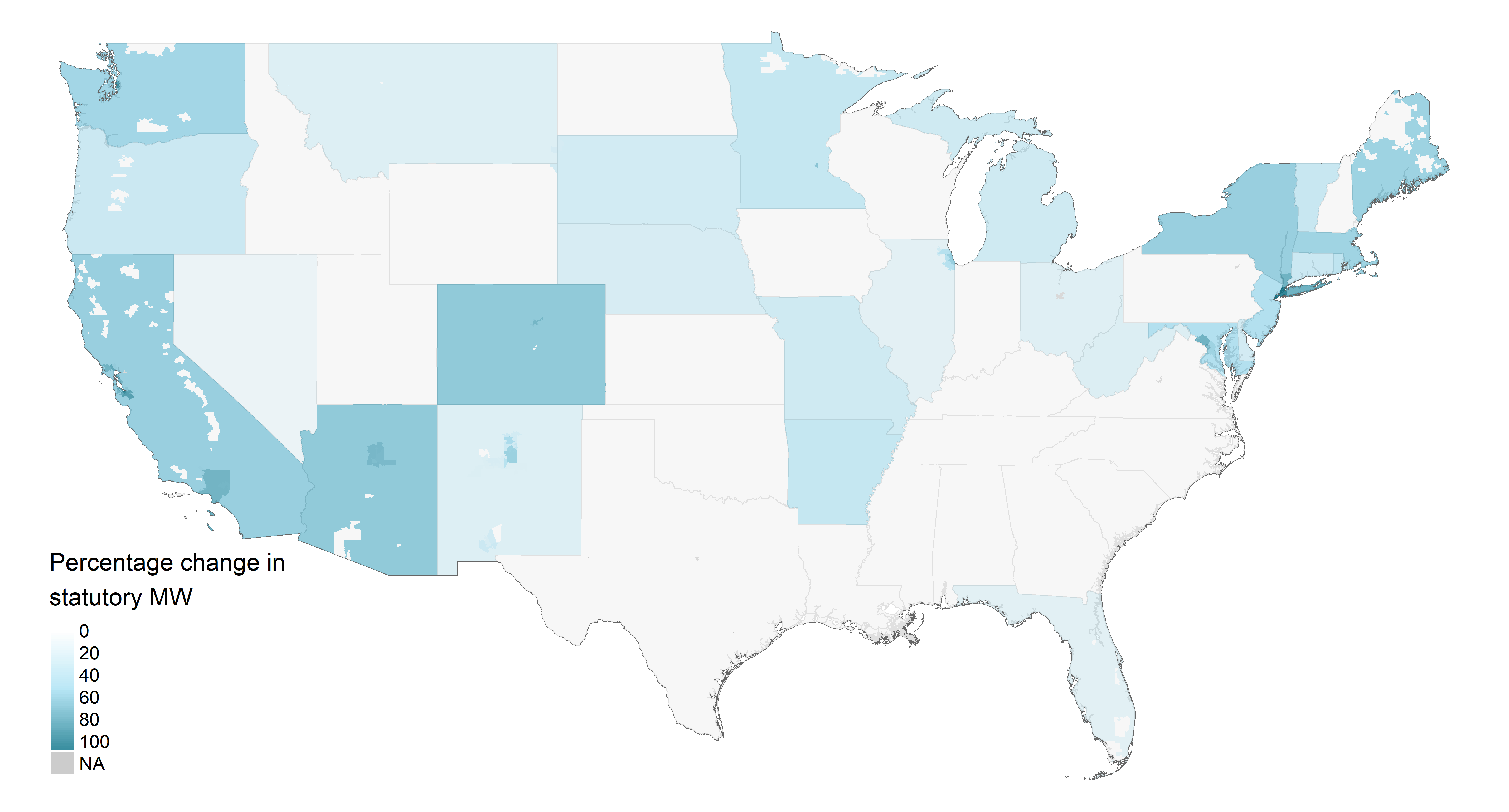}

    \begin{minipage}{.95\textwidth} \footnotesize
        \vspace{3mm}
        Notes: 
        Data are from the MW panel described in
        Section \ref{sec:data_mw_panel}.
        The figure maps the percentage change in the statutory MW
        level in each ZIP code from January 2010 to June 2020.
    \end{minipage}
\end{figure}

\clearpage
\begin{figure}[h!]
    \centering
    \caption{Probability of being a renter by household income decile,
             full sample}
    \label{fig:ahs_pr_renters}

    \includegraphics[width = 0.75\textwidth]{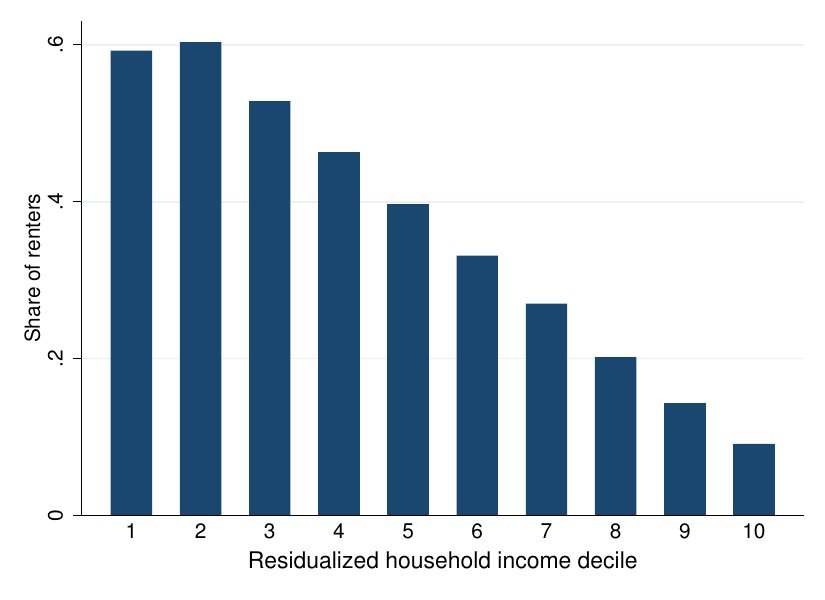}

    \begin{minipage}{.95\textwidth} \footnotesize
        \vspace{3mm}
        Notes: Data are from the 2011 and 2013 American Housing Surveys.
        The figure shows the probability of a household living in a
        rented unit by household income. 
        We construct the figure as follows.
        First, we residualize an indicator for being a renter and 
        household income by SMSA indicators, the closest analogue of CBSAs 
        available in the data.
        Second, we construct deciles of the residualized household
        income variable.
        Finally, we take the average of the residualized renter 
        indicator within each decile.
        We exclude from the calculation non-conventional housing units, 
        such as mobile homes, hotels, and others.
    \end{minipage}
\end{figure}

\clearpage

\begin{figure}[h!]
    \centering
    \caption{Estimates of the effect of the minimum wage on rents, baseline sample
             including leads and lags}
    \label{fig:dynamic_workplace}

    \begin{subfigure}{.65\textwidth}
        \caption{Estimates in first differences}
        \includegraphics[width = 1\textwidth]
            {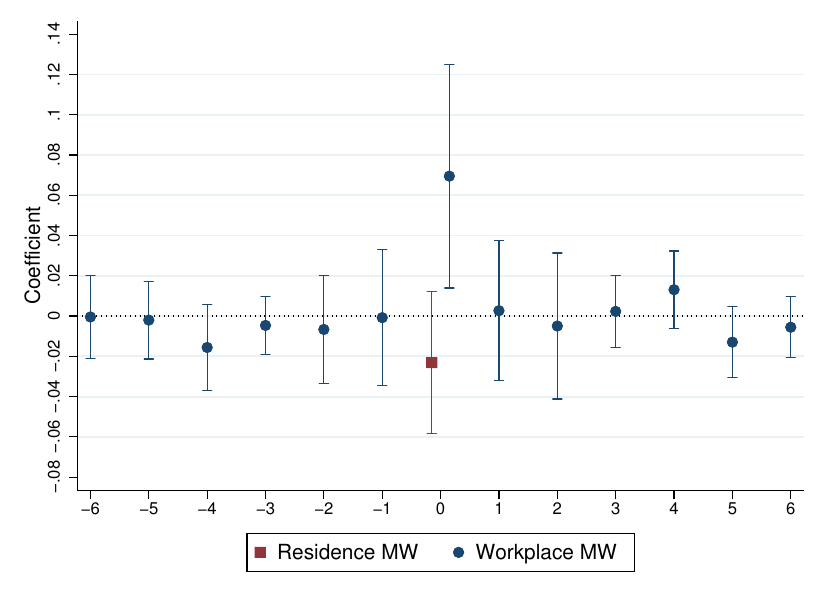}
    \end{subfigure}\\
    \begin{subfigure}{.65\textwidth}
        \caption{Implied path in levels}
        \includegraphics[width = 1\textwidth]
            {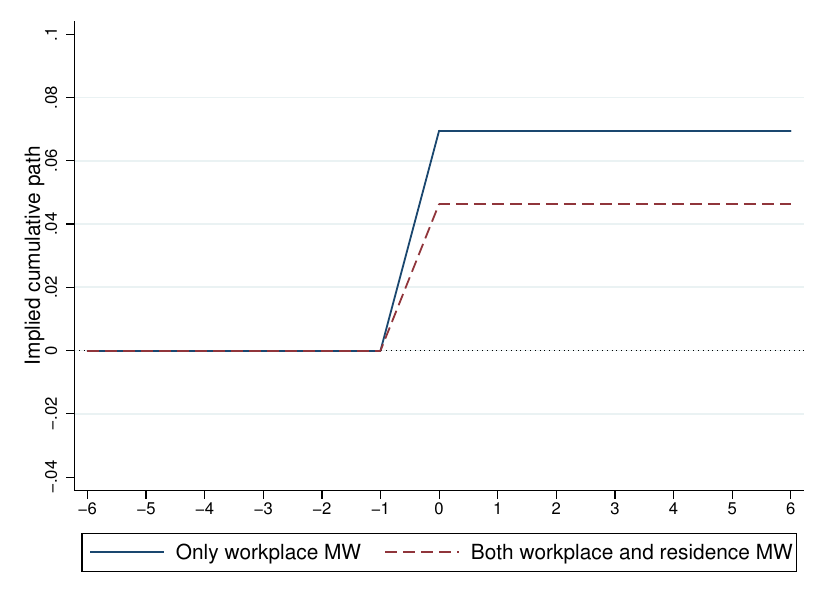}
    \end{subfigure}

    \begin{minipage}{.95\textwidth} \footnotesize
        \vspace{3mm}
        Notes:
        Data are from the baseline estimation sample described in Section 
        \ref{sec:data_final_panel}.
        The top panel shows coefficients from regressions of the change in 
        log of rents per square foot on leads and lags of the change in the 
        workplace MW and the change in the residence MW.
        The bottom panel shows the implied paths in levels given the estimated 
        coefficients assuming pre- and post-coefficients are equal to zero.
        The regression includes time-period fixed effects and economic controls 
        that vary at the county by month and county by quarter levels.
        The measure of rents per square foot correspond to the Single Family, 
        Condominium and Cooperative houses from Zillow.
        The residence MW is defined as the log statutory MW in the same ZIP code.
        The workplace MW is defined as the log statutory MW where the average 
        resident of the ZIP code works, constructed using LODES 
        origin-destination data.
        Economic controls from the QCEW include the change of the following 
        variables: the log of the average wage, the log of employment, and 
        the log of the establishment count for the sectors 
        ``Information,'' ``Financial activities,'' and ``Professional and 
        business services.''
        95\% pointwise confidence intervals are obtained from standard errors 
        clustered at the state level.
    \end{minipage}
\end{figure}

\clearpage
\begin{figure}[h!]
    \centering
    \caption{Estimated shares pocketed by landlords under counterfactual MW policies}
    \label{fig:cf_hist_shares}

    \begin{subfigure}{0.75\textwidth} \centering
        \caption{Increase in federal MW to \$9, urban ZIP codes}
        \includegraphics[width = .85\textwidth]{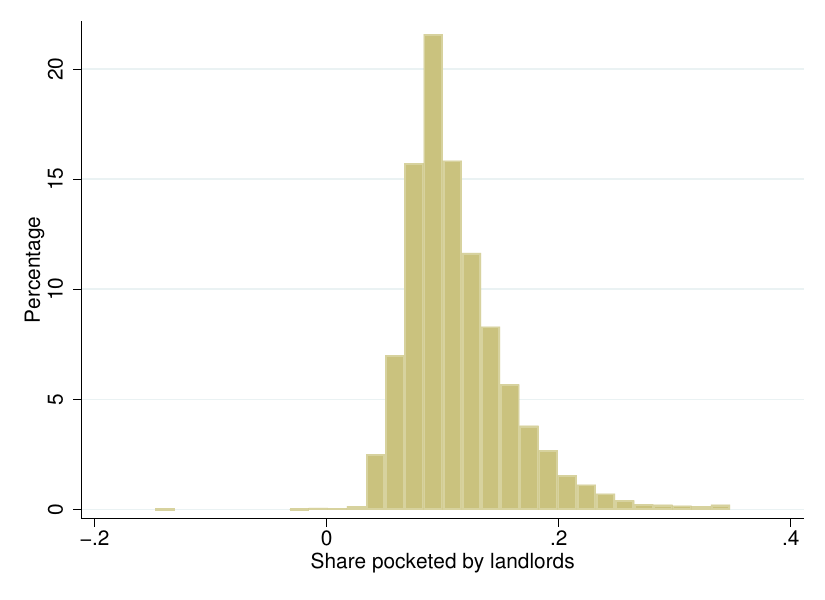}
    \end{subfigure}

    \begin{subfigure}{0.75\textwidth} \centering
        \caption{Increase in Chicago MW to \$14, Chicago-Naperville-Elgin CBSA}
        \includegraphics[width = .85\textwidth]{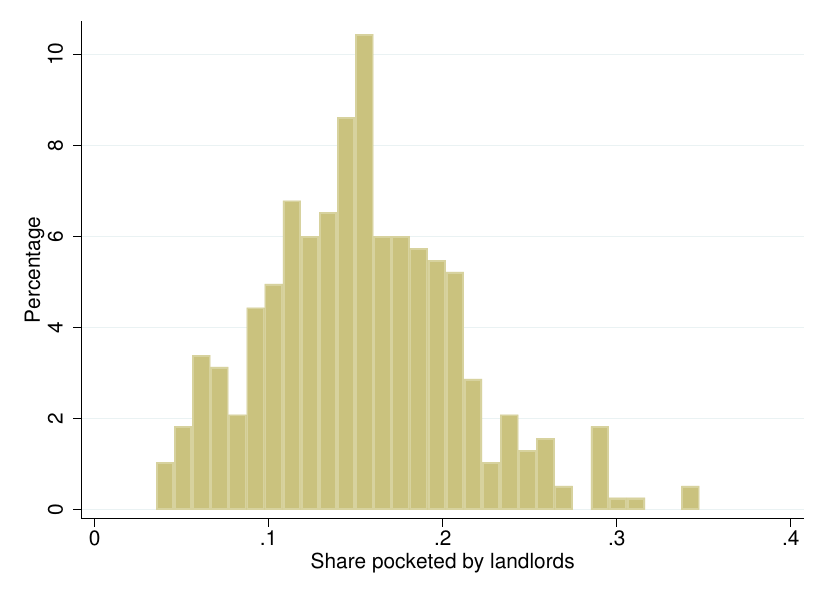}
    \end{subfigure}

    \begin{minipage}{.95\textwidth} \footnotesize
        \vspace{3mm}
        Notes:
        Data are from the MW panel described in section \ref{sec:data_mw_panel} 
        and from LODES.
        The figures show the distribution of the estimated ZIP-code specific
        shares of additional income pocketed by landlords (``share pocketed'')
        under different counterfactual policies.
        Panel A is based on a counterfactual increase to \$9 in the 
        federal MW in January 2020, holding constant other MW policies in their 
        December 2019 levels.
        Panel B is based on a counterfactual increase from \$13 to \$14 in the 
        Chicago City MW, also holding constant other MW policies.
        The unit of observation is the ZIP code.
        Panel A includes ZIP codes located in urban CBSAs where the estimated 
        increase in income was higher than 0.1.
        Panel B includes ZIP codes in the Chicago-Naperville-Elgin CBSA.
        The share pocketed is defined as the ratio between the percent increase 
        in rents and the percent increase in total wages multiplied by the share 
        of housing expenditure in the ZIP code.
        To estimate it we follow the procedure described in Section 
        \ref{sec:counterfactual}, assuming the following parameter values: 
        $\beta = \betaCf$, $\gamma = \gammaCf$, and $\varepsilon = 0.1$.
    \end{minipage}
\end{figure}

\clearpage
\begin{figure}[h!]
    \centering
    \caption{Estimated shares pocketed by landlords under counterfactual MW policies, 
             Chicago-Naperville-Elgin CBSA}
    \label{fig:map_chicago_cf_shares}

    \begin{subfigure}{.5\textwidth}
        \caption{Increase in federal MW\\from \$7.25 to \$9}
        \includegraphics[width = 1\textwidth]
            {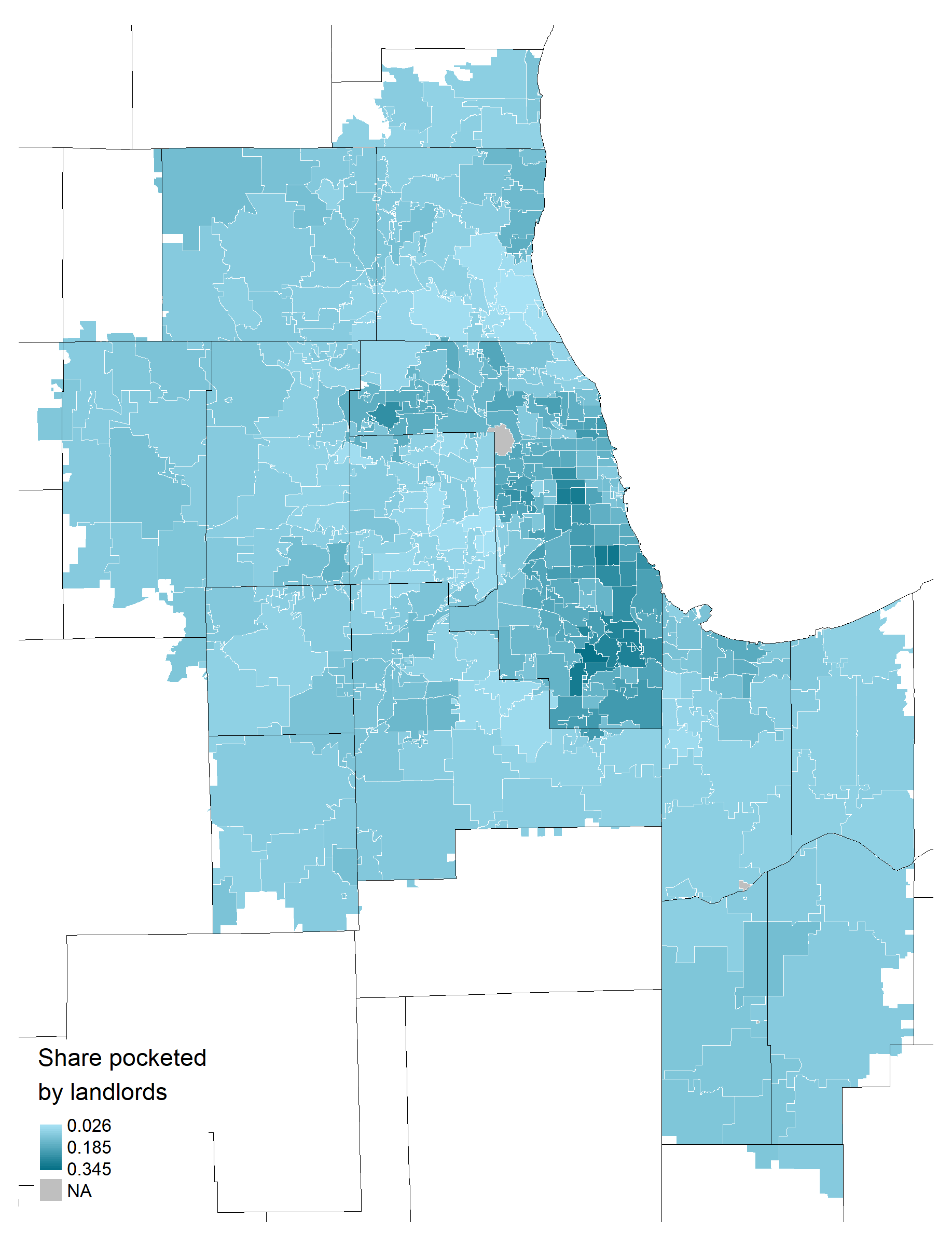}
    \end{subfigure}%
    \begin{subfigure}{.5\textwidth}
        \caption{Increase in Chicago MW\\from \$13 to \$14}
        \includegraphics[width = 1\textwidth]
            {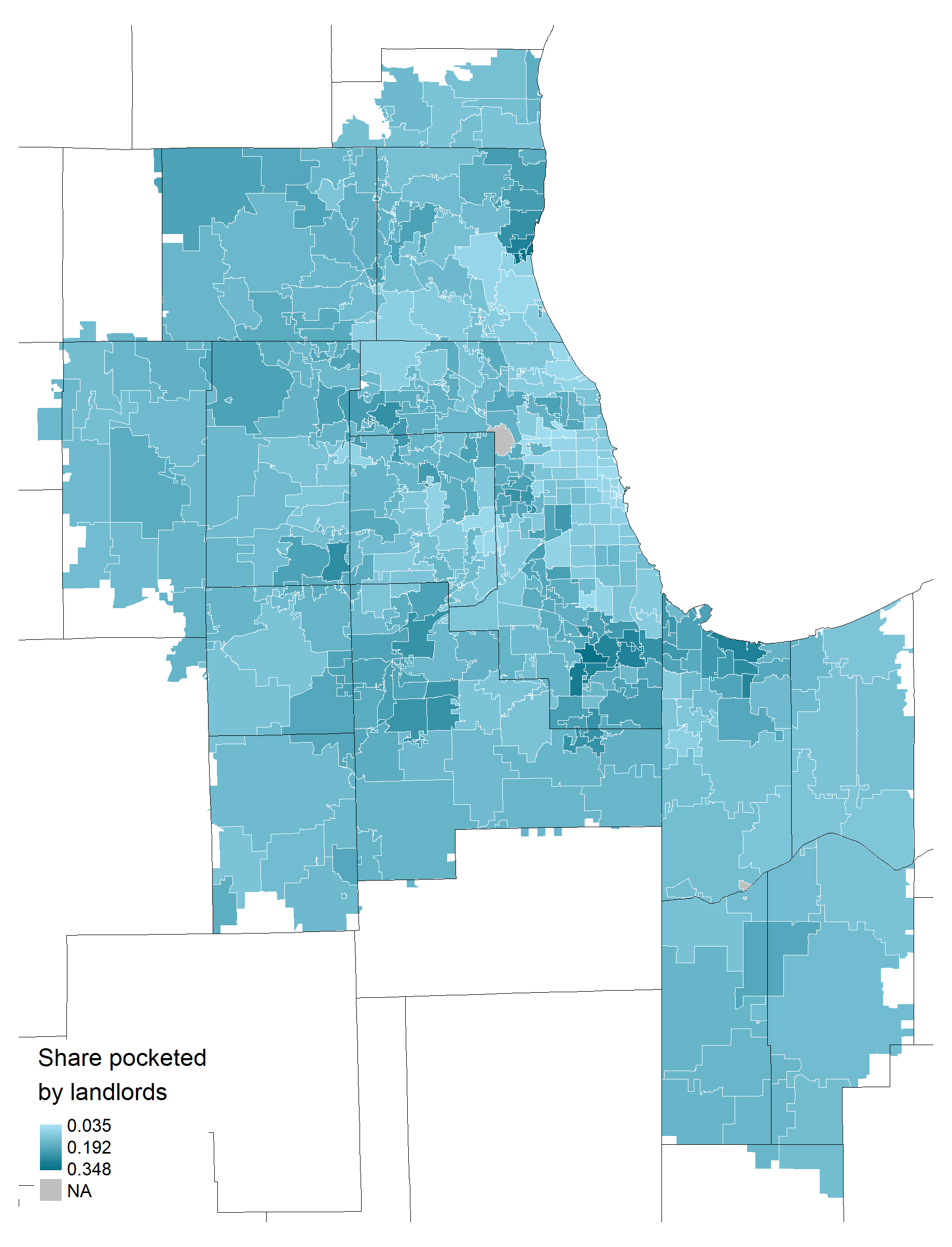}
    \end{subfigure}\\

    \begin{minipage}{.95\textwidth} \footnotesize
        \vspace{3mm}
        Notes: 
        Data are from the MW panel described in Section \ref{sec:data_mw_panel} 
        and from LODES.
        The figures map the estimated ZIP code-specific shares of additional 
        income generated by the MW that are pocketed by landlords, 
        for different counterfactual MW policies.
        Panel A is based on a counterfactual increase from \$7.25 to \$9 in the 
        federal MW in January 2020, holding constant other MW policies in their 
        December 2019 levels.
        Panel B is based on a counterfactual increase from \$13 to \$14 in the 
        Chicago City MW, also holding constant other MW policies.
        The share pocketed is defined as the ratio between the percent increase 
        in rents and the percent increase in total wages multiplied by the share 
        of housing expenditure in the ZIP code.
        To estimate it we follow the procedure described in Section 
        \ref{sec:counterfactual}, assuming the following parameter values: 
        $\beta = \betaCf$, $\gamma = \gammaCf$, and $\varepsilon = 0.1$.
    \end{minipage}
\end{figure}

\clearpage
\begin{figure}[h!]
    \centering
    \caption{Share pocketed by landlords by intensity of treatment, 
             urban ZIP codes under federal MW increase to \$9}
    \label{fig:rho_by_decile_MW_gap}

	\includegraphics[width = 0.75\textwidth]{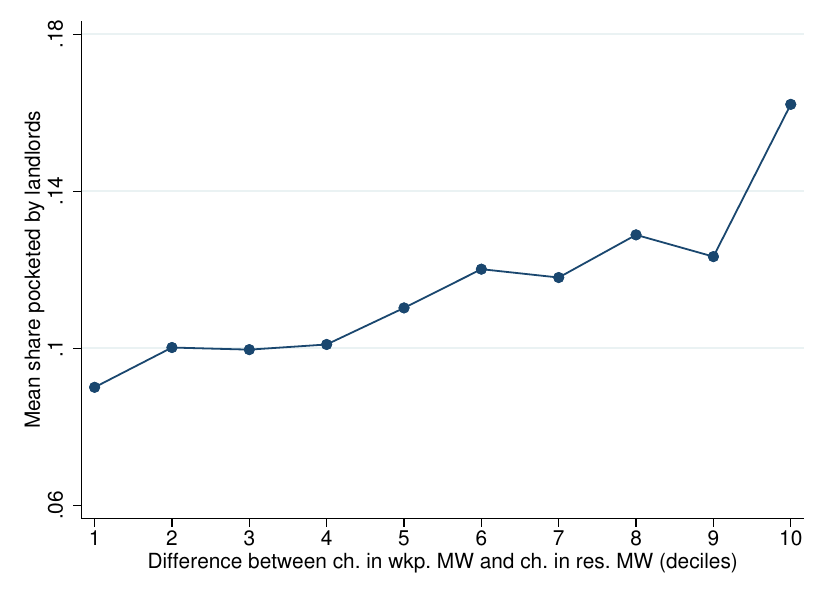}

    \begin{minipage}{.95\textwidth} \footnotesize
        \vspace{3mm}
        Notes:
        Data are from the MW panel described in Section \ref{sec:data_mw_panel} 
        and from LODES.
        The figure shows the average estimate of the shares of additional
        income pocketed by landlords $\rho_i$ for each decile of the 
        difference $\Delta \mw_i^{\wkp} - \Delta \mw_i^{\res}$.
        Estimates for lower deciles correspond to ZIP codes where the increase 
        in residence MW was relatively large.
        The unit of observation is the urban ZIP code, where we define a ZIP code 
        as urban if it belongs to a CBSA with at least 80\% of its population 
        classified as urban by the 2010 Census.
        The share pocketed is defined as the ratio between the percent increase 
        in rents and the percent increase in total wages multiplied by the share 
        of housing expenditure in the ZIP code.
        To estimate it we follow the procedure described in Section 
        \ref{sec:counterfactual}, assuming the following parameter values: 
        $\beta = \betaCf$, $\gamma = \gammaCf$, and $\varepsilon = \epsilonCf$.
        The figure excludes ZIP codes located in the $\cbsaLowIncFedNine$ CBAs for 
        which the average estimated change in log total wages was below 0.1.
    \end{minipage}
\end{figure}

\clearpage

\begin{landscape}
\begin{table}[hbt!] \centering
    \caption{Descriptive statistics of different samples of ZIP codes}
    \label{tab:stats_zip_samples}
    \begin{tabular}{@{}lcccc@{}}
        \toprule
                                                         & \multicolumn{1}{p{2cm}}{\centering All\\ZIP codes}
                                                         & \multicolumn{1}{p{2cm}}{\centering Urban\\ZIP codes}
                                                         & \multicolumn{1}{p{2cm}}{\centering Zillow\\sample}
                                                         & \multicolumn{1}{p{2cm}}{\centering Baseline\\sample}  \\ \midrule
        \textit{Panel A: 2010 Census}                        &       &       &        &             \\
        $\quad$Total population (thousands)                  & 308,129.6 & 204,585.8 & 111,709.2  & 51,181.1     \\
        $\quad$Total number of households (thousands)        & 131,396.0 & 83,919.6 & 47,424.5  & 21,628.7     \\
        $\quad$Mean population                               & 9,681.7 & 18,018.8 & 33,687.9  & 38,052.9     \\
        $\quad$Mean number of households                     & 4,128.6 & 7,391.2 & 14,301.7  & 16,080.8     \\
        $\quad$Share of urban population                     & 0.391    & 0.725   & 0.960   & 0.972          \\
        $\quad$Share of renter-occupied households           & 0.224    & 0.283   & 0.340   & 0.333          \\
        $\quad$Share of black population                     & 0.075    & 0.100   & 0.153   & 0.161          \\
        $\quad$Share of white population                     & 0.834    & 0.765   & 0.679   & 0.667          \\
        \textit{Panel B: 2014 IRS}                           &       &       &        &              \\
        $\quad$Share of households with wage income          & 0.820    & 0.830   & 0.836   & 0.843          \\
        $\quad$Share of households with business income      & 0.152    & 0.161   & 0.176   & 0.182          \\
        $\quad$Mean AGI per household (thousand \$)          & 60.4 & 76.3 & 83.0 & 83.9     \\
        $\quad$Mean wage income per household (thousand \$)  & 39.7 & 49.8 & 53.2 & 55.2     \\
        \textit{Panel C: 2014 SAFMR}                         &       &       &        &               \\
        $\quad$Mean 40th perc.\ 2BR apt.\ rent (\$)          & 936.17   & 1,028.33  & 1,087.42  & 1,131.95          \\
        \textit{Panel D: Minimum wage}                       &       &       &        &              \\
        $\quad$Min.\ in Dec.\ 2014 (\$)                      & 7.25   & 7.25  & 7.25  & 7.25         \\
        $\quad$Mean in Dec.\ 2014 (\$)                       & 7.74   & 7.97  & 7.94  & 7.87         \\
        $\quad$Max.\ in Dec.\ 2014 (\$)                      & 15.00   & 15.00  & 11.27  & 10.74         \\
        $\quad$Min.\ in Dec.\ 2019 (\$)                      & 7.25   & 7.25  & 7.25  & 7.25         \\
        $\quad$Mean in Dec.\ 2019 (\$)                       & 8.85   & 9.52  & 9.40  & 9.23         \\
        $\quad$Max.\ in Feb.\ 2019 (\$)                      & 16.09   & 16.09  & 16.00  & 16.00         \\
        \textit{Panel E: Geographies}                        &       &       &        &               \\
        $\quad$Number of ZIP codes                           & 31,826  & 11,354 & 3,316 & 1,345             \\
        $\quad$Number of counties                            & 3,135  & 605 & 487 & 244             \\
        $\quad$Number of states                              & 51  & 47 & 49 & 41             \\ \bottomrule
    \end{tabular}

    \begin{minipage}{.95\linewidth} \footnotesize
        \vspace{2mm}
        Notes: The table shows characteristics of different samples of ZIP codes.
        The first column uses all ZIP codes that are matched to a census block
        following Online Appendix \ref{sec:blocks_to_uspszip}.
        The second column restricts to ZIP codes located in urban CBSAs, where 
        we define a CBSA as urban if at least 80\% of its population was 
        classified as urban by the 2010 US Census.
        The third column uses ZIP codes with valid SFCC rents per square
        foot in any month.
        The fourth column uses our baseline estimation sample, as described in
        Section \ref{sec:data_final_panel}.
        Panel A uses data from the 2010 US Census \parencite{CensusDecennial}.
        Panel B uses data from the 2014 IRS ZIP code-level aggregates
        \parencite{IRS}. AGI is an acronym for Average Gross Income.
        Panel C uses data from the 2014 Small-Area Fair Market 
        Rents \parencite[SAFMR;][]{hudSAFMR}.
        Panel D uses data from the panel of MW levels
        described in Section \ref{sec:data_mw_panel}.
        Panel E counts the number of different geographies present in each set
        of ZIP codes, assigned as explained in Section \ref{sec:data_other}.
    \end{minipage}
\end{table}
\end{landscape}

\clearpage
\begin{table}[hbt!] \centering
    \caption{Estimates of the effect of the minimum wage on rents, baseline sample}
    \label{tab:static}
    \begin{tabular}{l*{4}{c}}
        \toprule
        & \multicolumn{1}{c}{\shortstack{Change wkp.\\MW $\Delta\mw_{it}^{\wkp}$}}
            & \multicolumn{3}{c}{\shortstack{Change log rents\\$\Delta r_{it}$}} \\ \cmidrule(lr){2-2}\cmidrule(lr){3-5}
                                           & (1)   & (2)   & (3)   & (4)            \\ \midrule
        Change residence MW 
                  $\Delta\mw_{it}^{\res}$  &  0.8627  &  0.0372  &       &  -0.0219     \\
                                           & (0.0374) & (0.0145) &       & (0.0175)    \\
        Change workplace MW 
                   $\Delta\mw_{it}^{\wkp}$ &       &       &  0.0449  & 0.0685      \\
                                           &       &       & (0.0156) & (0.0288)    \\ \midrule
        Sum of coefficients                &       &       &       &  0.0466     \\
                                           &       &       &       & (0.0158)    \\ \midrule
        Economic controls                 &  Yes  & Yes   & Yes   & Yes      \\
        P-value equality                   &       &       &       & 0.0514      \\
        R-squared                          &  0.9444  &  0.0212  &  0.0213  & 0.0213      \\
        Observations                       & 80,241  & 80,241  & 80,241  & 80,241     \\\bottomrule
    \end{tabular}

    \begin{minipage}{.95\textwidth} \footnotesize
        \vspace{2mm}
        Notes:
        Data are from the baseline estimation sample described in Section 
        \ref{sec:data_final_panel}.
        Column (1) shows the results of a regression of the workplace MW measure
        on the residence MW measure.
        Column (2) through (4) show the results of regressions of the log of 
        median rents per square foot on our MW-based measures.
        All regressions include time-period fixed effects and economic controls 
        that vary at the county by month and county by quarter levels.
        The measure of rents per square foot corresponds to the Single Family, 
        Condominium and Cooperative houses from Zillow.
        The residence MW is defined as the log statutory MW in the same ZIP code.
        The workplace MW is defined as the statutory MW where the average 
        resident of the ZIP code works, constructed using LODES 
        origin-destination data.
        Economic controls from the QCEW include the log of the average wage, 
        the log of employment, and the log of the establishment count from the 
        sectors ``Information'', ``Financial activities'', and ``Professional
        and business services''.
        Standard errors in parentheses are clustered at the state level.
    \end{minipage}
\end{table}

\clearpage
\begin{landscape}
\begin{table}[ht!]
    \centering
    \caption{Robustness of estimates of the effect of the minimum wage on rents,
             baseline sample}
    \label{tab:robustness}
        
    \begin{tabular}{@{}lccccc@{}}
        \toprule
                                                         & \multicolumn{1}{c}{\shortstack{Change wkp.\ MW\\$\Delta\mw_{it}^{\wkp}$}} 
                                                         & \multicolumn{3}{c}{\shortstack{Change log rents\\$\Delta r_{it}$}} 
                                                         &                                                                           \\ \cmidrule(lr){2-2}\cmidrule(lr){3-5}
                                                             & \multicolumn{1}{c}{\shortstack{Change res. MW\\$\Delta\mw_{it}^{\res}$}}
                                                             & \multicolumn{1}{c}{\shortstack{Change res. MW\\$\Delta\mw_{it}^{\res}$}}
                                                             & \multicolumn{1}{c}{\shortstack{Change wkp.\ MW\\$\Delta\mw_{it}^{\wkp}$}} 
                                                             & \shortstack{Sum of\\coefficients}
                                                             & N                                                                      \\ \midrule
        $\quad$(a) Baseline                                  &  0.8627  &  -0.0219  &  0.0685  &  0.0466  & 80,241 \\
                                                             & (0.0374) & (0.0175) & (0.0288) & (0.0158) &      \\
        \textit{Panel A: Vary specification}                 &       &       &       &       &      \\
        $\quad$(b) No controls                               &  0.8632  &  -0.0200  &  0.0668  &  0.0468  & 80,692 \\
                                                             & (0.0374) & (0.0180) & (0.0291) & (0.0162) &      \\
        $\quad$(c) County by time FE                         &  0.2857  &  -0.0606  &  0.1559  &  0.0953  & 75,593 \\
                                                             & (0.0399) & (0.0511) & (0.1116) & (0.0811) &      \\
        $\quad$(d) CBSA by time FE                           &  0.5081  &  -0.0358  &  0.0944  &  0.0587  & 78,293 \\
                                                             & (0.0387) & (0.0295) & (0.0610) & (0.0343) &      \\
        $\quad$(e) State by time FE                          &  0.5405  &  0.0142  &  -0.0076  &  0.0066  & 80,393 \\
                                                             & (0.0629) & (0.0239) & (0.0526) & (0.0320) &      \\
        $\quad$(f) ZIP code-specific linear trend            &  0.8596  &  -0.0217  &  0.0711  &  0.0494  & 80,241 \\
                                                             & (0.0390) & (0.0167) & (0.0264) & (0.0132) &      \\
        \textit{Panel B: Vary workplace MW measure}          &       &       &       &       &      \\
        $\quad$(g) 2014 commuting shares                     &  0.8625  &  -0.0199  &  0.0662  &  0.0463  & 80,241 \\
                                                             & (0.0377) & (0.0193) & (0.0299) & (0.0158) &      \\
        $\quad$(h) 2018 commuting shares                     &  0.8626  &  -0.0217  &  0.0683  &  0.0466  & 80,241 \\
                                                             & (0.0372) & (0.0177) & (0.0292) & (0.0159) &      \\
        $\quad$(i) Time-varying commuting shares             &  0.8806  &  -0.0292  &  0.0792  &  0.0500  & 64,236 \\
                                                             & (0.0372) & (0.0207) & (0.0309) & (0.0166) &      \\
        $\quad$(j) 2017 commuting shares, low-income workers &  0.8566  &  -0.0348  &  0.0841  &  0.0493  & 80,241 \\
                                                             & (0.0371) & (0.0221) & (0.0341) & (0.0160) &      \\
        $\quad$(k) 2017 commuting shares, young workers      &  0.8569  &  -0.0332  &  0.0822  &  0.0490  & 80,241 \\
                                                             & (0.0390) & (0.0180) & (0.0294) & (0.0156) &      \\ \bottomrule
    \end{tabular}

    \begin{minipage}{.95\linewidth} \footnotesize
        \vspace{2mm}
        Notes: 
        Data are from the baseline estimation sample described in Section 
        \ref{sec:data_final_panel}.
        Each row of the table shows two estimations on the same sample of ZIP 
        codes and months.
        The first column shows the results of a regression of the change in the 
        workplace MW on the change in the residence MW.
        The second through fourth columns show the results of a regression of 
        the change in log rents on the change in the residence MW and the 
        workplace MW, with the fifth column showing the sum of the coefficients 
        on the MW measures.
        The rents variable corresponds to the median rent per square foot in
        the SFCC category in Zillow.
        Row (a) repeats the results of Table \ref{tab:static}, including fixed
        effects for each year month and economic controls from the QCEW.
        Specifications in Panel A vary the set of fixed effects included in the 
        regression relative to row (a).
        Row (f) includes ZIP code fixed effects in the first-differenced model,
        which in the level model can be interpreted as a ZIP-code specific 
        linear trend.
        Specifications in Panel B vary the commuting shares used to construct 
        the workplace MW measure relative to row (a).
        Row (i) uses data from 2015 to 2018 only.
        Standard errors in parentheses are clustered at the state level.
    \end{minipage}
\end{table}
\end{landscape}

\clearpage
\begin{table}[hbt!] \centering
    \caption{Heterogeneity of estimates of the effect of the minimum wage on rents, 
             baseline sample}
    \label{tab:heterogeneity}
    \begin{tabular}{@{}lcccc@{}}
        \toprule
            & \multicolumn{4}{c}{Change log rents $\Delta r_{it}$}                                                \\ \cmidrule(l){2-5} 
            & (1) & (2) & (3) & (4)                                                               \\ \midrule
        Change res.\ MW $\Delta\mw_{it}^{\res}$                &  -0.0219   &  -0.0483  &  -0.0377   &  -0.0158   \\
                                                               & (0.0175)  & (0.0196) & (0.0254)  & (0.0150)  \\
        Change res.\ MW $\times$ Std.\ share of MW workers     &        &  -0.0801  &        &        \\
                                                               &        & (0.0392) &        &        \\
        Change res.\ MW $\times$ Std.\ median household income &        &       &  0.0506   &        \\
                                                               &        &       & (0.0266)  &        \\
        Change res.\ MW $\times$ Std.\ share of public housing &        &       &        &  -0.0336   \\
                                                               &        &       &        & (0.0330)  \\
        Change wkp.\ MW $\Delta\mw_{it}^{\wkp}$                &  0.0685   &  0.0969  &  0.0862   &  0.0645   \\
                                                               & (0.0288)  & (0.0300) & (0.0356)  & (0.0258)  \\
        Change wkp.\ MW $\times$ Std.\ share of MW workers     &        &  0.0841  &        &        \\
                                                               &        & (0.0445) &        &        \\
        Change wkp.\ MW $\times$ Std.\ median household income &        &       &  -0.0608   &        \\
                                                               &        &       & (0.0344)  &        \\
        Change wkp.\ MW $\times$ Std.\ share of public housing &        &       &        &  0.0306   \\
                                                               &        &       &        & (0.0374)  \\ \midrule
        Mean heterogeneity variable                            &        & 0.1497   &  60,457  & 0.0044    \\
        Std.\ dev heterogeneity variable                       &        & 0.0468   &  22,923  & 0.0173    \\ \midrule
        R-squared                                              &  0.0213   &  0.0212  &  0.0211   &  0.0213   \\
        Observations                                           &  80,241  &  75,329 &  77,197  &  79,701  \\ \bottomrule
    \end{tabular}

    \begin{minipage}{.95\linewidth} \footnotesize
        \vspace{2mm}
        Notes: 
        Data are from the baseline estimation sample described in Section 
        \ref{sec:data_final_panel}.
        In all columns we report the results of regressions of the log of median 
        rents per square foot on our MW-based measures.
        Column (1) reproduces estimates our baseline results from Table 
        \ref{tab:static}.
        In column (2) the changes in residence and workplace MW levels are 
        interacted with the standardized share of MW workers residing in 
        the ZIP code, estimated as in Online Appendix 
        \ref{sec:assigning_mw_levels}.
        In column (3) they are interacted with standardized median household 
        income from the ACS \parencite{CensusACS}.
        In column (4) they are interacted with the standardized share of 
        public housing units.
        To construct this share we use total units of public housing in 2017 
        \parencite{hudHousing}, and the number of households in the 2010
        US Census \parencite{CensusDecennial}.
        Standard errors in parentheses are clustered at the state level.
    \end{minipage}
\end{table}

\clearpage
\begin{table}[hbt!]
    \centering
    \caption{Median effect of counterfactual minimum wage policies by treatment status}
    \label{tab:counterfactuals}

    \begin{tabular}{@{}lccccc@{}}
        \multicolumn{6}{c}{Panel A: Increase in federal MW to \$9, urban ZIP codes} \\
        \addlinespace[0.5em]
        \toprule
                         & N & \shortstack{Change in\\res.\ MW}
                             & \shortstack{Change in\\wkp.\ MW}
                             & \shortstack{Share of\\housing exp.}  
                             & \shortstack{Share\\Pocketed}                              \\ \midrule
        Effect in ZIP codes with...          &      &        &        &       &          \\
        $\quad$previous MW $\leq\$9\quad$    & 5,741 & 0.216 & 0.204  & 0.214 & 0.097    \\
        $\quad$previous MW $>\$9\quad$       & 1,043 & 0.000 & 0.013  & 0.232 & 0.159    \\ 
        Total incidence                      & 6,784 &       &        &       & 0.093    \\ \bottomrule
        \addlinespace[1.2em]
        \multicolumn{6}{c}{Panel B: Increase in Chicago MW to \$14, Chicago-Naperville-Elgin CBSA} \\
        \addlinespace[0.5em]
        \toprule
                         & N & \shortstack{Change in\\res.\ MW}
                             & \shortstack{Change in\\wkp.\ MW}
                             & \shortstack{Share of\\housing exp.}  
                             & \shortstack{Share\\Pocketed}                              \\ \midrule
        Effect in ZIP codes with...          &     &        &        &       &           \\
        $\quad$previous MW $\geq\$13\quad$   & 62  & 0.074  & 0.046  & 0.252 &  0.092    \\
        $\quad$previous MW $<\$13\quad$      & 323 &  0.000 & 0.009  & 0.231 & 0.158     \\ 
        Total incidence                      & 385 &        &        &       & 0.112     \\ \bottomrule
    \end{tabular}
    
    \begin{minipage}{.95\textwidth} \footnotesize
        \vspace{3.5mm}
        Notes: 
        Data are from LODES origin-destination statistics, 
        Small Area Fair Market Rents, 
        IRS ZIP code aggregate statistics, and 
        the MW panel described in Section \ref{sec:data_mw_panel}.
        The table shows the median of the estimated ZIP code-specific shares of 
        the additional income pocketed by landlords (``Share pocketed''), 
        defined as the ratio of the increase in income to the increase in rents,
        for different groups of ZIP codes.
        Panel A is based on a counterfactual increase from \$7.25 to \$9 in the 
        federal MW in January 2020, holding constant other MW policies in their 
        December 2019 levels.
        Panel B is based on a counterfactual increase from \$13 to \$14 in the 
        Chicago City MW, also holding constant other MW policies.
        In the last row of each panel, we report the total incidence of the 
        counterfactual policy.
        We also report the median change in residence MW, change in workplace MW,
        and share of ZIP code-specific housing expenditure ``Share of housing 
        exp.'') defined in Online Appendix \ref{sec:measure_housing_expenditure}.
        Increases in income and rents are simulated following the procedure 
        described in Section \ref{sec:counterfactual}.
        We assume the following parameter values: 
        $\beta = \betaCf$, $\gamma = \gammaCf$, and $\varepsilon = 0.1$.
        Panel A includes urban ZIP codes only and excludes ZIP codes located 
        in $\cbsaLowIncFedNine$ CBAs for which the average estimated change in 
        log total wage income was below 0.1.
        Panel B includes all ZIP codes with valid data in the 
        Chicago-Naperville-Elgin CBSA.
    \end{minipage}
\end{table}


\clearpage

\section*{\center{
    Online Appendix for \\
    ``Minimum Wage as a Place-Based Policy: \\
    Evidence from US Housing Rental Markets''}}

\vspace{5mm}

\appendix

\setcounter{page}{1}

\renewcommand\thetable{\arabic{table}} 
\renewcommand\thefigure{\arabic{figure}}
\renewcommand{\tablename}{Online Appendix Table}
\renewcommand{\figurename}{Online Appendix Figure}
\setcounter{table}{0}
\setcounter{figure}{0}


\section{Model Appendix}

\subsection{Proofs}\label{sec:proofs}

\begin{proof}[Proof of Proposition \ref{prop:comparative_statics}]
    Fully differentiate the market clearing condition with respect to $\ln R_i$ 
    and $\ln \MW_z$ for all $z\in\Z(i)$.
    Using \eqref{eq:equilibrium} and appropriate algebraic manipulations, 
    one can show that
    \begin{equation}\label{eq:diff_equilibrium}
        \Big(\eta_i - \sum_z \pi_{iz} \xi^R_{iz} \Big) d \ln R_i
        = 
        \sum_z \pi_{iz} \left(\xi^P_{iz} \epsilon_{i}^P d \ln \MW_i 
                            + \xi^Y_{iz} \epsilon_{iz}^Y d \ln \MW_z \right) ,
    \end{equation}
    where
    $\xi_{iz}^x = \frac{d h_{iz}}{d x_i} \frac{x_i}{\sum_z \pi_{iz} h_{iz}}$ for
    $x\in\{R,P\}$ is the elasticity of the per-capita housing demand with respect
    to $x$ evaluated at the average per-capita demand of ZIP code $i$,
    $\xi_{iz}^Y = \frac{d h_{iz}}{d Y_z} \frac{Y_z}{\sum_z \pi_{iz} h_{iz}}$ 
    represents the analogous elasticity with respect to income $Y$ from each 
    workplace $z$,
    $\epsilon_{i}^P = \frac{d P_i}{d \MW_i} \frac{\MW_i}{P_i}$ and 
    $\epsilon_{iz}^Y = \frac{d Y_z}{d \MW_z} \frac{\MW_z}{Y_z}$ are
    elasticities of prices and income to the MW, and
    $\eta_i = \frac{d S_i}{d R_i} \frac{R_i}{S_i}$ is the elasticity 
    of housing supply in ZIP code $i$.

    For any $z'\in\Z_0\setminus\{i\}$ the partial effect on rents of the policy
    is given by
    $$
    \frac{d\ln R_i}{d\ln\MW_{z'}} 
      = \left(\eta_i - \sum_z \pi_{iz} \xi^R_{iz}\right)^{-1} 
              \pi_{iz'}\xi^Y_{iz}\epsilon_{iz'}^Y.
    $$
    Because $\eta_i\geq0$ and $\xi^R_{iz} < 0$ for all $z\in\Z(i)$, 
    the first factor is positive.
    From Assumptions \ref{assu:mws} and \ref{assu:housing_demand},
    $\epsilon_{iz}^Y\geq0$ and $\xi^Y_{iz}>0$.
    Therefore, the effect is positive if for $z'$ we have 
    $\frac{d Y_{z'}}{d \MW_{z'}}>0$ (or $\epsilon_{iz'}^Y>0$), 
    and the effect is zero otherwise.

    For ZIP code $i$ the partial effect is given by
    $$
    \frac{d\ln R_i}{d\ln\MW_{i}} 
      = \left(\eta_i - \sum_z \pi_{iz} \xi^R_{iz}\right)^{-1} 
        \left(\epsilon_{i}^P \sum_z \pi_{iz}\xi^P_{iz} 
             + \pi_{ii}\xi^Y_{ii}\epsilon_{ii}^Y \right) .
    $$
    By Assumption \ref{assu:mws} we have that $\epsilon_{i}^P>0$ and that 
    $\epsilon_{ii}^Y\geq0$.
    By Assumption \ref{assu:housing_demand} we have that $\xi^Y_{ii}>0$ and that, 
    for all $z\in\Z(i)$, $\xi^P_{iz}<0$.
    Then, the second parenthesis has an ambiguous sign.
    The third statement of the Proposition follows directly.
\end{proof}

\begin{proof}[Proof of Proposition \ref{prop:representation}]
    Under the stated assumptions we can manipulate \eqref{eq:diff_equilibrium} 
    to write
    $$
    d r_{i} = \beta_i d \mw_{i}^{\wkp} + \gamma_i d \mw^{\text{res}}_i
    $$
    where
    $\beta_i = \frac{\xi_{i}^{Y}\epsilon_i^{Y}}
                     {\eta_{i} - \sum_z \pi_{iz} \xi_{iz}^R} 
              > 0$ and
    $\gamma_i = \frac{\sum_{z\in\Z(i)}\pi_{iz}\xi_{iz}^{P}\epsilon_{i}^{P}}
                    {\eta_{i} - \sum_z \pi_{iz} \xi_{iz}^R} 
              < 0$
    are parameters, which signs can be verified using
    Assumptions \ref{assu:mws} and \ref{assu:housing_demand}.
\end{proof}

\subsection{A dynamic supply and demand model}\label{sec:dyn_theory_model}

The geography is represented by a set of ZIP codes $\Z$.
There is an exogenously given distribution of workers with differing 
residence $i$ and workplace $z$ locations across these ZIP codes which,
as in the main body of the paper, we denote by $\{L_{iz}\}_{i,z\in\Z\times\Z}$.

Let $H_{it}$ be the stock of square feet rented in period $t$.
We assume that all contracts last for one year, so that
the stock is composed of contracts starting at different calendar months. 
We impose that $H_{it} \leq S_i$ for all $t$, 
where $S_i$ denotes the total number of available square feet in $i$.

We further decompose $H_{it}$ as follows.
Let $h_{izt} = h_{iz}\left(R_{it}, \MW_{it}, \MW_{zt}\right)$ be the 
per-capita demand of housing of group $(i,z)$ in period $t$,
which depends on the prevailing MW levels at the time of contract sign-up.
We assume that this demand function is decreasing in the residence MW and 
increasing in the workplace MW, just as in Section \ref{sec:model}.
For simplicity, we omitted the mediation channels of prices and income.
Let $\lambda_{it}$ denote the share of $i$'s residents who started their 
contracts in period $t$.%
\footnote{We assume that these shares do not vary by workplace.}
Then, we can write the stock of contracted square feet during period $t$ as
$$
H_{it} = \sum_{\tau = t - 11}^{t} \lambda_{i\tau} \sum_{z\in\Z} L_{iz} 
h_{iz\tau} (r_{i\tau}, \MW_{i\tau}, \MW_{z\tau})
$$
where 
$r_{i\tau}$ represents rents per square foot in period $\tau$, and
by assumption $\sum_{\tau = t - 11}^{t} \lambda_{i\tau} = 1$.
It is convenient to define the stock of contracted square feet excluding the ones 
that were signed 12 months ago:
$$
\tilde H_{it} = \sum_{\tau = t - 10}^{t} \lambda_{i\tau} \sum_{z\in\Z} L_{iz} 
h_{iz\tau} (r_{i\tau}, \MW_{i\tau}, \MW_{z\tau}) .
$$

We assume that all square feet are homogeneous, and so they have the same price 
in the market.

\subsubsection*{Within-period equilibrium}

We assume the following timing: 
(1) At the beginning of period $t$, a share $\lambda_{it}$ of contracts 
expire (the ones that started on $t-12$);
(2) The square feet from expiring contracts are added to the pool of available 
rental space for new renters;
(3) Renters in $t$ and a flow supply of rental space in $t$ determine equilibrium 
rents $R_{it}$.
Next, we develop each of these steps more formally.

At the start of every period $t$, $\lambda_{i,t-12} \sum_z L_{iz} h_{iz,t-12}$ 
square feet become available for rent from each group of workers $(i,z)$.
The square feet available to rent in period $t$ (vacant) are then
$$
\lambda_{i,t-12} \sum_z L_{iz} h_{iz,t-12} + (S_i - H_{i,t-1}) 
       = S_i - \tilde H_{i,t-1}.
$$
Note that this differs from $S_i - H_{i,t-1}$, the non-rented square feet as 
of $t-1$.
We denote by $V_{it}(R_{it}, \lambda_t)$ the supply of housing, increasing in 
$R_{it}$.
A feasibility constraint is that 
\begin{equation}\label{eq:feasibility}
    V_{it}(R_{it}, \lambda_t) \leq S_i - \tilde H_{i,t-1} .
\end{equation}

The flow demand for new rentals in $t$ by those whose contract expired is given 
by
$$
\lambda_{it} \sum_z L_{iz} h_{izt} \left(R_{it}, \MW_{it}, \MW_{zt} \right) .
$$
This demand arises because a share of the ZIP code's contracts expired. 
Those workers go to the market and may desire to rent more square feet given 
changes in their income.

The market in period $t$ clears if
\begin{equation}\label{eq:equilibrium_dynamic}
    \lambda_t \sum_z L_{iz} h_{iz} \left(R_{it}, \MW_{it}, \MW_{zt} \right) = 
    V_{it}(R_{it}, \lambda_t) .
\end{equation}
Given statutory MW levels in $t$, $\{\MW_{it}\}_{i\in\Z}$,
the share of workers looking to rent in period $t$, $\lambda_t$, and 
a number of vacancies that satisfies \eqref{eq:feasibility}, 
equation \eqref{eq:equilibrium_dynamic} determines equilibrium rents in 
period $t$.
Because the properties of housing demand and housing supply are the same as in 
the model in Section \ref{sec:model},
the equilibrium condition \eqref{eq:equilibrium_dynamic} implies an analogue of 
Propositions \ref{prop:comparative_statics} and \ref{prop:representation}.
The results in Section \ref{sec:model} can be extended to a dynamic setting if
the demand and supply functions in $t$ only depend on MW levels in $t$.

\subsection{A static model with flexible commuting shares}\label{sec:model_endogenous_shares}

In this section we use a simplified version of the partial-equilibrium model 
in Section \ref{sec:model}.
We do so to focus on the implications of relaxing the assumption of fixed 
commuting shares.

Assume that housing demand depends directly on the workplace and residence MW,
abstracting away from the mediation channels of prices and income.
Let commuting shares depend on the MW at the respective workplace location.
Then, the housing market equilibrium can be written as
\begin{equation}\label{eq:equilibrium_simplified}
	L_i \sum_{z\in\Z(i)} \pi_{iz} (\MW_z) h_{iz} \left(R_i, \MW_i, \MW_z\right) = S_i(R_i) .
\end{equation}
Following empirical results in \textcite{PerezPerez2021}, we assume that 
an increase in the workplace MW may decrease the share of workers traveling to 
a given destination.

\begin{assu}[Endogenous commuting shares]\label{assu:commuting_shares}
    Commuting shares of location $i$'s residents are given by 
    $\{\pi_{iz}\left(\MW_z\right)\}_{z\in\Z(i)}$, where $d\pi_{iz}/d\MW_z \leq 0$.
\end{assu}

With this assumption we are ready to prove the following result.

\begin{prop}[Representation with endogenous shares]\label{prop:representation_endog_shares}
    Assume that for all ZIP codes $z\in\Z(i)$ we have
    (a) homogeneous elasticity of per-capita housing demand to the MW 
    at workplace locations,
    $\xi^{\wkp}_{iz}=\xi^{\wkp}_{i}$, and
    (b) homogeneous elasticity of commuting shares to the MW,
    $\zeta_{iz}=\zeta_{i}$.
    Then, approximating $\pi_{iz} h_{iz} / \sum_{z'}\pi_{iz'} h_{iz'}\approx\pi_{iz}$,
    we can write
    $$
    d r_{i} = \left(\beta_i + \zeta_i\right) d \mw_{i}^{\wkp} 
            + \gamma_i d \mw^{\text{res}}_i
    $$
    where 
    $r_{i} = \ln R_i$,
    $\mw_{i}^{\wkp} = \sum_{z\in\Z(i)} \pi_{iz} \ln \MW_z$ 
    is ZIP code $i$'s \textbf{workplace MW}, 
    $\mw^{\res}_i = \ln \MW_i$ 
    is ZIP code $i$'s \textbf{residence MW}, and 
    $\beta_i > 0$, $\zeta_i<0$, and $\gamma_i < 0$ are parameters.
\end{prop}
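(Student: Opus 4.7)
My plan is to mirror the proof of Proposition \ref{prop:representation}: fully differentiate the market-clearing condition \eqref{eq:equilibrium_simplified} with respect to $\ln R_i$, $\ln \MW_i$, and $\{\ln \MW_z\}_{z \in \Z(i)}$, and then simplify the resulting expression under the new homogeneity assumptions. The only new ingredient is that when differentiating the product $\pi_{iz}(\MW_z)\, h_{iz}(R_i,\MW_i,\MW_z)$ with respect to $\ln \MW_z$, the product rule produces two terms: the usual elasticity of per-capita housing demand to the workplace MW, $\xi^{\wkp}_{iz}$, plus the new elasticity of commuting shares, $\zeta_{iz}$. Terms involving $\ln R_i$ and $\ln \MW_i$ are handled exactly as in the baseline proof (the supply side contributes $\eta_i$, demand-side substitution in $R_i$ contributes $-\sum_z \pi_{iz} \xi^R_{iz}$, and the residence MW enters only through the $h_{iz}$'s since $\pi_{iz}$ depends only on $\MW_z$).

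Concretely, after dividing by $L_i \sum_{z'} \pi_{iz'} h_{iz'}$ and applying the stated approximation $\pi_{iz} h_{iz}/\sum_{z'} \pi_{iz'} h_{iz'} \approx \pi_{iz}$, the differentiated equilibrium condition becomes
\begin{equation*}
    \Big(\eta_i - \sum_z \pi_{iz}\, \xi^R_{iz}\Big)\, d\ln R_i
    = \sum_z \pi_{iz} \big(\xi^{\wkp}_{iz} + \zeta_{iz}\big)\, d\ln \MW_z
      + \sum_z \pi_{iz}\, \xi^{\res}_{iz}\, d\ln \MW_i.
\end{equation*}
Imposing $\xi^{\wkp}_{iz} = \xi^{\wkp}_i$ and $\zeta_{iz} = \zeta_i$ pulls $(\xi^{\wkp}_i + \zeta_i)$ outside the first sum, which then collapses to $d\mw^{\wkp}_i = \sum_z \pi_{iz}\, d\ln \MW_z$; the residence term collapses into a coefficient times $d\mw^{\res}_i$ by the same argument used in Proposition \ref{prop:representation}. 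Dividing through by the (positive) factor on the left and relabeling the resulting constants as $\beta_i$, $\zeta_i$, and $\gamma_i$ gives the stated representation. The sign of $\beta_i$ is positive since $\xi^{\wkp}_i$ captures the income-mediated housing-demand response (positive by Assumptions \ref{assu:mws}--\ref{assu:housing_demand}) over a positive denominator; the sign of $\zeta_i$ is non-positive by Assumption \ref{assu:commuting_shares}; and $\gamma_i < 0$ for the same reasons as in Proposition \ref{prop:representation}.

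The main obstacle, and the reason the approximation has to be stated explicitly in the proposition, is that with endogenous commuting shares the natural weights attached to each $(i,z)$ elasticity are $\pi_{iz} h_{iz}/\sum_{z'} \pi_{iz'} h_{iz'}$, not the commuting shares $\pi_{iz}$ themselves. Without the approximation, the coefficient on each $d\ln \MW_z$ would carry an $(i,z)$-specific weight different from $\pi_{iz}$, and the workplace sum would fail to collapse cleanly to $d\mw^{\wkp}_i$ even under the stated homogeneity assumptions. Once one accepts the approximation, the remaining steps are a direct adaptation of the argument in Proposition \ref{prop:representation}, and the additive decomposition $\beta_i + \zeta_i$ falls out transparently from the product-rule term generated by the endogeneity of $\pi_{iz}$.
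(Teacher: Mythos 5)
Your proposal is correct and follows essentially the same route as the paper's proof: fully differentiate the market-clearing condition \eqref{eq:equilibrium_simplified}, use the product rule on $\pi_{iz}(\MW_z)h_{iz}$ to obtain the $\xi^{\wkp}_{iz}+\zeta_{iz}$ decomposition, invoke the stated approximation so that each term is weighted by $\pi_{iz}$, and collapse the sums under the homogeneity assumptions. Your remark on why the approximation is needed (the natural weights are $\pi_{iz}h_{iz}/\sum_{z'}\pi_{iz'}h_{iz'}$ rather than $\pi_{iz}$) accurately captures its role in the paper's argument.
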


\begin{proof}    
    Fully differentiate the market clearing condition with respect to $\ln R_i$ 
    and $\ln \MW_z$ for all $z\in\Z(i)$.
    Using \eqref{eq:equilibrium_simplified} and appropriate algebraic 
    manipulations, one can show that
    \begin{equation}\label{eq:diff_equilibrium_endog_shares}
        \Big(\eta_i - \sum_z \pi_{iz} \xi^R_{iz} \Big) d \ln R_i
        = 
        \left(\sum_z \pi_{iz} \xi^{\res}_{iz} \right)d \ln \MW_i 
        + \sum_z \pi_{iz} (\xi^{\wkp}_{iz} + \zeta_{iz}) d \ln \MW_z  ,
    \end{equation}
    where we also impose the approximation that 
    $\pi_{iz} h_{iz} / \sum_{z'}\pi_{iz'} h_{iz'}\approx\pi_{iz}$.
    In this expression
    $\xi^R_{iz} = \frac{d h_{iz}}{d R_i} \frac{R_i}{\sum_z \pi_{iz} h_{iz}}$
    is the elasticity of housing demand to rents,
    $\xi^{\res}_{iz} = \frac{d h_{iz}}{d \MW_i} \frac{\MW_i}{\sum_z \pi_{iz} h_{iz}}$
    is the elasticity of housing demand to the MW at $i$, 
    $\xi^{\wkp}_{iz} = \frac{d h_{iz}}{d \MW_z} \frac{\MW_z}{\sum_z \pi_{iz} h_{iz}}$
    is the elasticity of housing demand to the MW at workplace $z$, and
    $\zeta_{iz} = \frac{d \pi_{iz}}{d \MW_z} \frac{\MW_z}{\sum_z \pi_{iz} h_{iz}}$
    is the elasticity of commuting shares to the MW at $z$.
    
    Under the stated assumptions we can manipulate 
    \eqref{eq:diff_equilibrium_endog_shares} to write
    \begin{equation} \label{eq:theory_representation_endog_shares}
        d r_i = (\beta_i + \zeta_i)  d \mw^{\wkp}_i
              + \gamma_i d \mw^{\res}_i
    \end{equation}
    where
    $\beta_i = \frac{\xi^{\wkp}_i}
                    {\eta_{i} - \sum_z \pi_{iz} \xi_{iz}^R}
              > 0$,
    $\zeta_i = \frac{\zeta_i}
                    {\eta_{i} - \sum_z \pi_{iz} \xi_{iz}^R} \leq 0$, and
    $\gamma_i = \frac{\sum_z \pi_{iz} \xi^{\res}_{iz}}
                     {\eta_{i} - \sum_z \pi_{iz} \xi_{iz}^R} 
              < 0$
    are parameters.
    The sign of $\zeta_i$ is given by Assumption \ref{assu:commuting_shares}.
\end{proof}

This result implies that, up to an approximation, the response of rents to 
the workplace MW includes the negative effect of the MW on commuting shares.

\clearpage
\section{Data Appendix}

\subsection{Matching census blocks to USPS ZIP codes}
\label{sec:blocks_to_uspszip}

One challenge of this project is that LODES data on commuting patterns are 
aggregated at the level of the \textit{census block}.
However, Zillow data are aggregated at the level of \textit{USPS ZIP codes},
and blocks and ZIP codes are not nested.
In this appendix section we describe the steps we took to construct a 
correspondence table between these geographies.

First, we collected the GIS map of 11,053,116 blocks from \textcite{cbTiger} and
computed their centroids.
Second, we assigned each block to a unique ZIP code using the GIS map from 
\textcite{ESRI} based on assigning to each block the ZIP code that contains its 
centroid.
If the centroid falls outside the block, we pick a point inside it at random.
We assigned 11,013,203 blocks using the spatial match (99.64 percent of the 
total).%
\footnote{545,566 of ZIP codes assigned via spatial match use 
a point of the census block picked at random (4.94 percent of the total).}
Third, for the blocks that remain unassigned we used the tract-to-ZIP-code 
correspondence from \textcite{hudCrosswalks}.
Specifically, for each tract we keep the ZIP code where the largest number 
of houses of the tract fall, and we assign it to each block using the tract 
identifier.
We assigned 22,819 blocks using this approach (0.21 percent).
There remain 17,094 unassigned blocks (0.15 percent), which we drop from the 
analysis.
This creates a unique mapping from census blocks to ZIP codes.

In the end, there are 11,036,022 blocks which are assigned to 31,754 ZIP codes, 
implying an average of 347.55 blocks per ZIP code.
Thus, even though there may be blocks that go beyond one ZIP code, we expect 
the error introduced by this process to be very small.

\subsection{Assigning minimum wage levels to USPS ZIP codes}
\label{sec:assigning_mw_levels}

Our main rents data is aggregated at the level of the USPS ZIP code.
To match this geographical level, we assign statutory MW levels to ZIP codes.
ZIP codes usually cross jurisdictions, and as a result parts of them are subject
to different statutory MW levels.
Trying to overcome this problem, we assign averages of the relevant MW levels to
each ZIP code.

We proceed as follows.
First, we collect a census crosswalk constructed by \textcite{CensusLODES} that 
contains, for each block, identifiers for block group, tract, county, CBSA 
(i.e., Core-Based Statistical Area), place (i.e., Census-Designated Place), and 
state.
Second, we assign the MW level of each jurisdiction to the relevant block.
We use the state code for state MW policies, and we match local MW policies 
based on the names of the county and the place.
We define the statutory MW at each census block as the maximum of the federal,
state, county, and place levels.
Then, based on the original correspondence table described in Online Appendix 
\ref{sec:blocks_to_uspszip}, we assign a ZIP code to each block.
Finally, we define \textit{the statutory MW} for ZIP code $i$ and month $t$, 
$\MW_{it}$, as the weighted average of the statutory MW levels in its
constituent blocks, where the weights are given by the number of housing 
units.%
\footnote{ZIP codes between 00001 and 00199 correspond to federal territories.
Thus, we assign as statutory MW the federal level.}
For ZIP codes that have no housing units in them, such as those corresponding to 
universities or airports, we use a simple average instead.

\subsubsection*{Locating minimum wage earners}

We approximate the share of people that earn at or below the MW as follows.
First, we collect data on the number of workers in each tract from the 5-year 
2010-2014 American Community Survey \parencite{CensusACS}.
Using our assignment of hourly statutory MW levels in January 2014 we compute 
the total yearly wage of a full-time worker earning the MW in each tract, which 
we denote by $\underline{YW}$.%
\footnote{We use the definition of full-time workers from \textcite{IRSfulltime}.
Specifically, we assume that a full-time employee works for 130 hours per week
for 12 months.}
We keep track of what wage bin $\underline{YW}$ falls into.
We estimate the number of MW earners in a tract as the total number of workers 
in all bins below the one where $\underline{YW}$ falls plus a fraction of the 
total number of workers in the bin $\underline{YW}$ falls given by 
$\left(\underline{YW} - b_\ell\right)/\left(b_h - b_\ell\right),$
where $b_h$ and $b_\ell$ represent the upper and lower limits of 
the bin.
We impute the tract estimates to ZIP codes proportionally to the share of 
houses in each tract that fall in every ZIP code the tract overlaps with.%
\footnote{More precisely, we compute a tract-to-ZIP-code correspondence from
the LODES correspondence between blocks and tracts, available in 
\textcite{CensusLODES}, and the geographical match between blocks and ZIP codes
from Online Appendix \ref{sec:blocks_to_uspszip}.
For each tract, we compute the share of houses that fall in each ZIP code, and 
we assume that the share in the tract-ZIP code combination equals the share of
houses times the estimated number of MW workers in the tract.}
Finally, we compute the share of MW workers who reside in each ZIP 
code dividing our estimate of the number of MW workers by the total
number of workers in the data.

Due to limitations in the ACS data, it is not possible to use the MW at 
workplace locations in the computation, nor to estimate the share of MW workers 
by workplace.

\subsection{Measuring housing expenditure at the ZIP code level}
\label{sec:measure_housing_expenditure}

For our counterfactual exercises we require several pieces of information.
First, to estimate the overall incidence of a MW policy we need the levels 
of total wages and total housing expenditure in each location.
Second, to estimate the ZIP code-specific incidence, we require a housing 
expenditure share that varies by ZIP code.
We construct these measures for 2018 using data from the \textcite{IRS} and
the \textcite{hudSAFMR}.

To construct these data we proceed as follows.
We approximate the levels of total wages and housing expenditure using per 
household variables.
From the IRS we obtain annual wage per household, which we 
divide by 12 to obtain a monthly measure.
From the HUD, we use the 2-bedroom SAFMR series as our monthly housing 
expenditure variable.%
\footnote{Average rents in a location would be better approximated as a
weighted average of rents for houses with different number of bedrooms,
weighted by the share of households that rent each type of housing.
However, these data are not publicly available.}
We define the ZIP code-specific housing share as the ratio of these two 
variables.

The computed variables have several missing values across the entire US, and 
small percentage of missing values within urban CBSAs 
(as defined in Table \ref{tab:stats_zip_samples}).
We impute missing values independently for each variable using an OLS
regression based on sociodemographic characteristics of each ZIP code 
(including data from the US Census and LODES) and CBSA by county fixed effects.
To limit the influence of outliers, we winsorize the results at the 0.5th and 
99.5th percentiles. 
The percentage of urban ZIP codes with non-imputed housing expenditure shares 
is $93.2$.
%
%

\subsection{Posted rents and contract rents: Quora question}
\label{sec:posted_rents}
	
We asked the following question: ``How different is the rent paid by a tenant 
and the rent posted online of the same housing unit? Do tenants have space 
to bargain the posted price, or is it common for tenants to just accept the
posted price?''
The online version of the question can be accessed at
\href{https://www.quora.com/How-different-is-the-rent-paid-by-a-tenant-and-the-rent-posted-online-of-the-same-housing-unit-Do-tenants-have-space-to-bargain-the-posted-price-or-is-it-common-for-tenants-to-just-accept-the-posted-price}{\underline{this link}}.

\subsubsection*{Landlords}

\begin{itemize} \itshape
	\item As a landlord for over 40 years, I have never agreed to negotiate the rental
	price. I would rather lose that renter and later lower the list price and rent
	to someone else. ...
	
	\item ... The rent posted should be the actual rent available. As far as comparing 
	the rent to a newly available listed property and one that is being occupied 
	by another existing tenant, they will likely differ as the market continues 
	to evolve and inflation has an impact on everything as well as the law of 
	supply and demand. ...
	
	\item I've been a landlord for over 35 years, in three states and in three countries. 
	I do NOT ``bargain,'' and anyone trying would find their application in the round 
	file, instantly.
\end{itemize}
	
\subsubsection*{Tenants}
	
	\begin{itemize} \itshape
		\item Where I am, you accept the posted price unless there is something very 
		odd about the unit or you have a needed skill to offer. 
		Both are very rare.
		
		The payoff is that landlords often increase rent for renewing tenants at 
		a slightly lower rate than they set for a similar empty unit. 
		If you stay many years, you may have noticeably lower rent. ...
		
		\item Presumably, the landlord has done at least minimal research to figure 
		out how much the market is charging. That market price should bring in 
		multiple potential tenants wanting to rent. Given that, it would be very 
		rare for a landlord to accept lower offers. ...
	\end{itemize}
	
From a Real Estate Transaction Coordinator (boldface added by us):
	
\begin{itemize} \itshape
	\item 
	That's entirely up to the landlord.

	When dealing with a property management company or the manager of an
	apartment complex, they may have limits on what they can do as far as
	negotiations. I'm not familiar with any in my city who negotiate rent.
	\textbf{What's posted is the price. Period.}
	
	We mainly deal with private landlords who, of course, have 100\% control 
	over whether they are willing to negotiate the rent.
	\textbf{In 15 years, I've only seen it happen twice.}
	But our clients post the rent at a fair price—generally right in the 
	center of the ``fair market value range''—so they have no reason to 
	negotiate.

	What they WILL negotiate, and I've seen it done many times, is how the 
	deposits are collected. ...
\end{itemize}

\clearpage
\section{Identification in a Potential Outcomes Framework}
\label{sec:potential_outcomes}

Following Section \ref{sec:model}, we assume that the effect of MW policies 
across locations can be summarized in the residence and workplace MW measures.
Thus, we consider the following causal model
\begin{equation}\label{eq:causal_model}
    r_{it} = r_{it}(\mw_{it}^{\res}, \mw_{it}^{\wkp}) .
\end{equation}

For this section we represent our dataset as
$\left\{\{r_{it}, \mw^{\res}_{it}, \mw^{\wkp}_{it}\}
       _{t=\underline{T}}^{\overline{T}}\right\}_{i\in\Z}$.
Monthly dates run from $\underline{T}$ to $\overline{T}$ for every unit,
and $\Z$ is the set of ZIP codes.
We assume that the data are $iid$.
We impose no anticipation, so units do not change their pretreatment outcome 
given future changes in the MW measures.

Every month in which some jurisdiction changes the level of the MW there will 
be units that are treated directly and units that are treated indirectly.
We follow \textcite{AngristImbens1995, CallawayEtAl2021} to define the 
treatment effects of interest.
We denote a unit's causal response to the residence MW as
$\partial r_{it}(\mw_{it}^{\res}, \mw_{it}^{\wkp})/\partial \mw_{it}^{\res} ,$
and to the workplace MW as 
$\partial r_{it}(\mw_{it}^{\res}, \mw_{it}^{\wkp})/\partial \mw_{it}^{\wkp} .$
Let the federal MW level be $\mw^{\fed}$.

\begin{definition}[Treatment Effects]\label{def:treatment_effects}
    Consider a group with a residence MW level of $w^{\res}$ and a workplace
    MW level of $w^{\wkp}$.
    Focus on the effect of the workplace MW.
    The average treatment effect on that group is
    \begin{equation*}
        ATT^{\wkp}(w^{\wkp} | w^{\res}, w^{\wkp}) 
            = \E\left[r_{it}(w^{\res}, w^{\wkp}) - r_{it}(w^{\res}, \mw^{\fed}) 
                    \big| \mw_{it}^{\res}=w^{\res}, \mw_{it}^{\wkp}=w^{\wkp}\right] .
    \end{equation*}
    The average causal response of the same group to the workplace MW is given by
    \begin{equation*}
        ACRT^{\wkp}(w^{\wkp} | w^{\res}, w^{\wkp}) 
            = \frac{\partial \E\left[r_{it}(w^{\res}, l) | \mw_{it}^{\res}=w^{\res}, \mw_{it}^{\wkp}=w^{\wkp}\right]}
                    {\partial \mw^{\wkp}} \Bigg|_{l=w^{\wkp}} .
    \end{equation*}
    These treatment effects may be heterogeneous across the distribution 
    of $(\mw_{it}^{\res},\mw_{it}^{\wkp})$.
    The average causal response across all groups treated with different 
    levels of the workplace and residence MW is
    \begin{equation*}
        ACR^{\wkp}(w^{\wkp}) = \frac{\partial \E\left[r_{it}(w^{\res}, w^{\wkp}) \right] }
                                    {\partial w^{\wkp}} .
    \end{equation*}
    Analogously, for the residence MW we define: $ATT^{\res}$, 
    $ACRT^{\res}(w^{\res} | w^{\res}, w^{\wkp})$, and $ACR^{\res}(w^{\res})$.
\end{definition}

Our main interest lies in the rent gradient to the MW, i.e., the 
average causal response of rents to each of the MW measures.
For that, we make a parallel trends assumption.

\begin{assu}[Parallel trends] \label{assu:PT}
    We assume that, for all levels of $w^{\res}$ and $w^{\wkp}$,
    \begin{align*}\label{eq:PT}
        \E\big[r_{it}(\mw^{\fed}, \mw^{\fed}) & - r_{i,t-1}(\mw^{\fed}, \mw^{\fed}) 
                \big| \mw_{it}^{\res}=w^{\res}, \mw_{it}^{\wkp}=w^{\wkp} \big] \\
        & = \E\big[r_{it}(\mw^{\fed}, \mw^{\fed}) - r_{i,t-1}(\mw^{\fed}, \mw^{\fed})
                \big| \mw_{it}^{\res}=w^{\res}, \mw_{it}^{\wkp}=\mw^{\fed} \big] \\
        & = \E\Big[r_{it}(\mw^{\fed}, \mw^{\fed}) - r_{i,t-1}(\mw^{\fed}, \mw^{\fed})
                \big| \mw_{it}^{\res}=\mw^{\fed}, \mw_{it}^{\wkp}=w^{\wkp} \big] .
    \end{align*}
\end{assu}

Assumption \ref{assu:PT} states that the untreated outcomes evolve in parallel 
between ZIP codes experiencing treatment levels $(w^{\res},w^{\wkp})$ and 
(a) ZIP codes with the same level of the residence MW but unchanged workplace MW and
(b) ZIP codes with the same level of the workplace MW but unchanged residence MW.
We further maintain a second assumption.

\begin{assu}[No selection on gains] \label{assu:no_selection}
    We assume that
    \begin{equation*}\label{eq:no_selection_workplace}
        \frac{\partial ATT^{\wkp}(w^{\wkp} | w^{\res}, l)}
             {\partial \mw^{\wkp}} \Big|_{l = w^{\wkp}} = 0
        \quad\text{\textit{ and }}\quad
        \frac{\partial ATT^{\res}(w^{\res} | l, w^{\wkp})}
             {\partial \mw^{\res}} \Big|_{l = w^{\res}} = 0 .
    \end{equation*}
\end{assu}

To identify $ACRT^{\wkp}$ we will compare ZIP codes that received similar levels 
of the residence MW and different levels of the workplace MW.
Analogous comparisons of ZIP codes with different residence MW and similar 
workplace MW will idenfity $ACRT^{\res}$.

\begin{prop}[Identification]\label{prop:PO_identification}
    Under Assumption \ref{assu:PT} we have that
    \begin{equation*}
        \begin{split}
        \frac{\partial \E\left[r_{it}(w^{\res}, w^{\wkp}) 
                              | \mw_{it}^{\res}=w^{\res}, \mw_{it}^{\wkp}=w^{\wkp}\right]}
             {\partial \mw^{\wkp}} 
        & = ACRT^{\wkp}(w^{\wkp} | w^{\res}, w^{\wkp})            \\
        &  + \frac{\partial ATT^{\wkp}(w^{\wkp} | w^{\res}, l)}
                  {\partial \mw^{\wkp}} \Big|_{l = w^{\res}} .    \\
        \end{split}
    \end{equation*}
    Furthermore, if Assumption \ref{assu:no_selection} holds, then
    \begin{equation*}
        \frac{\partial \E\left[r_{it}(w^{\res}, w^{\wkp}) | w^{\res}, w^{\wkp}\right]}
        {\partial w^{\wkp}} 
       = ACRT^{\wkp}(w | w^{\res}, w) .
    \end{equation*}
    Analogous expressions hold for the residence MW.
\end{prop}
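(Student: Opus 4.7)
The plan is to decompose the observable conditional expectation function via the chain rule along the ``diagonal,'' which separates a causal contribution from a selection contribution, and then to use PT to eliminate the part of the selection generated by baseline differences and Assumption \ref{assu:no_selection} to eliminate the ATT-based selection that remains.

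More concretely, I would introduce the auxiliary function $\nu(a;b) \equiv \E[r_{it}(w^{\res},a)\mid \mw^{\res}_{it}=w^{\res},\mw^{\wkp}_{it}=b]$, where $a$ plays the role of the potential-outcome workplace MW and $b$ the conditioning workplace MW. By \eqref{eq:causal_model}, the observable conditional mean is the on-diagonal evaluation $\E[r_{it}(w^{\res},w^{\wkp})\mid w^{\res},w^{\wkp}] = \nu(w^{\wkp};w^{\wkp})$. Differentiating in $w^{\wkp}$ using the chain rule gives $\partial_a\nu(a;b)|_{a=b=w^{\wkp}} + \partial_b\nu(a;b)|_{a=b=w^{\wkp}}$. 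The first summand is exactly $ACRT^{\wkp}(w^{\wkp}\mid w^{\res},w^{\wkp})$ by Definition \ref{def:treatment_effects}.

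For the second summand, I would add and subtract $r_{it}(w^{\res},\mw^{\fed})$ inside the expectation to write $\nu(a;b) = \E[r_{it}(w^{\res},\mw^{\fed})\mid w^{\res},b] + ATT^{\wkp}(a\mid w^{\res},b)$, so that $\partial_b\nu(a;b)$ splits into (i) the derivative in the conditioning argument of the partially-untreated counterfactual mean and (ii) the derivative in the conditioning argument of the ATT. Step (i) is what I expect to be the main obstacle: Assumption \ref{assu:PT} is stated for the fully untreated potential outcome $r_{it}(\mw^{\fed},\mw^{\fed})$, whereas (i) involves the hybrid potential outcome $r_{it}(w^{\res},\mw^{\fed})$. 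The plan is to invoke no anticipation to identify the pre-period observed rent with $r_{i,t-1}(\mw^{\fed},\mw^{\fed})$, difference both sides of the expression inside (i) by this pre-period, and then apply the second equality in Assumption \ref{assu:PT} (across workplace groups, at fixed residence MW) to conclude that the derivative in $b$ of the corresponding counterfactual trend vanishes. Substituting back yields the first identity of the proposition, with the residual selection term equal to $\partial_{\mw^{\wkp}}ATT^{\wkp}(w^{\wkp}\mid w^{\res},l)|_{l=w^{\wkp}}$.

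For the second identity, Assumption \ref{assu:no_selection} asserts exactly that this residual ATT derivative is zero at $l=w^{\wkp}$, so the observable slope collapses to $ACRT^{\wkp}(w^{\wkp}\mid w^{\res},w^{\wkp})$. The argument for the residence MW is entirely symmetric, with the roles of $a$ and $b$ interchanged inside $\nu$ and the third equality of Assumption \ref{assu:PT} (across residence groups, at fixed workplace MW) invoked to kill the counterfactual trend term.
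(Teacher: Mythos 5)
Your decomposition is exactly the Callaway--Goodman-Bacon--Sant'Anna argument that the paper's proof invokes by reference (the paper's own ``proof'' is a one-line deferral: ``the proof is analogous, with the only difference being that one must condition on the residence MW''), so in substance you are reconstructing the intended argument rather than taking a different route: write the observed conditional mean as the on-diagonal evaluation of $\nu(a;b)=\E[r_{it}(w^{\res},a)\mid \mw^{\res}_{it}=w^{\res},\mw^{\wkp}_{it}=b]$, split the total derivative into the dose derivative (the $ACRT^{\wkp}$) and the group derivative, and decompose the latter into an untreated-trend term killed by parallel trends plus the derivative of $ATT^{\wkp}$ in its conditioning argument, which is the selection-bias term that Assumption \ref{assu:no_selection} sets to zero. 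Two remarks. First, your derivation lands on the selection term evaluated at $l=w^{\wkp}$, which matches Assumption \ref{assu:no_selection}; the proposition's display writes $l=w^{\res}$, which appears to be a typo in the paper, so do not contort the argument to reproduce it.

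Second, the one step you flag as the ``main obstacle'' is a genuine gap, and your proposed fix does not close it as written. $ATT^{\wkp}$ is defined relative to the hybrid potential outcome $r_{it}(w^{\res},\mw^{\fed})$, so the term you must show is constant in $b$ is $\E[r_{it}(w^{\res},\mw^{\fed})\mid \mw^{\res}_{it}=w^{\res},\mw^{\wkp}_{it}=b]$. Differencing by the pre-period and applying the second equality of Assumption \ref{assu:PT} only controls the trend of the \emph{fully} untreated outcome $r_{it}(\mw^{\fed},\mw^{\fed})$ across workplace groups; the residual
$\E[r_{it}(w^{\res},\mw^{\fed})-r_{it}(\mw^{\fed},\mw^{\fed})\mid \mw^{\res}_{it}=w^{\res},\mw^{\wkp}_{it}=b]$
is an ATT of the residence dose for group $(w^{\res},b)$, and nothing in Assumptions \ref{assu:PT} or \ref{assu:no_selection} forces it to be flat in $b$. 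To close the argument you need either (i) a version of parallel trends stated for the hybrid potential outcome $r_{it}(w^{\res},\mw^{\fed})$ (trends invariant to the workplace group, conditional on the residence group), or (ii) to redefine $ATT^{\wkp}$ relative to $r_{it}(\mw^{\fed},\mw^{\fed})$ and absorb the cross term into a correspondingly modified no-selection condition. This is a deficiency of the paper's stated assumptions as much as of your write-up --- the single-treatment setting of Callaway et al.\ has no such hybrid --- but as it stands your parallel-trends step asserts more than Assumption \ref{assu:PT} delivers, and you should state the strengthened assumption explicitly rather than attribute it to the existing one.
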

\begin{proof}
    The setting is analogous to \textcite{CallawayEtAl2021} but with two 
    treatment variables.
    The proof is analogous as well, with the only difference being that one 
    must condition on the residence MW when deriving the expression for the 
    workplace MW, and viceversa.
\end{proof}

As extensively discussed by \textcite{CallawayEtAl2021}, Assumption \ref{assu:PT} 
is not enough to identify the average causal response in the context of 
continuous treatments.
The gradient of our rents function for the group $(w^{\res}, w^{\wkp})$ is a mix
of the average causal response of interest and a ``selection bias'' term that
captures the fact that the treatment for the particular group that received
$(w^{\res}, w^{\wkp})$ may be different for other groups at that level of 
treatment.
Assumption \ref{assu:no_selection} imposes that those selection bias
terms are zero.%
\footnote{There are several alternatives to this assumption. 
See \textcite[][Section 3.3]{CallawayEtAl2021} and discussion therein.}
We discuss the plausibility of these assumptions in Section 
\ref{sec:empirical_strategy}.

Consider now a functional form for \eqref{eq:causal_model} like the one used in 
the main analysis:
$$
r_{it} = \alpha_i + \tilde\delta_t 
         + \gamma \mw_{it}^{\res} + \beta \mw_{it}^{\wkp}
         + \epsilon_{it}
$$
where we exclude the controls for simplicity.
It is easy to see, if 
$\E[\epsilon_{it} | \mw_{it}^{\res}, \mw_{it}^{\wkp}] = 0$,
then both Assumptions \ref{assu:PT} and \ref{assu:no_selection} hold under 
this linear functional form with constant effects.
Furthermore, in this case we have that
\begin{equation*}
    ACRT^{\wkp}(w^{\wkp} | w^{\res}, w^{\wkp}) 
        = ACR^{\wkp}(w^{\wkp} | w^{\res}, w^{\wkp}) 
        = \beta
\end{equation*}
and that
\begin{equation*}
    ACRT^{\res}(w^{\res} | w^{\res}, w^{\wkp})
        = ACR^{\res}(w^{\res} | w^{\res}, w^{\wkp})
        = \gamma
\end{equation*}
for any $w^{\res}\geq\mw^{\fed}$ and $w^{\wkp}\geq\mw^{\fed}$.

\clearpage
\section{Additional Tables and Figures}

\begin{table}[hbt!] \centering
    \caption{Summary statistics of baseline panel}
    \label{tab:stats_est_panel}
    \begin{tabular}{@{}lccccc@{}}
        \toprule
                                          & \multicolumn{1}{c}{N} 
                                          & \multicolumn{1}{c}{Mean} 
                                          & \multicolumn{1}{c}{St. Dev.} 
                                          & \multicolumn{1}{c}{Min} 
                                          & \multicolumn{1}{c}{Max}                 \\ \midrule
        \textit{Minimum wage variables}               &       &       &       &       &       \\
        $\quad$Statutory MW $\MW_{it}$                & 80,700  & 8.56  & 1.58  & 7.25  & 16.00  \\
        $\quad$Residence MW $\mw^{\res}_{it}$         & 80,700  & 2.132  & 0.168  & 1.981  & 2.773  \\
        $\quad$Workplace MW $\mw^{\wkp}_{it}$         & 80,700  & 2.136  & 0.163  & 1.981  & 2.694  \\
        $\quad$Workplace MW, low-income workers       & 80,700  & 2.134  & 0.161  & 1.981  & 2.681  \\
        $\quad$Workplace MW, young workers            & 80,700  & 2.135  & 0.163  & 1.981  & 2.707  \\[.3em]
        \textit{Median Rents}                         &       &       &       &       &       \\
        $\quad$SFCC                                   & 74,012  & 1,757.89  & 901.50  & 625.00  & 30,000.00  \\
        $\quad$SFCC per sqft.                         & 80,700  & 1.32  & 1.01  & 0.47  & 22.20  \\
        $\quad$Log(SFCC per sqft.)                    & 80,700  & 0.14  & 0.47  & -0.76  & 3.10  \\[.3em]
        \textit{Economic controls}                    &       &       &       &       &       \\
        $\quad$Avg.\ wage Business services           & 80,700  & 11.19  & 1.38  & 6.02  & 13.39  \\
        $\quad$Employment Business services           & 80,700  & 8.71  & 1.25  & 4.36  & 10.96  \\
        $\quad$Estab. count Business services         & 80,700  & 7.14  & 0.31  & 5.73  & 8.18  \\
        $\quad$Avg.\ wage Financial services          & 80,352  & 9.01  & 1.57  & 2.40  & 12.39  \\
        $\quad$Employment Financial services          & 80,700  & 6.13  & 1.35  & 1.61  & 9.53  \\
        $\quad$Estab. count Financial services        & 80,352  & 7.33  & 0.36  & 5.89  & 8.91  \\
        $\quad$Avg.\ wage Information services        & 80,688  & 10.23  & 1.43  & 4.75  & 12.90  \\
        $\quad$Employment Information services        & 80,700  & 8.01  & 1.21  & 3.66  & 10.34  \\
        $\quad$Estab. count Information services      & 80,688  & 7.31  & 0.37  & 6.33  & 9.16  \\ \bottomrule
    \end{tabular}

    \begin{minipage}{.95\textwidth} \footnotesize
        \vspace{2mm}
        Notes: This table shows summary statistics of the panel of ZIP codes 
        used in our baseline results, constructed as explained in Section 
        \ref{sec:data_final_panel}.
        All workplace MW variables use 2017 commuting data from LODES.
        The workplace MW variables ``Workplace MW, low-income workers'' and 
        ``Workplace MW, young workers'' are constructed using data for 
        workers who earn less \$1,251 and are aged less than 29, respectively.
    \end{minipage}
\end{table}

\clearpage
\begin{table}[hbt!] \centering
    \caption{Estimates of the effect of the minimum wage on rents in levels and first differences,
             baseline sample}
    \label{tab:autocorrelation}
    \begin{tabular}{@{}lcc@{}}
        \toprule
            & \multicolumn{2}{c}{Log rents}                            \\ \cmidrule(l){2-3}
            & \shortstack{Levels\\(1)} 
            & \shortstack{First Differences\\(2)}                      \\ \midrule
        Residence MW                      &  -0.0432   &  -0.0199              \\
                                          & (0.1751)  & (0.0195)             \\
        Workplace MW                      &  0.0376   &  0.0687              \\
                                          & (0.2033)  & (0.0298)             \\ \midrule
        Economic controls                 &  Yes   &  Yes              \\
        P-value autocorrelation test      &        &  $<0.0001$        \\
        R-squared                         &  0.9924   &  0.0216              \\
        Observations                      &  80,340  &  78,912             \\ \bottomrule
    \end{tabular}

    \begin{minipage}{.95\textwidth} \footnotesize
        \vspace{2mm}
        Notes:
        Data are from the baseline estimation sample described in Section 
        \ref{sec:data_final_panel}.
        Both columns report the results of regressions of the log of 
        median rents per square foot on our MW-based measures.
        Column (1) presents estimates of a model in levels, including 
        ZIP code and year-month fixed effects.
        Column (2), presents estimates of a model in first differences, 
        including year-month fixed effects 
        (note that the ZIP code fixed effect drops out).
        For the model in first differences, we also report the results of an 
        AR(1) auto-correlation test.
        We proceed as in \textcite[][Section 10.6.3]{wooldridge2010}.
        First, we compute the residuals of the model estimated in column (2), 
        and we regress those residuals on their lag.
        Let the auto-correlation coefficient of this model be $\phi$.
        The model in levels is efficient assuming no auto-correlation in the 
        error term, which would imply that the residuals of the 
        first-differenced model are auto-correlated with $\phi = -0.5$.
        The row ``P-value autocorrelation test'' reports the $p$-value of 
        a Wald test of that hypothesis.
        Standard errors in parentheses are clustered at the state level.
    \end{minipage}
\end{table}

\clearpage
\begin{table}[hbt!] \centering
    \caption{Estimates of the effect of the minimum wage on rents, stacked sample}
    \label{tab:stacked_w6}
    \begin{tabular}{l*{4}{c}}
        \toprule
        & \multicolumn{1}{c}{\shortstack{Change wkp.\ MW\\$\Delta\mw_{it}^{\wkp}$}}
            & \multicolumn{3}{c}{\shortstack{Change log rents\\$\Delta r_{it}$}} \\ \cmidrule(lr){2-2}\cmidrule(lr){3-5}
                                            & (1)   & (2)   & (3)   & (4)            \\ \midrule
        Change residence MW 
                    $\Delta\mw_{it}^{\res}$  &  0.5461  &  0.0051  &       &  -0.0444     \\
                                            & (0.0316) & (0.0109) &       & (0.0174)    \\
        Change workplace MW 
                    $\Delta\mw_{it}^{\wkp}$ &       &       &  0.0242  & 0.0906      \\
                                            &       &       & (0.0216) & (0.0391)    \\ \midrule
        Sum of coefficients                &       &       &       &  0.0463     \\
                                            &       &       &       & (0.0266)    \\ \midrule
        Economic controls                   &  Yes  & Yes   & Yes   & Yes      \\
        P-value equality                   &       &       &       & 0.0208      \\
        R-squared                          &  0.9763  &  0.0539  &  0.0540  & 0.0540      \\
        Observations                       & 98,326  & 98,326  & 98,326  & 98,326     \\\bottomrule
    \end{tabular}

    \begin{minipage}{.95\textwidth} \footnotesize
        \vspace{2mm}
        Notes: 
        Data are from Zillow, the MW panel described in Section \ref{sec:data_mw_panel},
        LODES origin-destination statistics, and the QCEW.
        The table mimics the estimates in Table \ref{tab:static} using a 
        ``stacked'' sample.
        To construct the sample we proceed as follows.
        First, we define a CBSA-month as treated if in that month there is at 
        least one ZIP code that had a change in the binding MW.
        We drop events that have less than 10 ZIP codes.
        For each of the selected CBSA-months we assign a unique event ID. 
        Second, for each event ID we take a window of 6 months, and we keep all 
        months within that window for the ZIP codes that belong to the treated 
        CBSA.
        If a ZIP code has missing data for some month within the window, we drop 
        the entire ZIP code from the respective event.
        For each column, we estimate the same regression as the analogous column 
        in Table \ref{tab:static} but include event ID by year-month
        fixed effects.
    \end{minipage}
\end{table}

\clearpage
\begin{table}[hbt!]
    \centering
    \caption{Estimates of the effect of the minimum wage on rents including one lag of the
             dependent variable, baseline sample}
    \label{tab:arellano_bond}

    \begin{tabular}{@{}lcccc@{}}
        \toprule
                                             & \multicolumn{4}{c}{Log rents}                                                       \\ \cmidrule(lr){2-5} 
                                             & \multicolumn{2}{c}{Levels}               & \multicolumn{2}{c}{First Differences}    \\ \cmidrule(lr){2-3}\cmidrule(lr){4-5} 
        \multicolumn{1}{c}{}                 & \shortstack{Baseline\\(1)} 
                                             & \shortstack{Arellano\\Bond (2)} 
                                             & \shortstack{Baseline\\(3)} 
                                             & \shortstack{Arellano\\Bond (4)}                                                     \\ \midrule
        Residence MW                         & -0.0432                  & -0.0055               & -0.0219                  & -0.0221               \\
                                             & (0.1751)                & (0.0298)             & (0.0175)                & (0.0234)             \\
        Workplace MW                         & 0.0376                  & 0.0065               & 0.0685                  & 0.0702               \\
                                             & (0.2033)                & (0.0346)             & (0.0288)                & (0.0390)             \\
        Lagged log rents                     &                      & 0.8421               &                      & 0.3299               \\
                                             &                      & (0.0179)             &                      & (0.0177)             \\ \midrule
        Economic controls                    & Yes                  & Yes               & Yes                  & Yes               \\
        P-value equality                     & 0.8264                  & 0.8481               & 0.0514                  & 0.1378               \\
        Observations                         & 80,340                 & 80,321              & 80,241                 & 80,217              \\ \bottomrule

    \end{tabular}

    \begin{minipage}{.95\textwidth} \footnotesize
        \vspace{2mm}
        Notes: 
        Data are from the baseline estimation sample described in Section 
        \ref{sec:data_final_panel}.
        All columns show the results of regressions of the log of median rents 
        per square foot on the residence and workplace MW measures.
        Columns (1) and (2) estimate two-way fixed-effects regressions in 
        levels that include ZIP code and year-month fixed effects.
        Columns (3) and (4) estimate models in first differences that include 
        year-month fixed effects.
        All regressions include economic controls (in levels or first differences,
        respectively) that vary at the county by month and county by quarter levels.
        Odd-numbered columns are estimated under OLS.
        Even-numbered columns include the lagged variable of the dependent variable
        as control, and are estimated using an IV strategy where the first lag is 
        instrumented with the second lag, following \textcite{ArellanoBond1991}.
        The measure of rents per square foot corresponds to the SFCC category 
        from Zillow.
        Economic controls from the QCEW include the log of the average wage, 
        the log of employment, and the log of the establishment count from the 
        sectors ``Information'', ``Financial activities'', and ``Professional
        and business services''.
        Standard errors in parentheses are clustered at the state level.
    \end{minipage}
\end{table}

\clearpage
\begin{table}[hbt!]
    \centering
    \caption{Estimates of the effect of the minimum wage on rents, different samples}
    \label{tab:static_sample}

    \begin{tabular}{@{}lccc@{}}
        \toprule
                                             & \multicolumn{3}{c}{Change log rents $\Delta r_{it}$}            \\ \cmidrule(l){2-4} 
                                             & \shortstack{Baseline\\(1)}       & \shortstack{Reweighted\\(2)} 
                                             & \shortstack{Unbalanced\\(3)}                                    \\ \midrule
        Change residence MW 
                  $\Delta\mw_{it}^{\res}$    & -0.0219      & -0.0039        & -0.0274         \\
                                             & (0.0175)    & (0.0128)      & (0.0237)       \\
        Change workplace MW 
                   $\Delta\mw_{it}^{\wkp}$   & 0.0685      & 0.0558        & 0.0528         \\
                                             & (0.0288)    & (0.0240)      & (0.0299)       \\ \midrule
        P-value equality                     & 0.0514      & 0.1026        & 0.1325         \\
        R-squared                            & 0.0213      & 0.0209        & 0.0309         \\
        Observations                         & 80,241     & 79,701       & 193,239        \\ \bottomrule
    \end{tabular}

    \begin{minipage}{.95\linewidth} \footnotesize
        \vspace{2mm}
        Notes:
        Data are from Zillow,
        the statutory MW panel described in Section \ref{sec:data_mw_panel},
        LODES origin-destination statistics, and the QCEW.
        Every column shows the results of regressions of the log of
        median rents per square foot on our MW-based measures.
        All regressions are estimated in first differences and include 
        time-period fixed effects and economic controls that vary at the 
        county by month and county by quarter levels.
        The measure of rents per square foot corresponds to the Single Family, 
        Condominium and Cooperative houses from Zillow.
        Columns (1) and (2) use our baseline sample defined in Section 
        \ref{sec:data_final_panel}.
        Column (3) uses the unbalanced sample of all ZIP codes with 
        Zillow rents data at any point in time, and controls for quarter-year 
        of entry to the panel by year-month fixed effects.
        Column (2) re-weights observations so that the sample of 
        ZIP codes in the data (column 3 of Table \ref{tab:stats_zip_samples}) 
        matches the averages of the set of ZIP codes located in urban CBSAs 
        (column 2 of Table \ref{tab:stats_zip_samples})
        in the following census variables:
        share of urban population
        share of renter-occupied households, and
        share of white population.
        Weights for each sample are computed following \textcite{Hainmueller2012}.
        Standard errors in parentheses are clustered at the state level.
    \end{minipage}
\end{table}

\clearpage
\begin{landscape}
\begin{table}[ht!]
    \centering
    \caption{Comparison of estimates of the effect of the minimum wage on rents across
             Zillow categories, unbalanced samples}
    \label{tab:zillow_categories}
        
    \begin{tabular}{@{}lccccc@{}}
        \toprule
                                             & \multicolumn{1}{c}{\shortstack{Change wkp.\ MW\\$\Delta\mw_{it}^{\wkp}$}} 
                                             & \multicolumn{3}{c}{\shortstack{Change log rents\\$\Delta r_{it}$}}
                                             &                                                                         \\ \cmidrule(lr){2-2}\cmidrule(lr){3-5}
                                                 & \multicolumn{1}{c}{\shortstack{Change res.\ MW\\$\Delta\mw_{it}^{\res}$}}
                                                 & \multicolumn{1}{c}{\shortstack{Change res.\ MW\\$\Delta\mw_{it}^{\res}$}} 
                                                 & \multicolumn{1}{c}{\shortstack{Change wkp.\ MW\\$\Delta\mw_{it}^{\wkp}$}} 
                                                 & \shortstack{Sum of\\coefficients} 
                                                 & N                                    \\ \midrule
        $\quad$(a) Unbalanced (SFCC)             &  0.8476  &  -0.0263  &  0.0479  &  0.0216  & 193,292 \\
                                                 & (0.0297) & (0.0213) & (0.0302) & (0.0157) &      \\
        $\quad$(b) Single family (SF)            &  0.8588  &  -0.0169  &  0.0429  &  0.0260  & 140,750 \\
                                                 & (0.0315) & (0.0399) & (0.0477) & (0.0138) &      \\
        $\quad$(c) Condo/Cooperatives (CC)       &  0.8019  &  -0.0648  &  0.0968  &  0.0320  & 29,817 \\
                                                 & (0.0288) & (0.0266) & (0.0417) & (0.0199) &      \\
        $\quad$(d) Studio                        &  0.8330  &  -0.0669  &  0.0776  &  0.0107  & 22,746 \\
                                                 & (0.0287) & (0.0520) & (0.0570) & (0.0206) &      \\
        $\quad$(d) 1 Bedroom                     &  0.7879  &  0.0287  &  -0.0327  &  -0.0039  & 53,538 \\
                                                 & (0.0300) & (0.0269) & (0.0456) & (0.0208) &      \\
        $\quad$(e) 2 Bedroom                     &  0.8022  &  -0.0069  &  0.0063  &  -0.0006  & 89,635 \\
                                                 & (0.0296) & (0.0232) & (0.0285) & (0.0114) &      \\
        $\quad$(f) 3 Bedroom                     &  0.8113  &  -0.0645  &  0.0920  &  0.0275  & 64,916 \\
                                                 & (0.0322) & (0.0475) & (0.0682) & (0.0328) &      \\
        $\quad$(g) Multifamily 5+ units          &  0.8072  &  -0.0133  &  0.0369  &  0.0236  & 142,759 \\
                                                 & (0.0314) & (0.0260) & (0.0362) & (0.0115) &      \\ \bottomrule
    \end{tabular}

    \begin{minipage}{.95\linewidth} \footnotesize
        \vspace{2mm}
        Notes:
        Data are from Zillow, the statutory MW panel described in
        Section \ref{sec:data_mw_panel}, LODES origin-destination statistics,
        and the QCEW.
        Each row of the table shows two estimations on the same sample of ZIP 
        codes and months.
        The first column shows the results of a regression of the change in the 
        workplace MW measure on the change in the residence MW measure.
        The second through fourth columns show the results of a regression of 
        the change in log rents on the change in the residence MW and the 
        workplace MW, with the fifth column showing the sum of the coefficients 
        on the MW measures.
        All rent variables correspond to the median per square foot rent in a 
        Zillow category.
        All estimated regressions include quarter of entry to Zillow by year-month 
        fixed effects and economic controls from the QCEW.
        Row (a) repeats the results of column (5) of Table \ref{tab:static_sample}, 
        using the Single Family, Condominium and Cooperative Houses category.
        Rows (b) through (g) estimate the same regression for different Zillow 
        categories.
        We exclude the rental categories ``4 bedroom,'' ``5 bedroom,'', and
        ``Duplex and triplex,'' all of which contain less than 15 thousand
        ZIP code by month observations.
        Standard errors in parentheses are clustered at the state level.
    \end{minipage}
\end{table}
\end{landscape}

\clearpage

\begin{figure}[h!]
    \centering
    \caption{Minimum wage levels in the US by jurisdiction between 
             January 2010 and June 2020}
    \label{fig:mw_policies}

    \begin{subfigure}{.7\textwidth}
        \caption{State policies}
        \includegraphics[width = \textwidth]
            {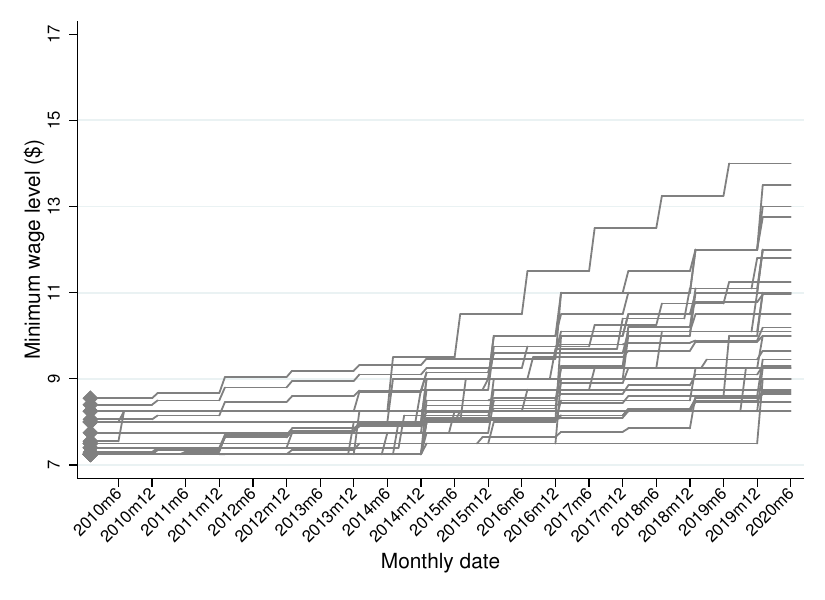}
    \end{subfigure}\\
    \begin{subfigure}{.7\textwidth}
        \caption{Sub-state policies}
        \includegraphics[width = \textwidth]
            {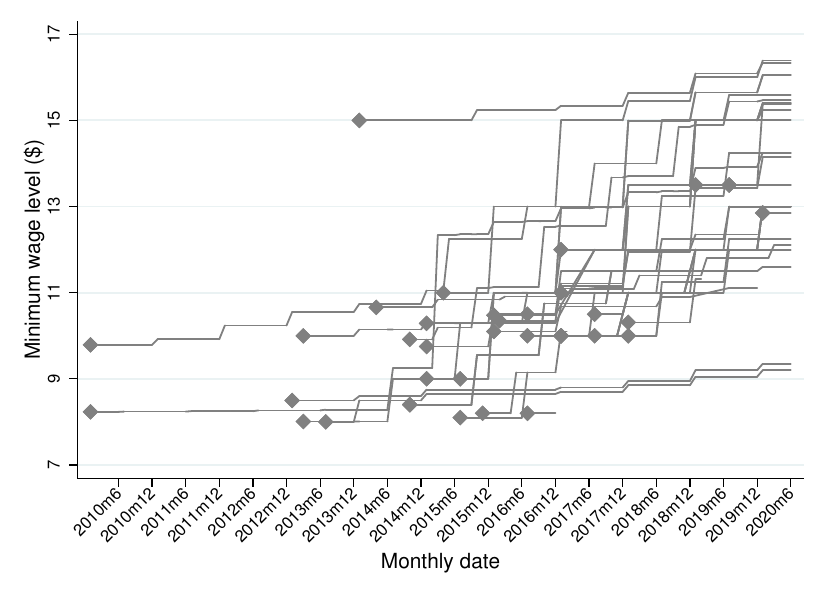}
    \end{subfigure}

    \begin{minipage}{.95\textwidth} \footnotesize
        \vspace{3mm}
        Notes:
        Data are from the MW panel described in Section
        \ref{sec:data_mw_panel}.
        Lines show the levels of the MW for jurisdictional policies
        that were binding for at least one ZIP code inside them in some month 
        between January 2010 and June 2020.
        Diamonds indicate the first month the MW policy became
        operational within the same period.
        Panel A reports state level policies.
        Panel B reports sub-state level policies.
    \end{minipage}
\end{figure}

\clearpage
\begin{figure}[h!]
    \centering
    \caption{Probability of being a head of household, by income decile}
    \label{fig:ahs_hhead}

    \includegraphics[width = .75\textwidth]{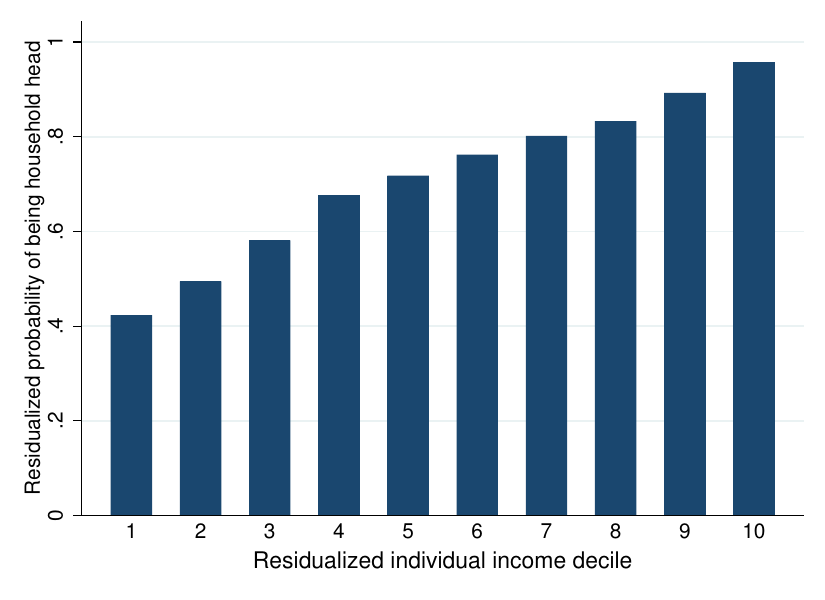}
    
    \begin{minipage}{.95\textwidth} \footnotesize
        \vspace{3mm}
        Notes: Data are from the 2011 and 2013 American Housing Surveys.
        The figure shows the probability that an individual is a head of
        household, by individual income decile.
        We construct the figure as follows.
        First, we residualize the variable in the y-axis and individual income 
        by SMSA indicators, the closest analogue of CBSAs available in the data.
        Second, we construct deciles of the residualized individual income 
        variable.
        Finally, we take the average of the residualized y-variable within each 
        decile.
        Individuals that do not work are excluded from the figure.
    \end{minipage}
\end{figure}

\clearpage
\begin{figure}[h!]
    \centering
    \caption{Average rent, square footage, and rent per square foot by household 
             income decile, renters sample}
    \label{fig:ahs_rent_sqft}

    \begin{subfigure}{.69\textwidth} \centering
        \caption*{Rents per square foot}
        \includegraphics[width = 1\textwidth]
            {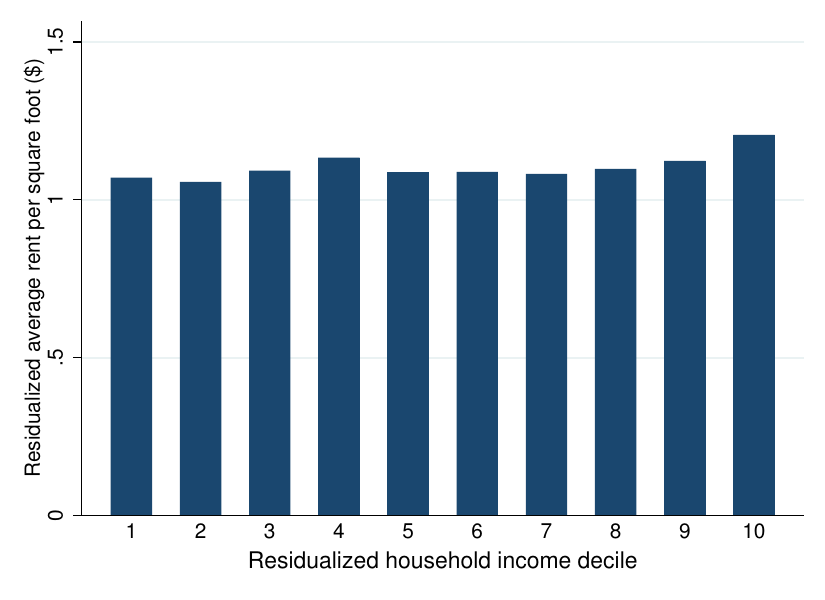}
    \end{subfigure}\\
    \begin{subfigure}{.48\textwidth} \centering
        \caption*{Rents}
        \includegraphics[width = .99\textwidth]
            {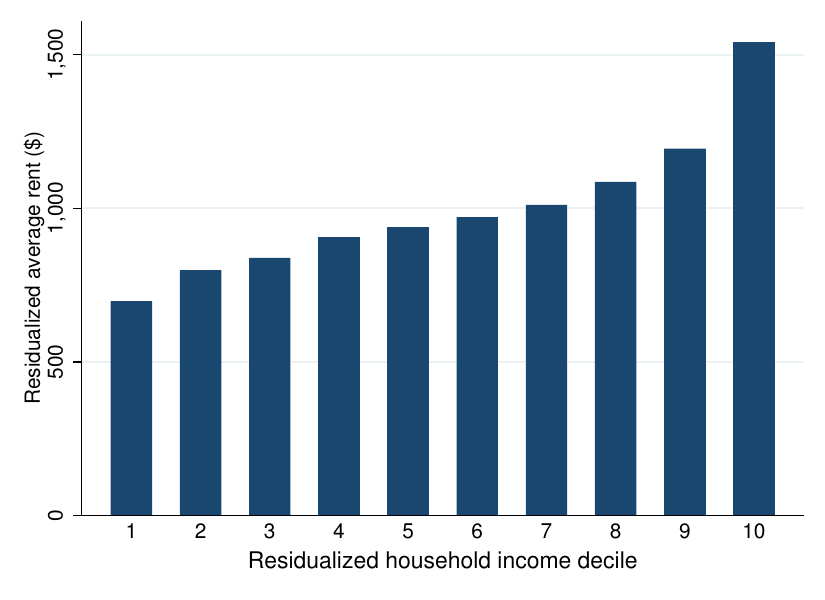}
    \end{subfigure}%
    \begin{subfigure}{.48\textwidth} \centering
        \caption*{Square footage}
        \includegraphics[width = .99\textwidth]
            {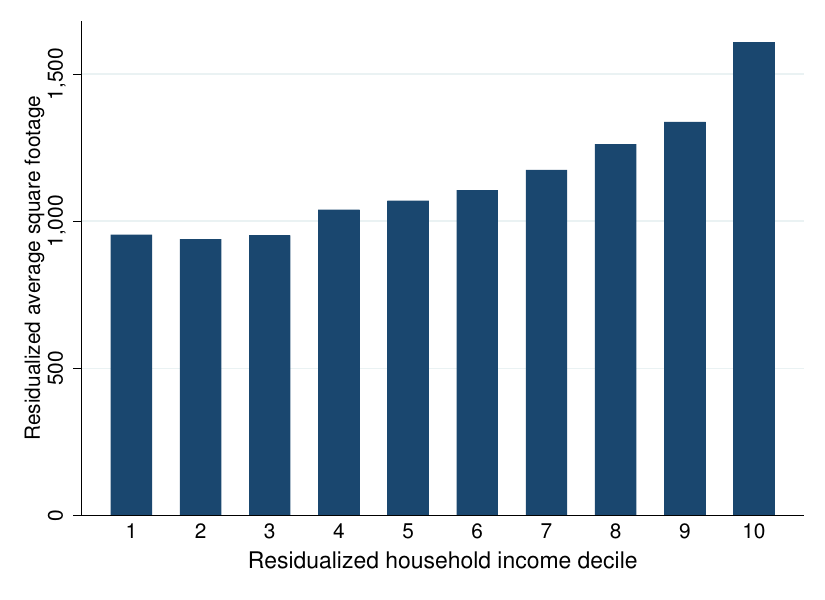}
    \end{subfigure}

    \begin{minipage}{.95\textwidth} \footnotesize
        \vspace{3mm}
        Notes: Data are from the 2011 and 2013 American Housing Surveys.
        The top figure shows average rents per square foot by household income.
        The bottom left figure shows average rents by household income.
        The bottom right figure shows average square feet by household income.
        The variable rent per square foot is defined as total rental payments 
        divided by total square feet.
        We construct the figure as follows. 
        First, we residualize the variable in the y-axis and household income 
        by SMSA indicators, the closest analogue of CBSAs available in the data.
        Second, we construct deciles of the residualized household income 
        variable.
        Finally, we take the average of the residualized y-variable within each 
        decile.
        The sample includes households with non-missing values for square 
        footage and rental payments.
        We exclude from the calculation non-conventional housing units, such as 
        mobile homes, hotels, and others.
    \end{minipage}
\end{figure}

\clearpage
\begin{figure}[h!]
    \centering
    \caption{Properties of building where household unit is located by
             household income decile, full sample}
    \label{fig:ahs_unit_types}

    \begin{subfigure}{.75\textwidth} \centering
        \caption*{Distribution of the number of units in building}
        \includegraphics[width = 1\textwidth]
            {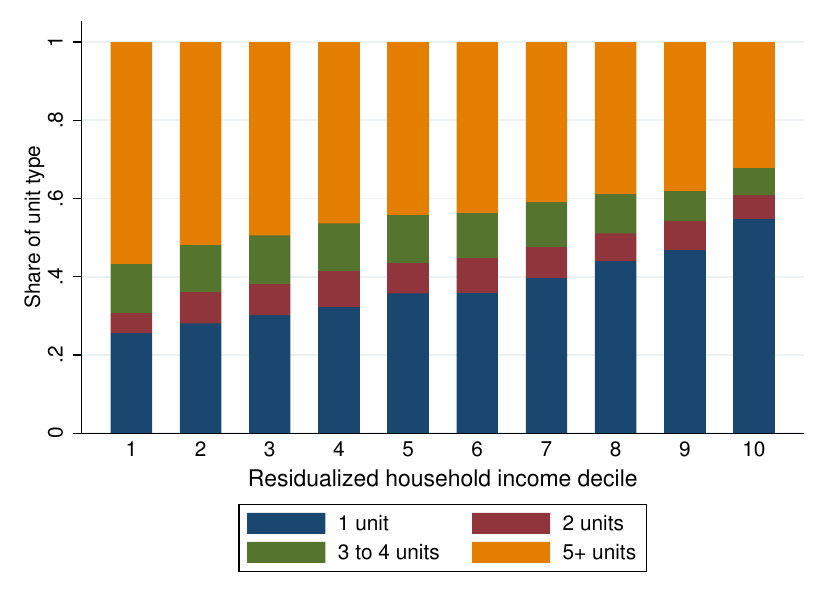}
    \end{subfigure}\\
    \begin{subfigure}{.75\textwidth} \centering
        \caption*{Probability building is a condominium or cooperative housing}
        \includegraphics[width = 1\textwidth]
            {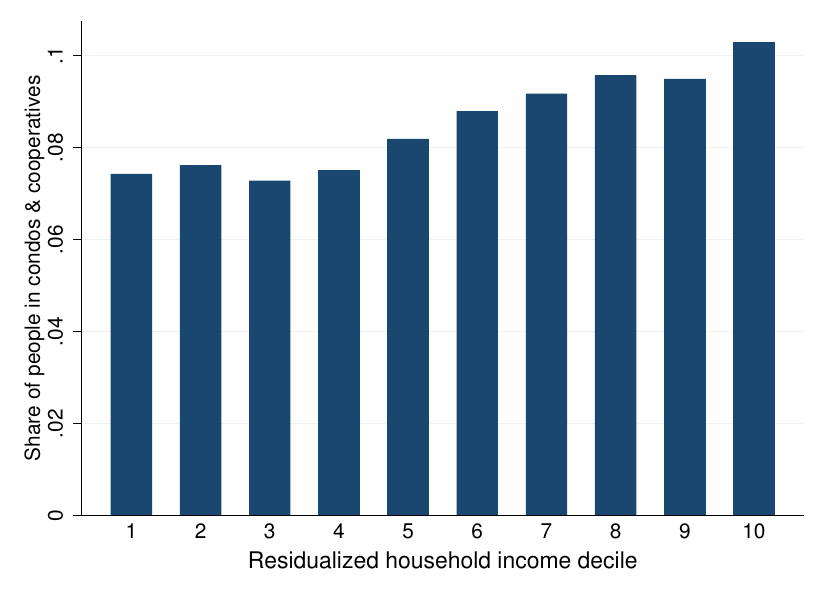}
    \end{subfigure}%

    \begin{minipage}{.95\textwidth} \footnotesize
        \vspace{3mm}
        Notes: Data are from the 2011 and 2013 American Housing Surveys.
        The top figure shows the number of housing units in the building where 
        the household is located, and the bottom figure shows the share of 
        housing units located in condominiums or cooperative housing, both by 
        household income.
        We construct the figure as follows.
        First, we residualize the variable in the y-axis and household income 
        by SMSA indicators, the closest analogue of CBSAs available in the data.
        Second, we construct deciles of the residualized household income 
        variable.
        Finally, we take the average of the residualized y-variable within each 
        decile.
        We exclude from the calculation non-conventional housing units, such as 
        mobile homes, hotels, and others.
    \end{minipage}
\end{figure}

\clearpage
\begin{figure}[h!]
    \centering
    \caption{Estimated housing expenditure shares in 2018, Chicago-Naperville-Elgin CBSA}
    \label{fig:map_hous_exp_chicago}

    \includegraphics[width = 0.65\textwidth]
            {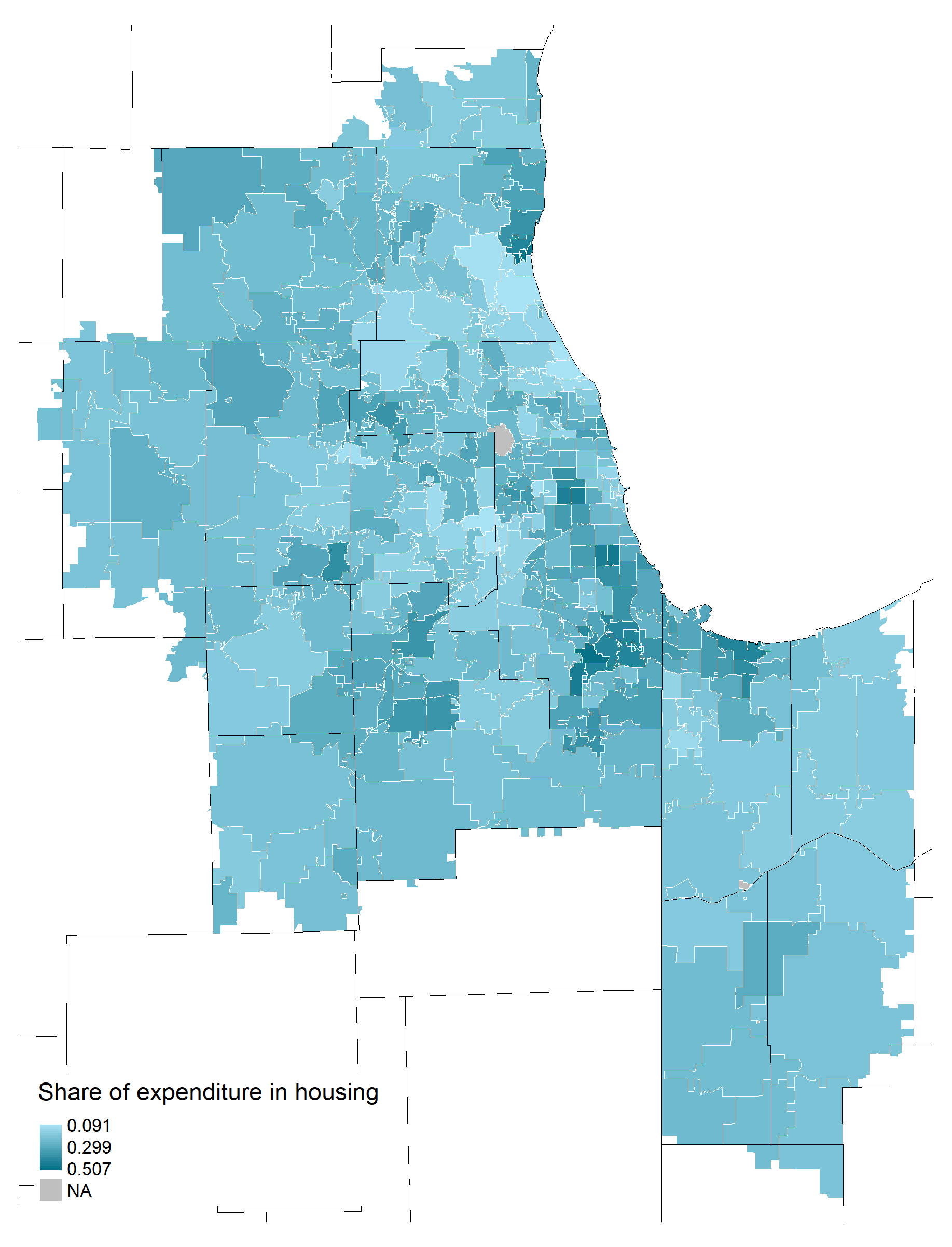}

    \begin{minipage}{.95\textwidth} \footnotesize
        \vspace{3mm}
        Notes:
        Data are from the \textcite{IRS} and the \textcite{hudSAFMR}.
        The figure shows housing expenditure shares as computed in
        Online Appendix \ref{sec:measure_housing_expenditure}, namely,
        by dividing the SAFMR 40th percentile rental value for a 2-bedroom 
        apartment by average monthly wage per household divided, both
        for 2018.
        We include ZIP codes located in the Chicago-Naperville-Elgin CBSA.
    \end{minipage}
\end{figure}

\clearpage

\begin{figure}[h!]
    \centering
    \caption{Time trends in rents according to Zillow and SAFMR}
    \label{fig:trend_zillow_safmr}

	\includegraphics[width = 0.8\textwidth]
        {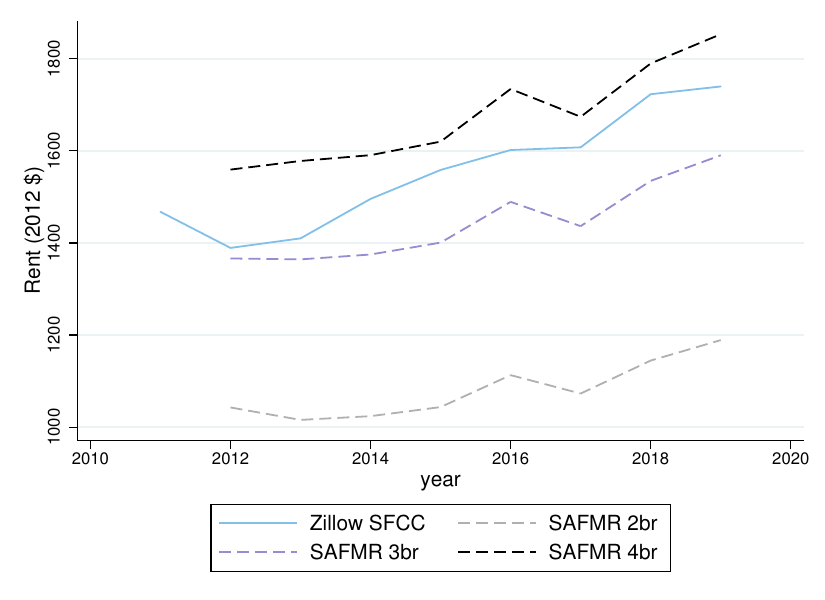}

    \begin{minipage}{.95\textwidth} \footnotesize
        \vspace{3mm}
        Notes:
        Data are from \textcite{ZillowData} and Small Area Fair Market Rents 
        (\citeyear[SAFMR]{hudSAFMR}).
        The figure compares the evolution of the median rental value in Zillow
        to three SAFMRs series, for 2, 3, and 4 or more bedrooms.
        SAFMR data generally corresponds to the 40th percentile of the
        distribution of paid rents in a given ZIP code.
        For more information on how SAFMRs are calculated, see 
        \textcite[][page 41641]{hudPreamble}.
    \end{minipage}
\end{figure}

\clearpage
\begin{figure}[h!]
    \centering
    \caption{Sample of ZIP codes in Zillow data and population density, mainland US}
    \label{fig:map_zillow_sample}

    \begin{subfigure}{1\textwidth}
        \centering
        \caption*{Zillow ZIP codes}
        \includegraphics[width = 0.95\textwidth]
            {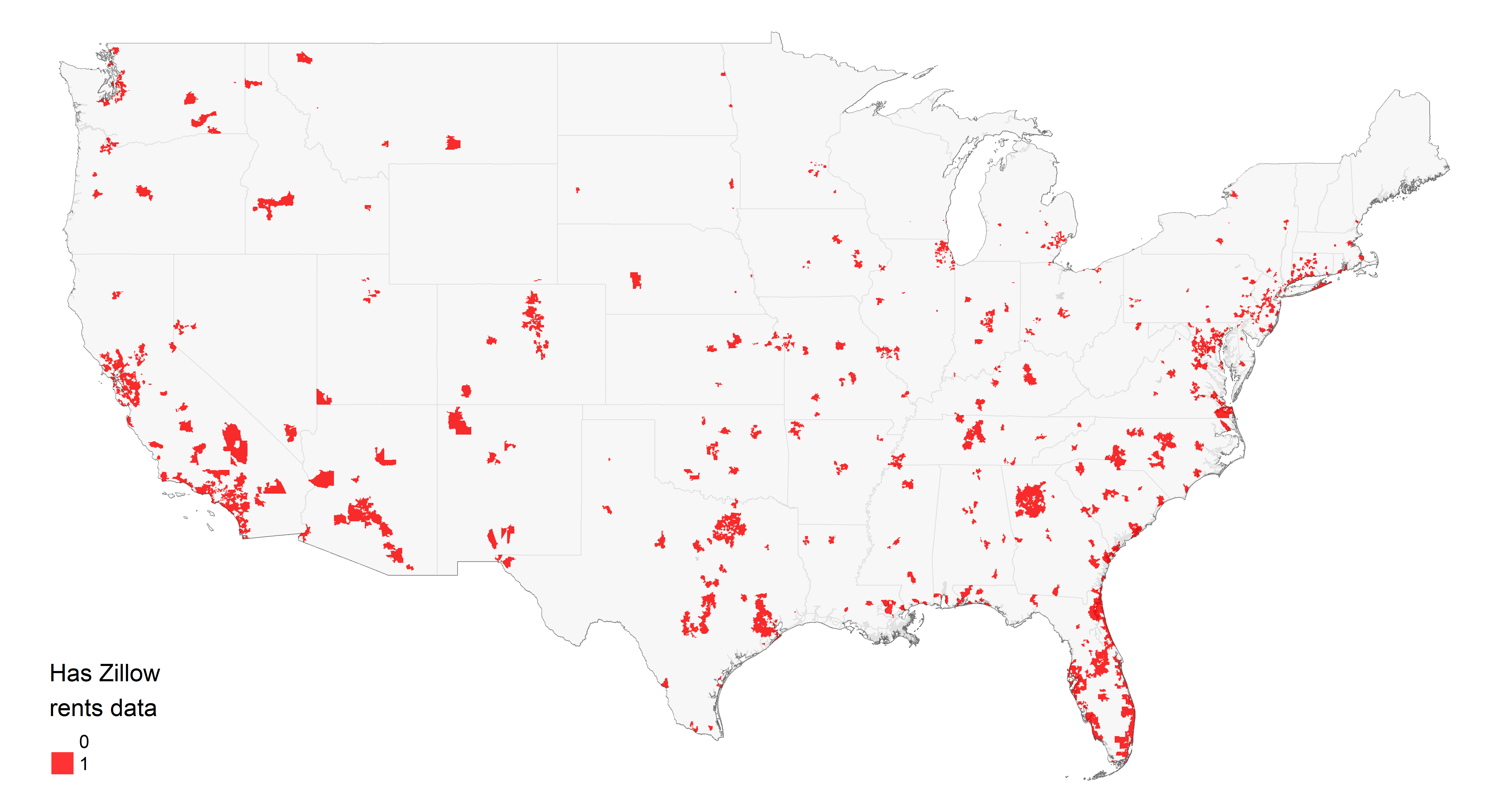}
    \end{subfigure}\\
    \begin{subfigure}{1\textwidth}
        \centering
        \caption*{Population Density}
        \includegraphics[width = 0.95\textwidth]
            {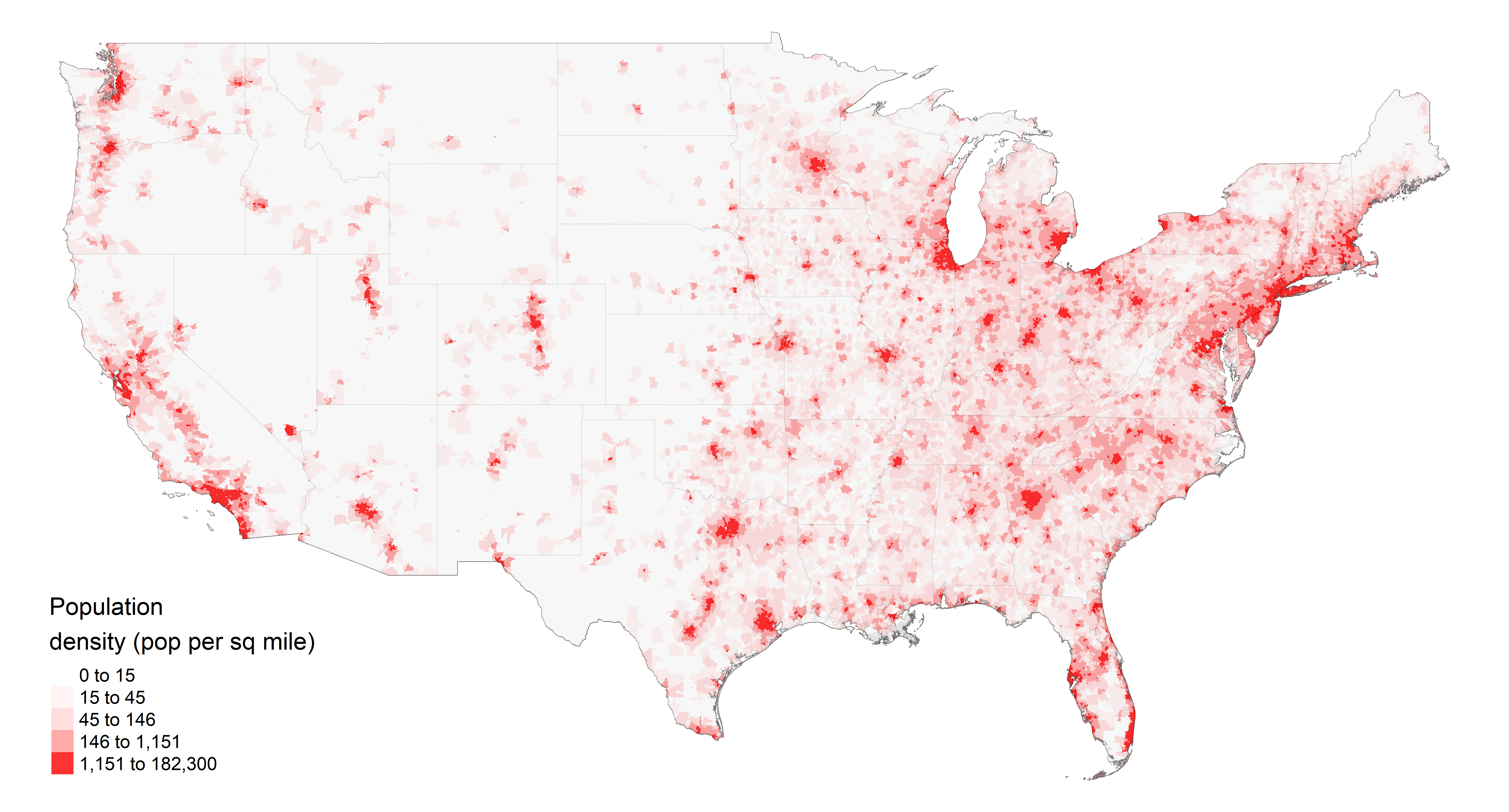}
    \end{subfigure}

    \begin{minipage}{.95\textwidth} \footnotesize
        \vspace{3mm}
        Notes:
        Data are from \textcite{ZillowData} and \textcite{ESRI}.
        The figure compares the sample of ZIP codes available in Zillow to
        population density at the ZIP code level.
        The top figure shows the sample of the ZIP codes that have rents data in 
        the SFCC category at any point in the period 2010--2019.
        The bottom figure shows quintiles of population density according to the
        2010 US Census, and measured in population per square mile.
    \end{minipage}
\end{figure}

\clearpage
\begin{figure}[h!]
    \centering
    \caption{Changes in log rents in the Chicago-Naperville-Elgin CBSA, July 2019}
    \label{fig:map_rents_chicago_jul2019}

    \includegraphics[width = 0.65\textwidth]
            {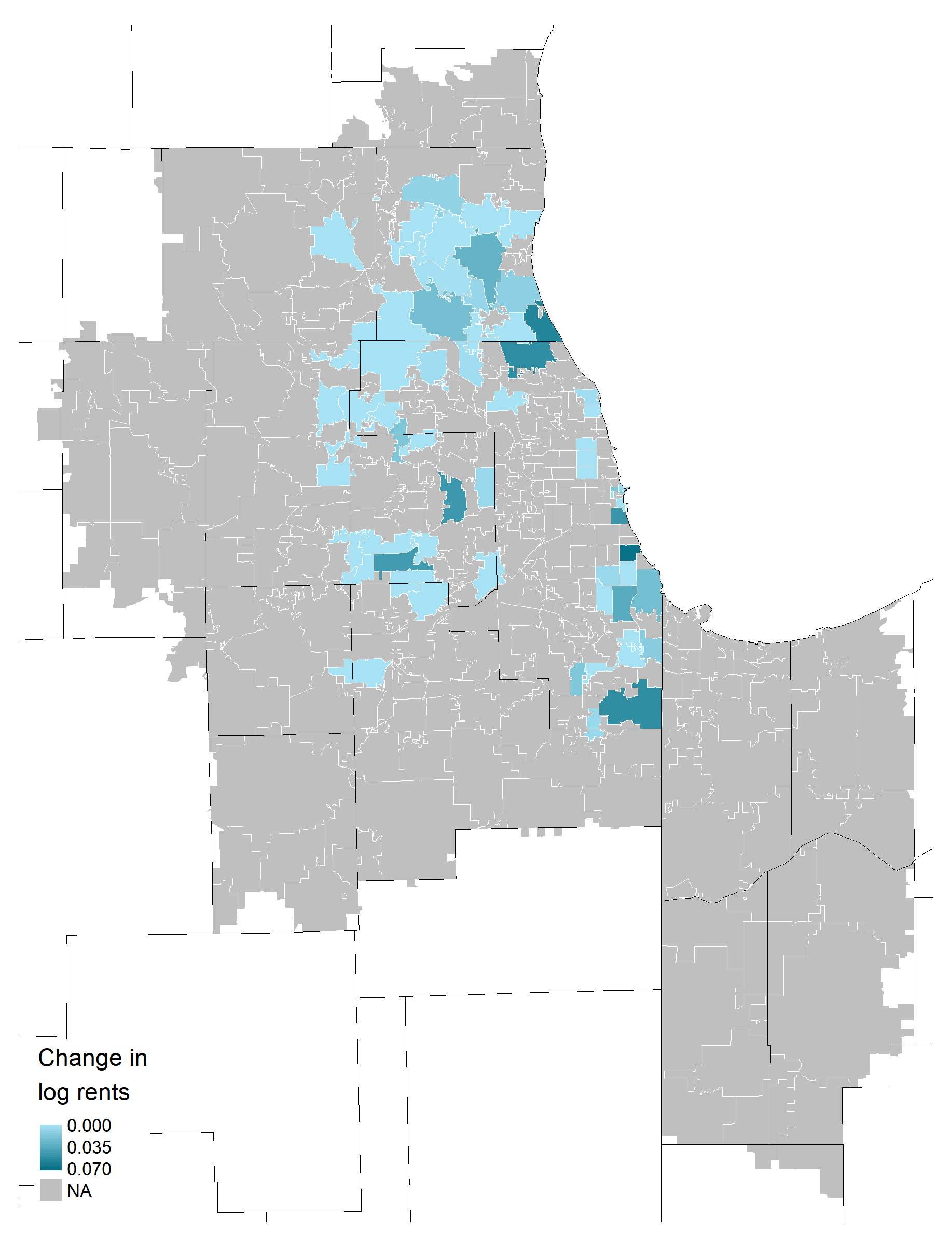}

    \begin{minipage}{.95\textwidth} \footnotesize
        \vspace{3mm}
        Notes: 
        Data are from \textcite{ZillowData}.
        The figure shows the change in the log of median rents per square foot 
        in the SFCC category in the month of June 2019 in ZIP codes located in 
        the Chicago-Naperville-Elgin CBSA.
    \end{minipage}
\end{figure}

\clearpage
\begin{figure}[h!]
    \centering
    \caption{Distribution of statutory minimum wage changes, Zillow sample}
    \label{fig:mw_changes_dist_zillow}

    \begin{subfigure}{.7\textwidth} \centering
        \caption*{Intensity}
        \includegraphics[width = 1\textwidth]
            {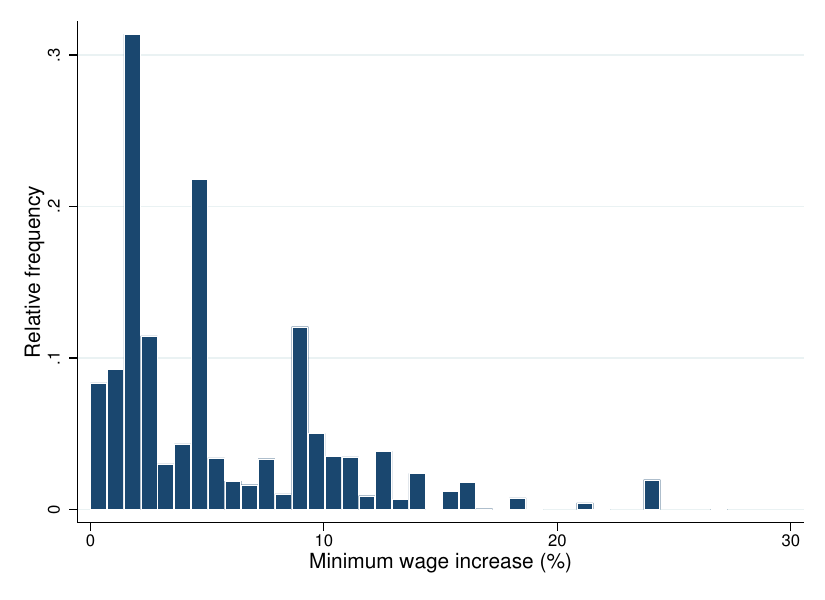}
    \end{subfigure}\\
    \begin{subfigure}{.7\textwidth} \centering
        \caption*{Timing}
        \includegraphics[width = 1\textwidth]
            {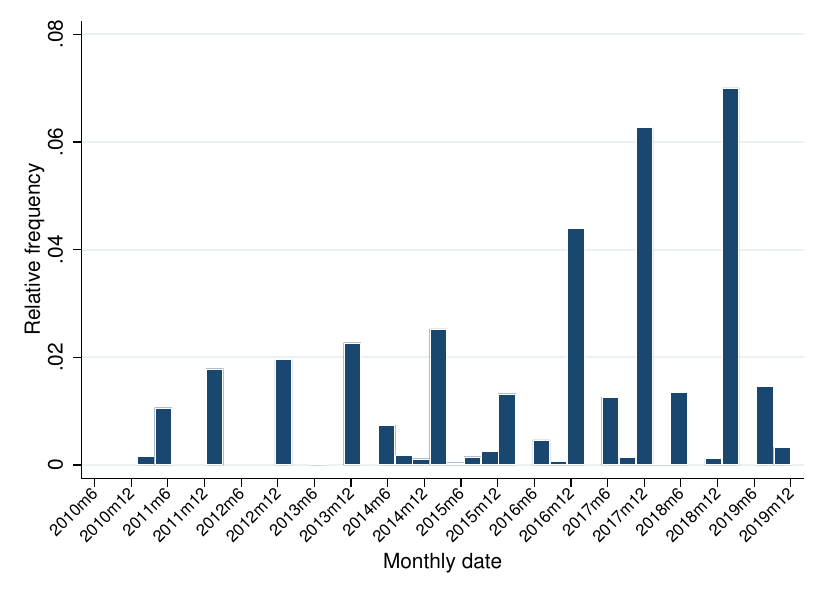}
    \end{subfigure}

    \begin{minipage}{.95\textwidth} \footnotesize
        \vspace{3mm}
        Notes:
        Data are from the MW panel described in
        Section \ref{sec:data_mw_panel}.
        The histograms show the distribution of positive MW changes in the 
        sample of ZIP codes available in the Zillow data.
        We exclude a few negative changes for expository purposes.
        The top figure (``Intensity'') reports the intensity of the changes in 
        percentage terms.
        The bottom figure (``Timing'') reports the distribution of such changes 
        over time.
    \end{minipage}
\end{figure}

\clearpage

\begin{figure}[hbt!]
    \centering
    \caption{Estimates of the effect of the minimum wage on rents, baseline sample
             including leads and lags of the residence MW}
    \label{fig:dynamic_residence}

    \includegraphics[width = .65\textwidth]
        {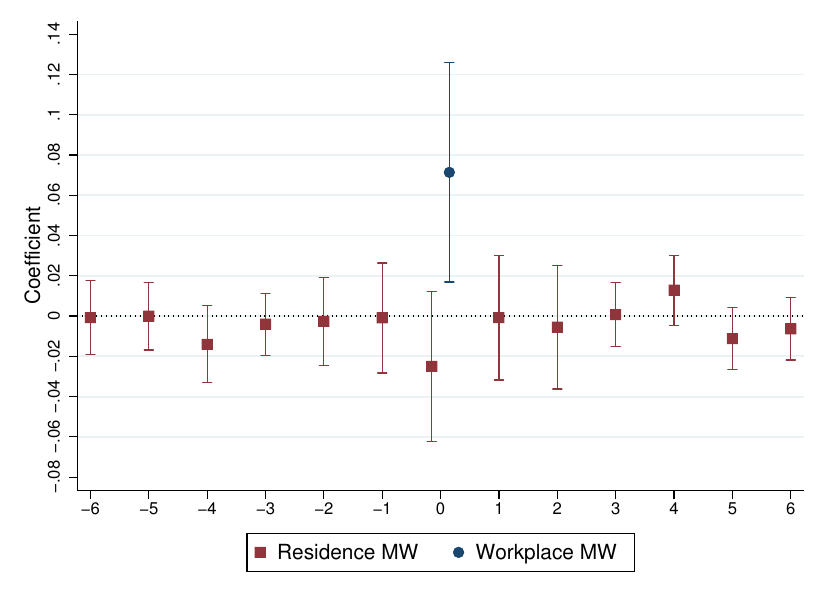}

    \begin{minipage}{.95\textwidth} \footnotesize
        \vspace{3mm}
        Notes:
        Data are from the baseline estimation sample described in Section 
        \ref{sec:data_final_panel}.
        The figure shows coefficients from regressions of the log of rents per
        square foot on the residence and workplace MW measures, including 
        six leads and lags of the residence MW.
        All regressions include time-period fixed effects and economic controls 
        that vary at the county by month and county by quarter levels.
        The measure of rents per square foot correspond to the Single Family, 
        Condominium and Cooperative houses from Zillow.
        The residence MW is defined as the log statutory MW in the same ZIP code.
        The workplace MW is defined as the log statutory MW where the average 
        resident of the ZIP code works, constructed using LODES 
        origin-destination data.
        Economic controls from the QCEW include the change of the following 
        variables: the log of the average wage, the log of employment, and 
        the log of the establishment count for the sectors 
        ``Information,'' ``Financial activities,'' and ``Professional and 
        business services.''
        95\% pointwise confidence intervals are obtained from standard errors 
        clustered at the state level.
    \end{minipage}
\end{figure}

\clearpage
\begin{figure}[htb!]
    \centering
    \caption{Relationship between log rents and the minimum wage measures, 
             sample of affected ZIP code-months}
    \label{fig:non_parametric}
    
    \begin{minipage}{.95\textwidth} \centering
        Panel A: Raw data
        \vspace{1mm}
    \end{minipage}
    \begin{subfigure}{0.5\textwidth}
        \caption*{Residence MW}
        \includegraphics[width = 1\textwidth]{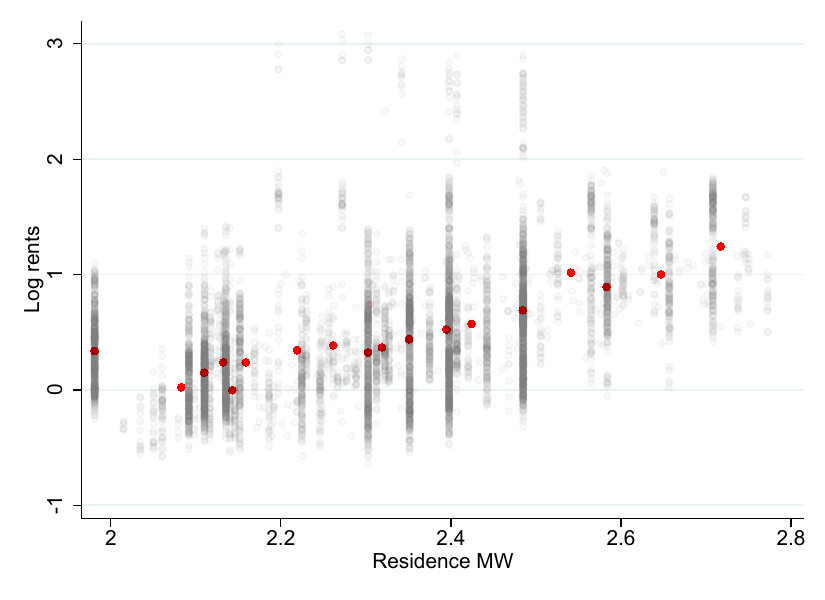}
    \end{subfigure}%
    \begin{subfigure}{0.5\textwidth}
        \caption*{Workplace MW}
        \includegraphics[width = 1\textwidth]{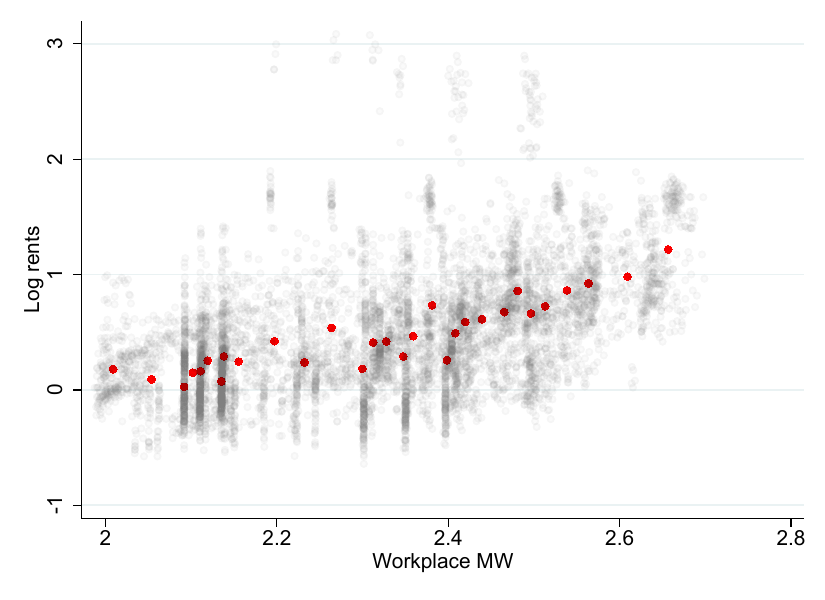}
    \end{subfigure}\\

    \vspace{2mm}
    \begin{minipage}{.95\textwidth} \centering
        Panel B: Conditional on ZIP code FE and the other MW measure
        \vspace{1mm}
    \end{minipage}
    \begin{subfigure}{0.5\textwidth}
        \caption*{Residence MW}
        \includegraphics[width = 1\textwidth]{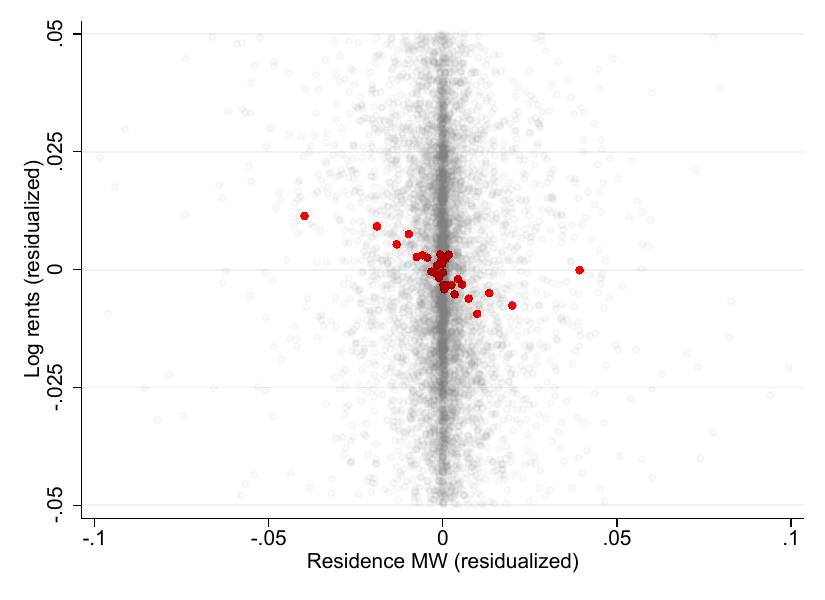}
    \end{subfigure}%
    \begin{subfigure}{0.5\textwidth}
        \caption*{Workplace MW}
        \includegraphics[width = 1\textwidth]{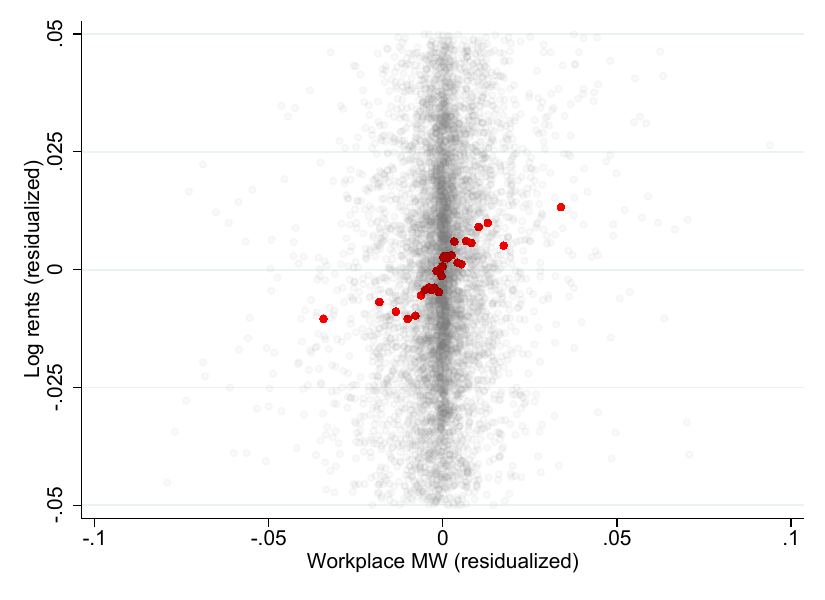}
    \end{subfigure}

    \begin{minipage}{.95\textwidth} \footnotesize
        \vspace{3mm}
        Notes:
        Data are from Zillow and LODES.
        The plot shows the unconditional and conditional relationship between 
        log rents and the MW measures.
        The sample is composed of ZIP code-month observations located in CBSAs 
        where there was some statutory MW increase in the month of interest. 
        The rents variable correspond to log rents per square foot in the SFCC 
        category in Zillow.
        The workplace MW measure is constructed using commuting data from the 
        closest prior year.
        Panel A shows the raw relationship between log rents and workplace 
        and residence MW levels.
        Panel B shows the same relationship using residuals from regressions 
        on ZIP code indicators and 100 indicators of the other MW measure.
        Red dots correspond to 30 equally-sized bins of the $x$-axis variable.
        Gray dots correspond to all data points in Panel A, and only those 
        data points that fall within the range of the plot axes in Panel B.
    \end{minipage}
\end{figure}

\clearpage
\begin{figure}[h!]
    \centering
    \caption{Residualized changes in the workplace minimum wage and log rents,
             Chicago-Naperville-Elgin CBSA on July 2019}
    \label{fig:map_residuals_chicago_jul2019}

    \begin{subfigure}{0.5\textwidth}
        \centering
        \caption*{Residualized change in wkp.\ MW}
        \includegraphics[width = 1\textwidth]
            {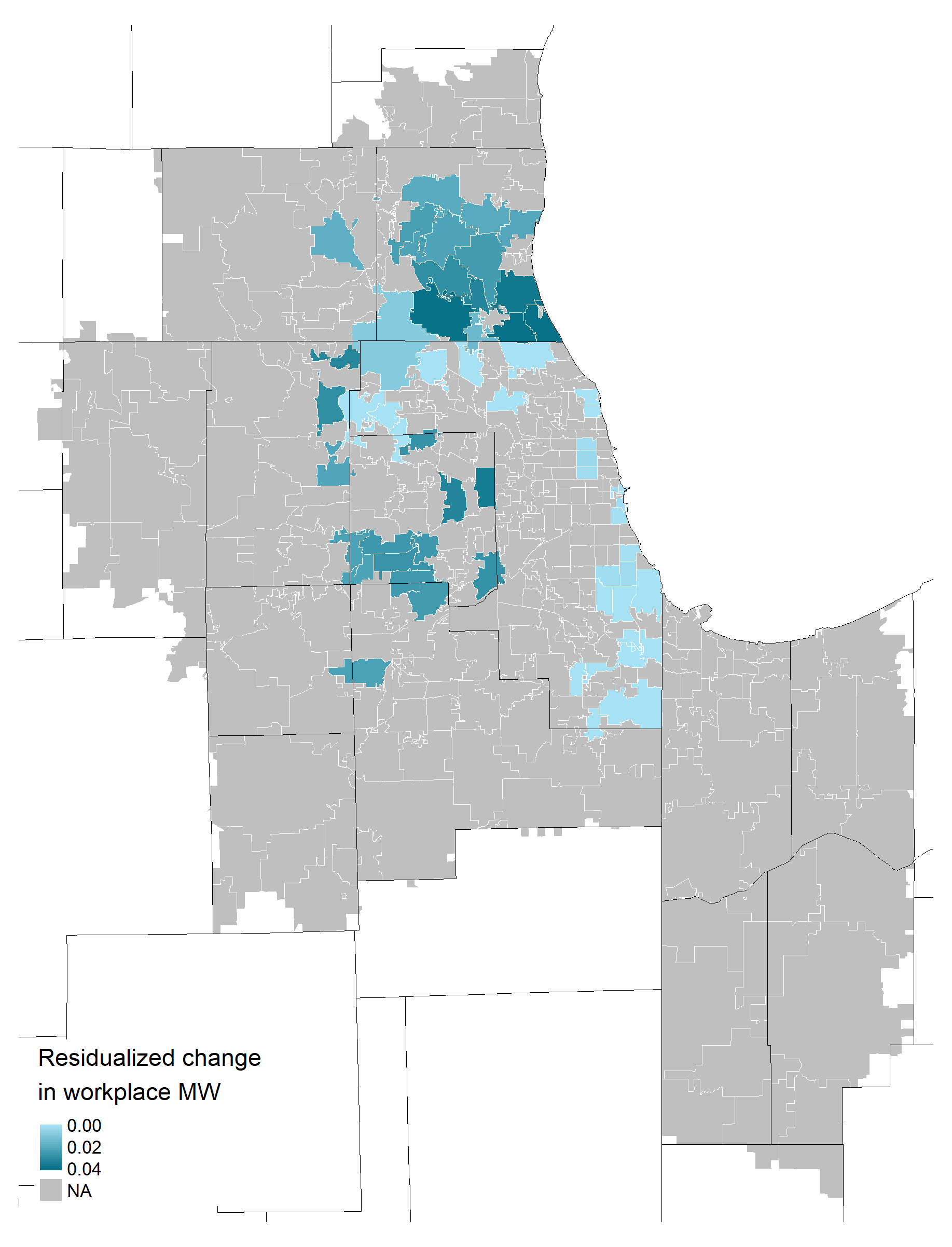}
    \end{subfigure}%
    \begin{subfigure}{0.5\textwidth}
        \centering
        \caption*{Residualized change in log rents}
        \includegraphics[width = 1\textwidth]
            {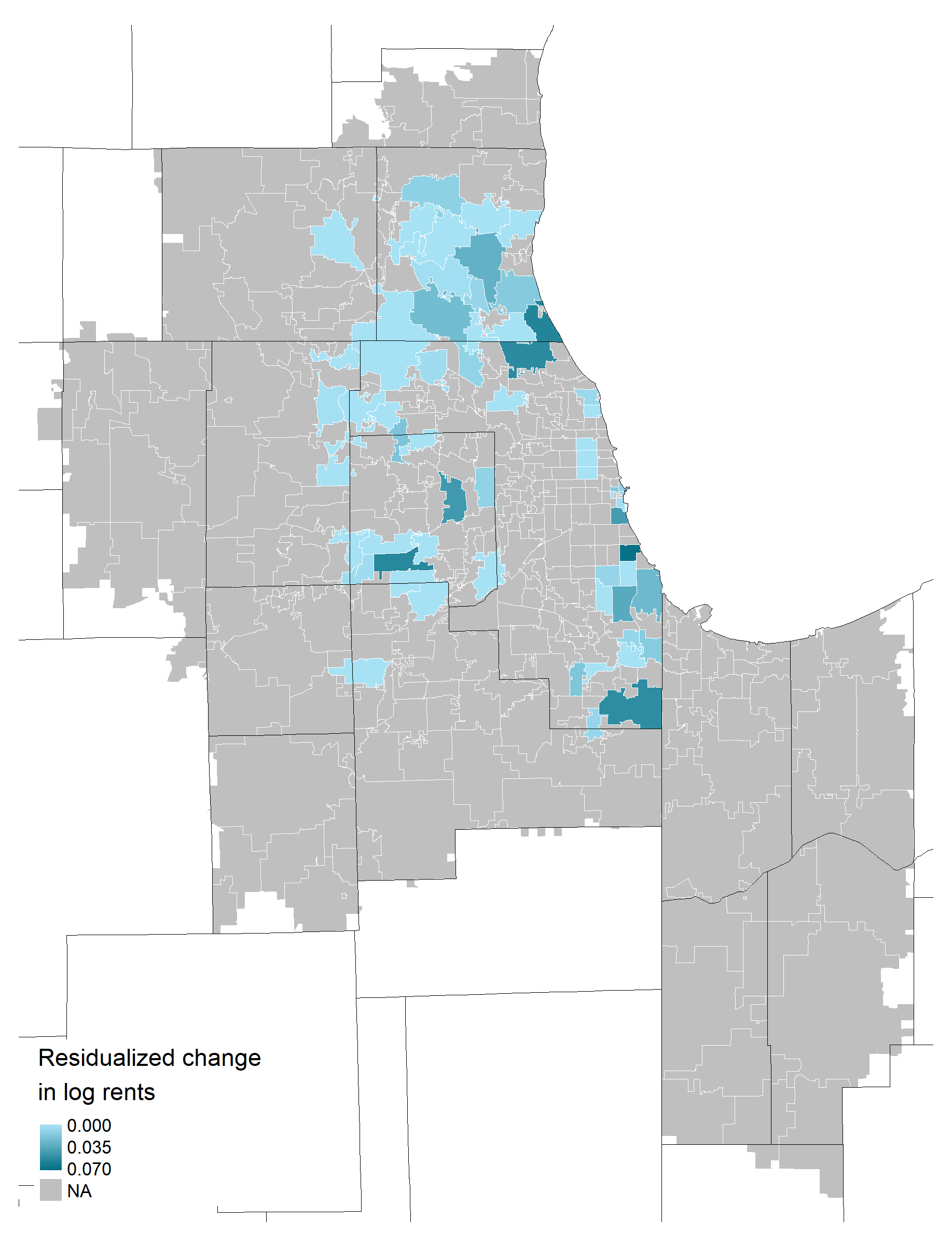}
    \end{subfigure}

    \begin{minipage}{.95\textwidth} \footnotesize
        \vspace{3mm}
        Notes: 
        Data are from the unbalanced estimation panel described in Section
        \ref{sec:data_final_panel}.
        The left figure maps the residuals of a regression of the change in the 
        workplace MW measure on the change in the residence MW measure, 
        including economic controls and year-month fixed effects.
        The right figure maps the residuals of a regression of the change in log 
        rents on economic controls and year-month fixed effects.
        The residence MW is defined as the log statutory MW in the same ZIP code.
        The workplace MW is defined as the statutory MW where the average 
        resident of the ZIP code works, constructed using LODES 
        origin-destination data.
        Economic controls from the QCEW include the change of the following 
        variables: the log of the average wage, the log of employment, and the 
        log of the establishment count for the sectors ``Information,''
        ``Financial activities,'' and ``Professional and business services.''
    \end{minipage}
\end{figure}

\clearpage

\begin{figure}[h!]
    \centering
    \caption{Estimates of the effect of the minimum wage on rents, 
             Zillow rental index}
    \label{fig:dynamic_zori}

    \begin{subfigure}{.65\textwidth}
        \caption{Control for year-month by CBSA fixed effects}
        \includegraphics[width = 1\textwidth]
            {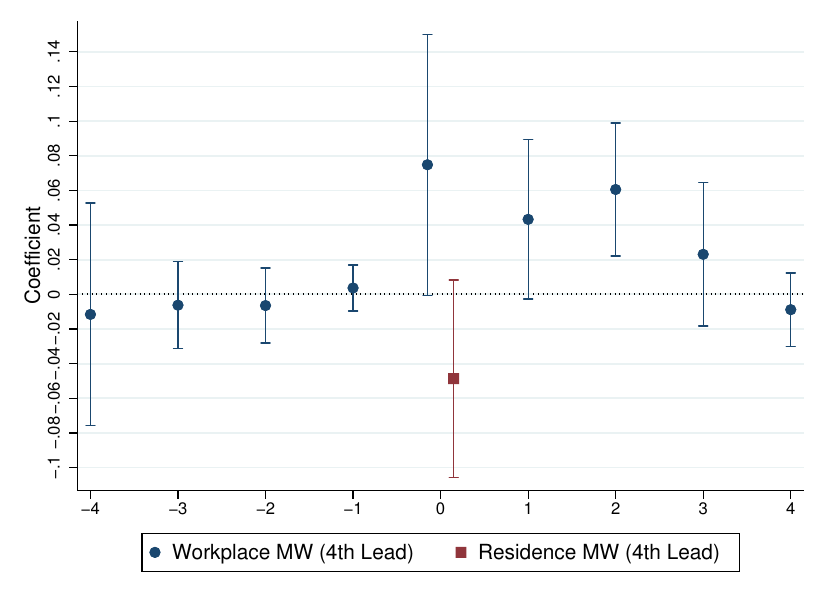}
    \end{subfigure}\\
    \begin{subfigure}{.65\textwidth}
        \caption{Control for year-month fixed effects}
        \includegraphics[width = 1\textwidth]
            {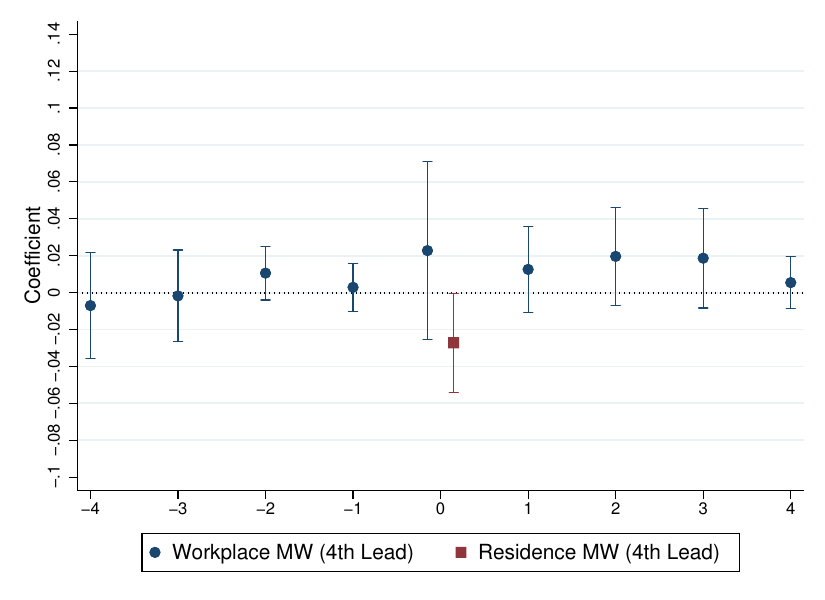}
    \end{subfigure}

    \begin{minipage}{.95\textwidth} \footnotesize
        \vspace{3mm}
        Notes:
        Data are from the baseline estimation sample described in Section 
        \ref{sec:data_final_panel}.
        The figures show coefficients from regressions of the change in 
        log of Zillow rental index on leads and lags of the change in the 
        workplace MW and the change in the residence MW, using the 
        4th lead of the MW-based measures.
        The top panel includes CBSA by year-month fixed effects, whereas
        the bottom panel includes year-month fixed effects.
        Both regressions include economic controls that vary at the county by 
        month and county by quarter levels, which include the change of the 
        following variables: the log of the average wage, the log of employment, 
        and the log of the establishment count for the sectors 
        ``Information,'' ``Financial activities,'' and ``Professional and 
        business services.''
        95\% pointwise confidence intervals are obtained from standard errors 
        clustered at the state level.
    \end{minipage}
\end{figure}

\clearpage

\begin{figure}[h!]
    \centering
    \caption{Estimates of the effect of the minimum wage on rents, county by month data}
    \label{fig:dynamic_county_month}

	\includegraphics[width = 0.75\textwidth]{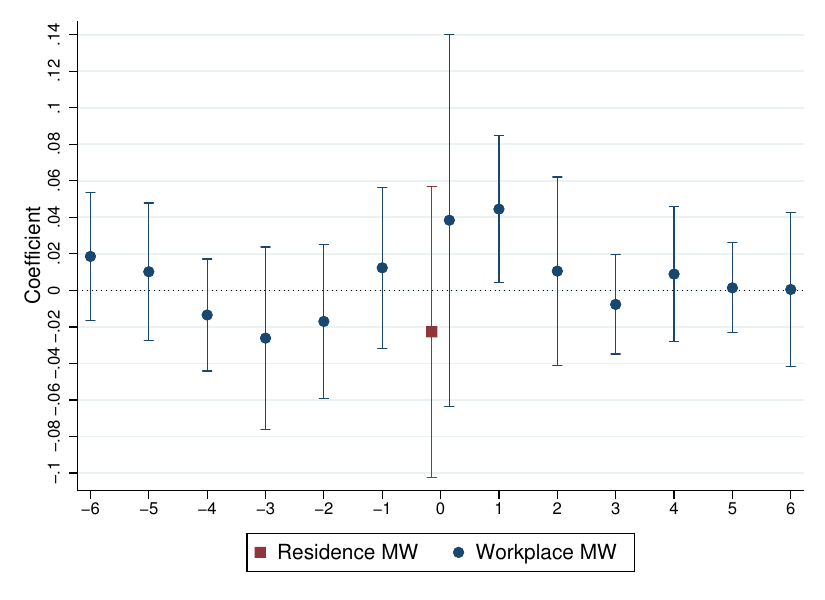}

    \begin{minipage}{.95\textwidth} \footnotesize
        \vspace{3mm}
        Notes:
        Data are from the county-by-month panel described in Section 
        \ref{sec:data_final_panel}.
        We plot coefficients from regressions of the log of rents per
        square foot on the residence MW and workplace MW, including 
        six leads and lags of the workplace MW measure.
        All regressions are estimated in first differences and include 
        time-period fixed effects and economic controls that vary at the 
        county by month and county by quarter levels.
        The measure of rents per square foot corresponds to the Single Family, 
        Condominium and Cooperative houses from Zillow.
        The residence MW is defined as the log statutory MW at the county.
        The workplace MW is defined as the log statutory MW where the average 
        resident of the county works, constructed using LODES 
        origin-destination data.
        Economic controls from the QCEW include the change of the following 
        variables: the log of the average wage, the log of employment, and the 
        log of the establishment count for the sectors ``Information,'' 
        ``Financial activities,'' and ``Professional and business services.''
        95\% pointwise confidence intervals are obtained from standard errors 
        clustered at the state level.
    \end{minipage}
\end{figure}

\clearpage

\begin{figure}[h!]
    \centering
    \caption{Estimates of the effect of the minimum wage on rents, stacked sample including
             leads and lags}
    \label{fig:dynamic_stacked}

    \includegraphics[width = 0.75\textwidth]{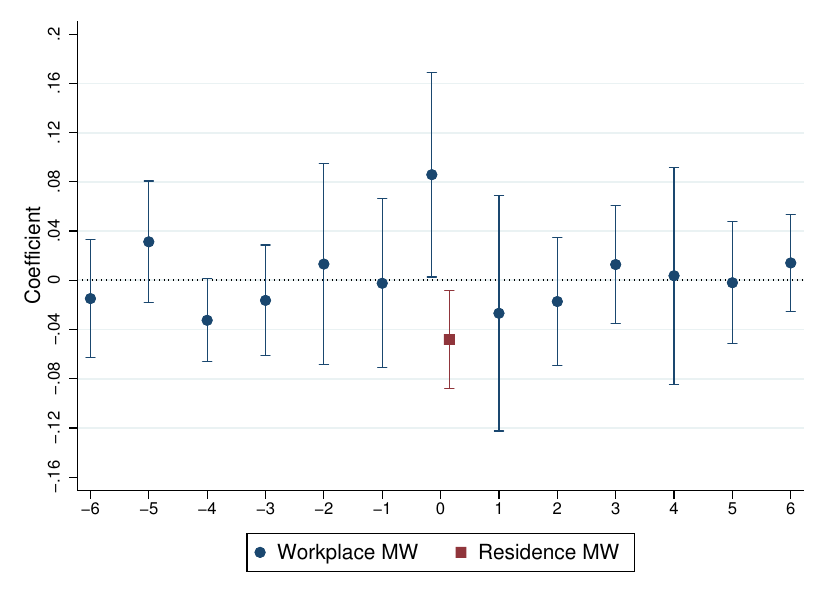}

    \begin{minipage}{.95\textwidth} \footnotesize
        \vspace{3mm}
        Notes:
        Data are from Zillow,
        the MW panel described in Section \ref{sec:data_mw_panel},
        LODES origin-destination statistics,
        and the QCEW.
        The figure mimics estimates in Figure \ref{fig:dynamic_workplace} 
        using a ``stacked'' sample.
        We construct the sample as explained in Online Appendix Table 
        \ref{tab:stacked_w6}.
        95\% pointwise confidence intervals are obtained from standard errors 
        clustered at the state level.
    \end{minipage}
\end{figure}

\clearpage
\begin{figure}[h!]
    \centering
    \caption{Distribution of changes in minimum wage measures under a 
             counterfactual federal minimum wage of \$9, urban ZIP codes}
    \label{fig:cf_hist_res_and_wkp_mw}
    
    \begin{subfigure}{0.5\textwidth}
        \caption*{Residence MW}
        \includegraphics[width = 1\textwidth]{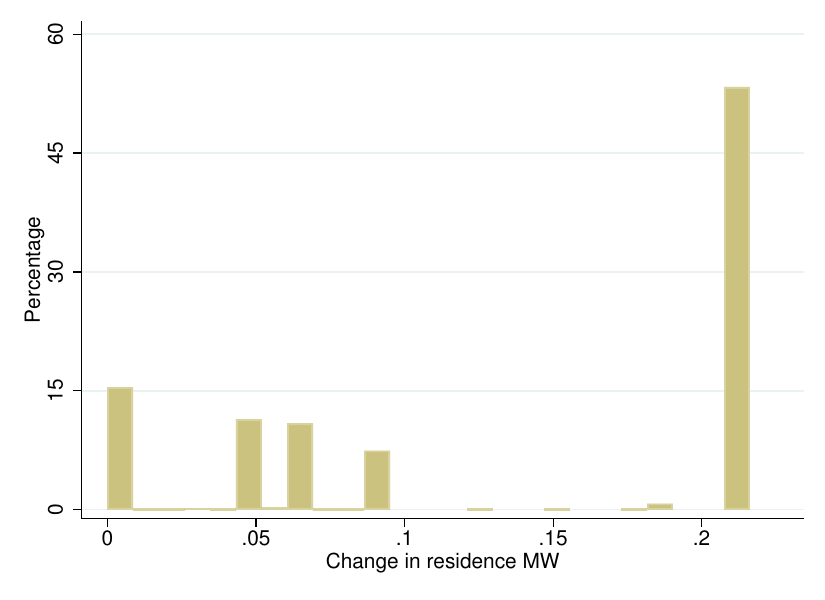}
    \end{subfigure}%
    \begin{subfigure}{0.5\textwidth}
        \caption*{Workplace MW}
        \includegraphics[width = 1\textwidth]{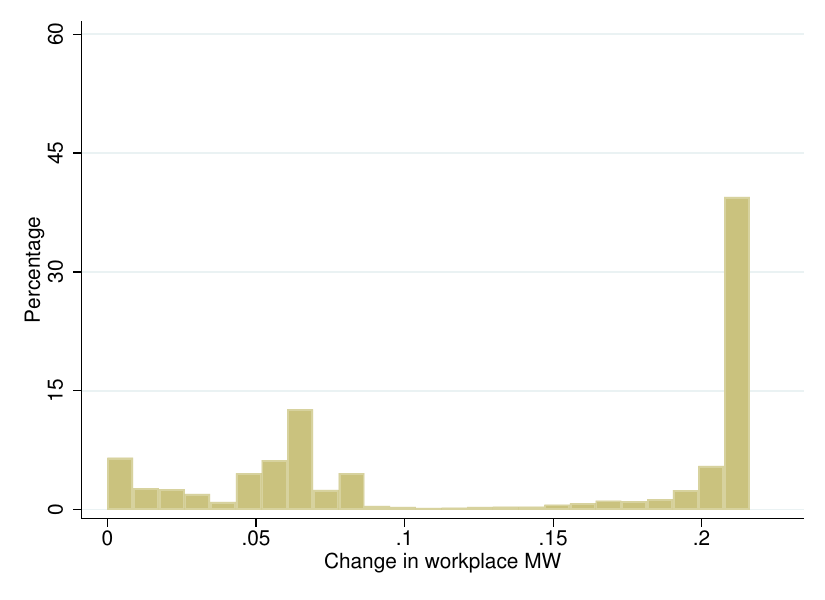}
    \end{subfigure}

    \begin{minipage}{.95\textwidth} \footnotesize
        \vspace{3mm}
        Notes:
        Data are from LODES and the MW panel described in Section
        \ref{sec:data_mw_panel}.
        The figures show the distribution of changes in the residence and 
        workplace MW measures generated by a counterfactual increase to \$9 
        in the federal MW in January 2020, holding constant other MW policies 
        in their December 2019 levels.
        The unit of observation is the urban ZIP code, where we define a ZIP code 
        as urban if it belongs to a CBSA with at least 80\% of its population 
        classified as urban by the 2010 Census.
        We exclude ZIP codes located in CBSAs where the estimated increase in 
        income was higher than 0.1.
    \end{minipage}
\end{figure}

\clearpage
\begin{figure}[h!]
    \centering
    \caption{Changes in the minimum wage measures under counterfactual
             minimum wage policies, Chicago-Naperville-Elgin CBSA}
    \label{fig:map_chicago_cf_wkp_res}
    
    \begin{minipage}{.95\textwidth} \centering
        Panel A: Increase in federal MW from \$7.25 to \$9
        \vspace{1.5mm}
    \end{minipage}

    \begin{subfigure}{.4\textwidth}  \centering
        \caption*{Changes in residence MW}
        \includegraphics[width = 1\textwidth]
            {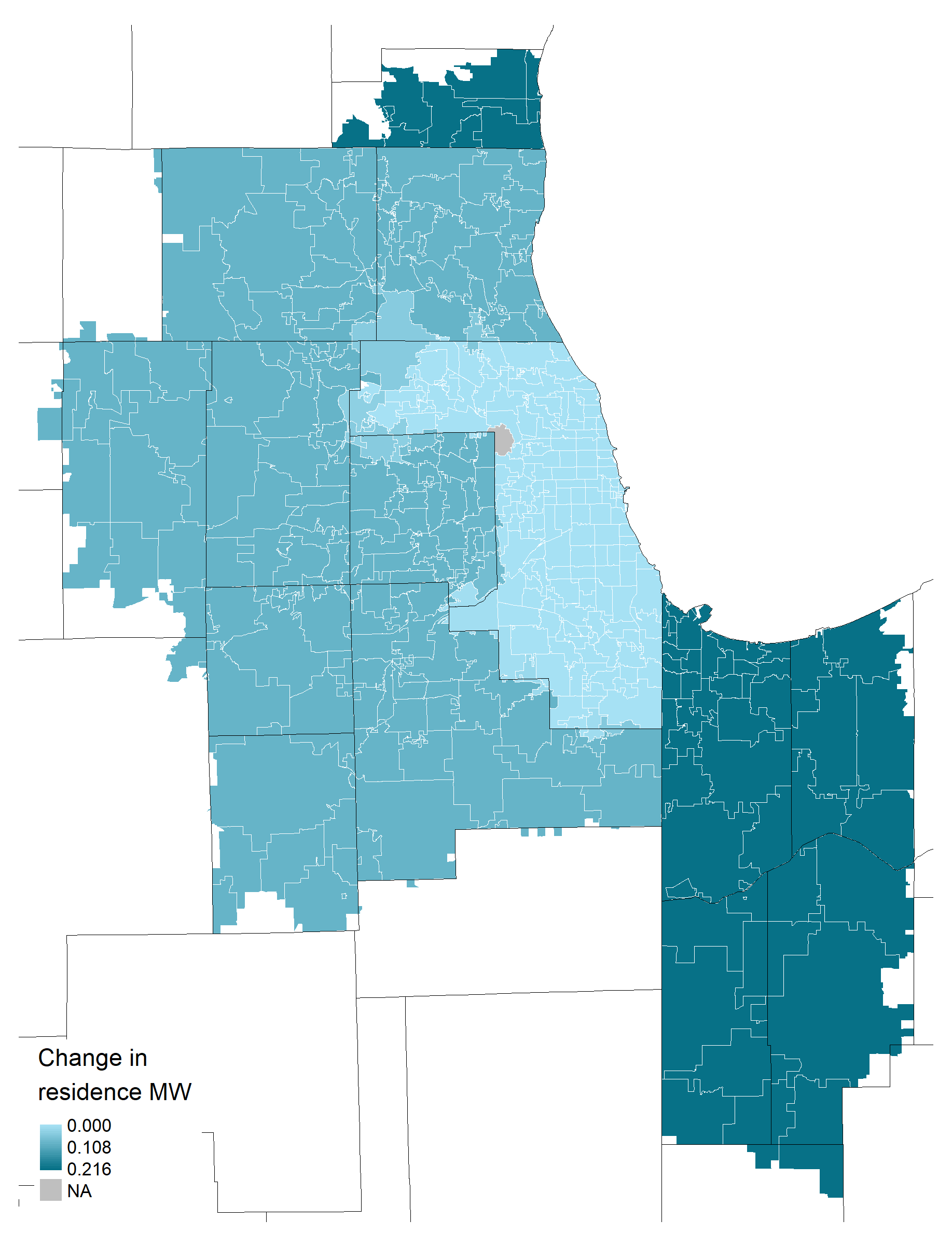}
    \end{subfigure}%
    $\quad\quad\quad\quad$%
    \begin{subfigure}{.4\textwidth}  \centering
        \caption*{Changes in workplace MW}
        \includegraphics[width = 1\textwidth]
            {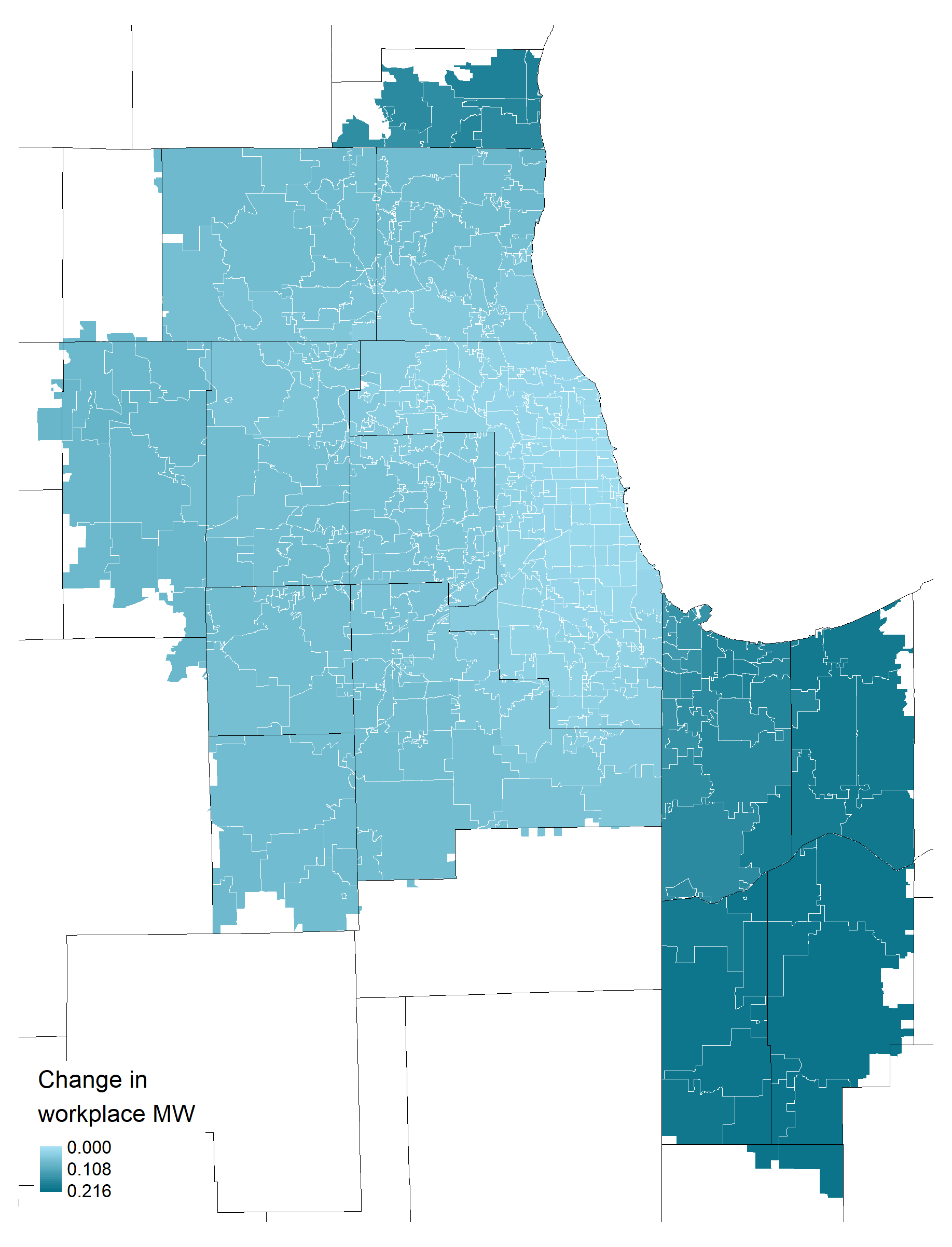}
    \end{subfigure}

    \begin{minipage}{.95\textwidth} \centering
        \vspace{2mm}
        Panel B: Increase in Chicago MW from \$13 to \$14
        \vspace{1.5mm}
    \end{minipage}

    \begin{subfigure}{.4\textwidth}  \centering
        \caption*{Changes in residence MW}
        \includegraphics[width = 1\textwidth]
            {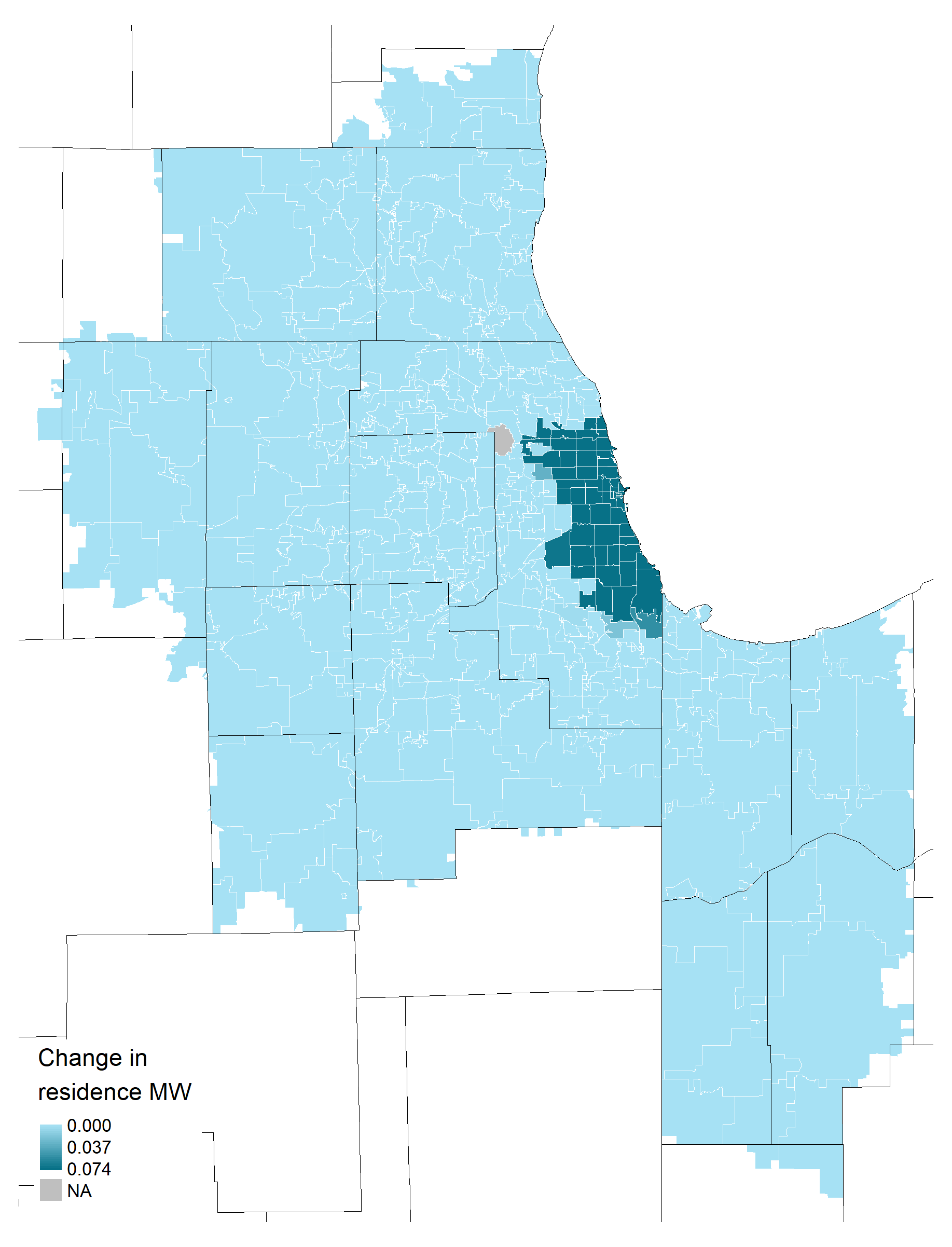}
    \end{subfigure}%
    $\quad\quad\quad\quad$%
    \begin{subfigure}{.4\textwidth}  \centering
        \caption*{Changes in workplace MW}
        \includegraphics[width = 1\textwidth]
            {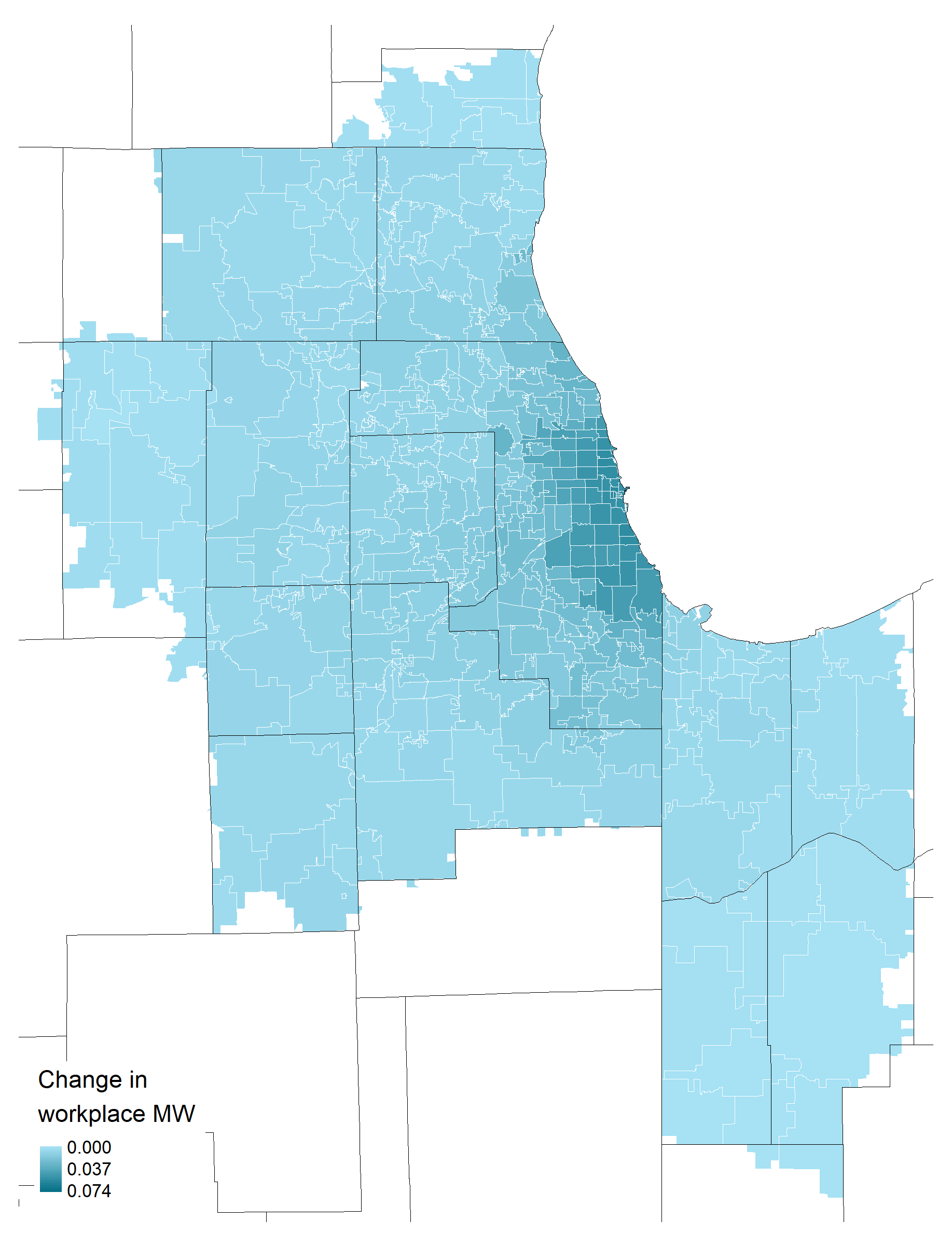}
    \end{subfigure}

    \begin{minipage}{.95\textwidth} \footnotesize
        \vspace{2.5mm}
        Notes:
        Data are from the MW panel described in Section \ref{sec:data_mw_panel} 
        and from LODES.
        The figures map changes in the residence and workplace MW measures 
        by counterfactual MW policies in the Chicago-Naperville-Elgin CBSA.
        Panel A shows a policy where the federal MW increases from \$7.25 to \$9 
        in January 2020, holding constant other MW policies at their December 
        2019 levels.
        Panel B shows a policy where the city of Chicago increases its MW 
        from \$13 to \$14 in January 2020, holding constant other MW policies 
        at their December 2019 levels as well.
    \end{minipage}
\end{figure}

\clearpage
\begin{figure}[h!]
    \centering
    \caption{Changes in log rents and log total wages under counterfactual
             minimum wage policies, Chicago-Naperville-Elgin CBSA}
    \label{fig:map_chicago_cf_rents_wages}

    \begin{minipage}{.95\textwidth} \centering
        Panel A: Increase in federal MW from \$7.25 to \$9
        \vspace{1mm}
    \end{minipage}
    
    \begin{subfigure}{.4\textwidth}  \centering
        \caption*{Changes in log rents}
        \includegraphics[width = 1\textwidth]
            {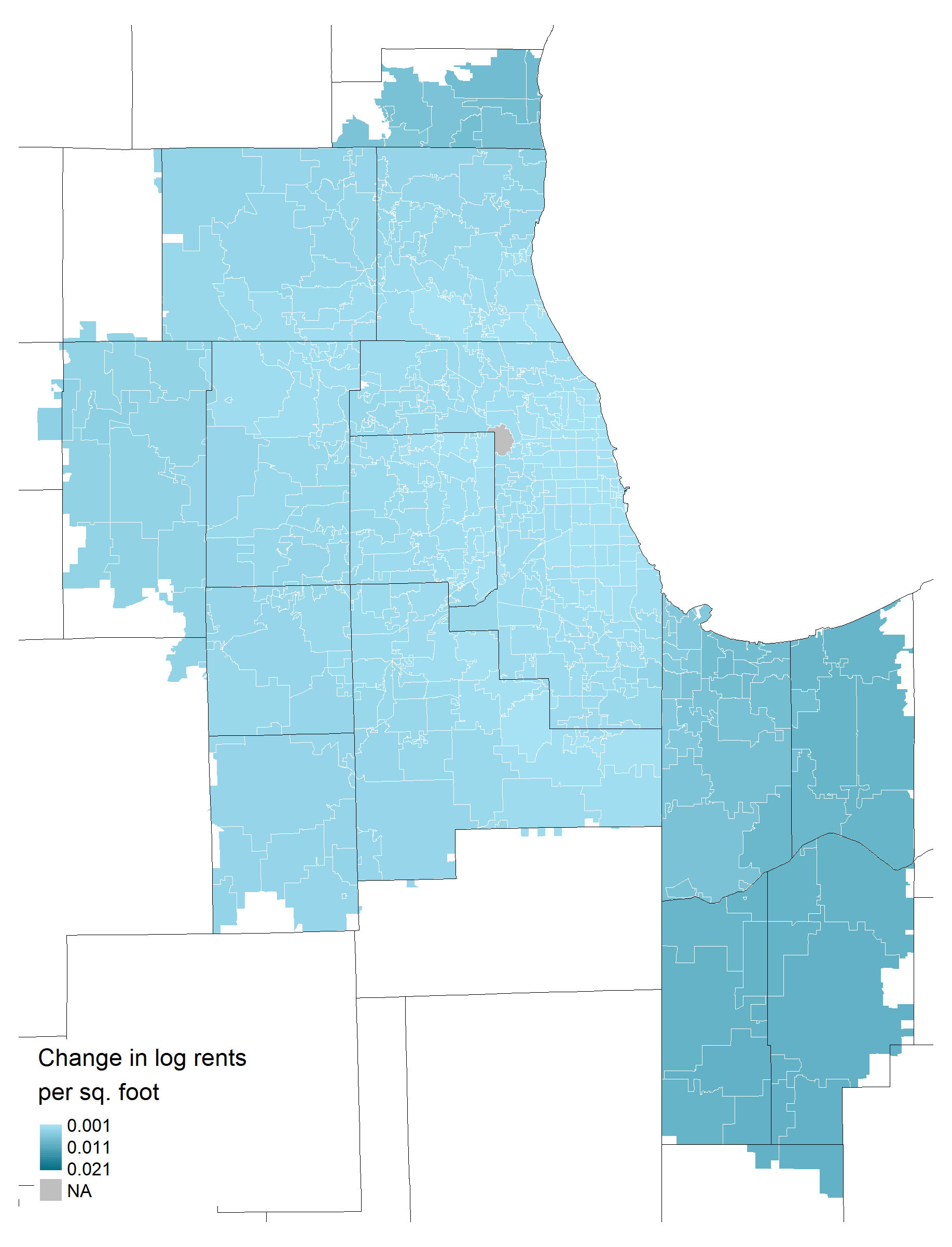}
    \end{subfigure}%
    $\quad\quad\quad\quad$%
    \begin{subfigure}{.4\textwidth}  \centering
        \caption*{Changes in log total wages}
        \includegraphics[width = 1\textwidth]
            {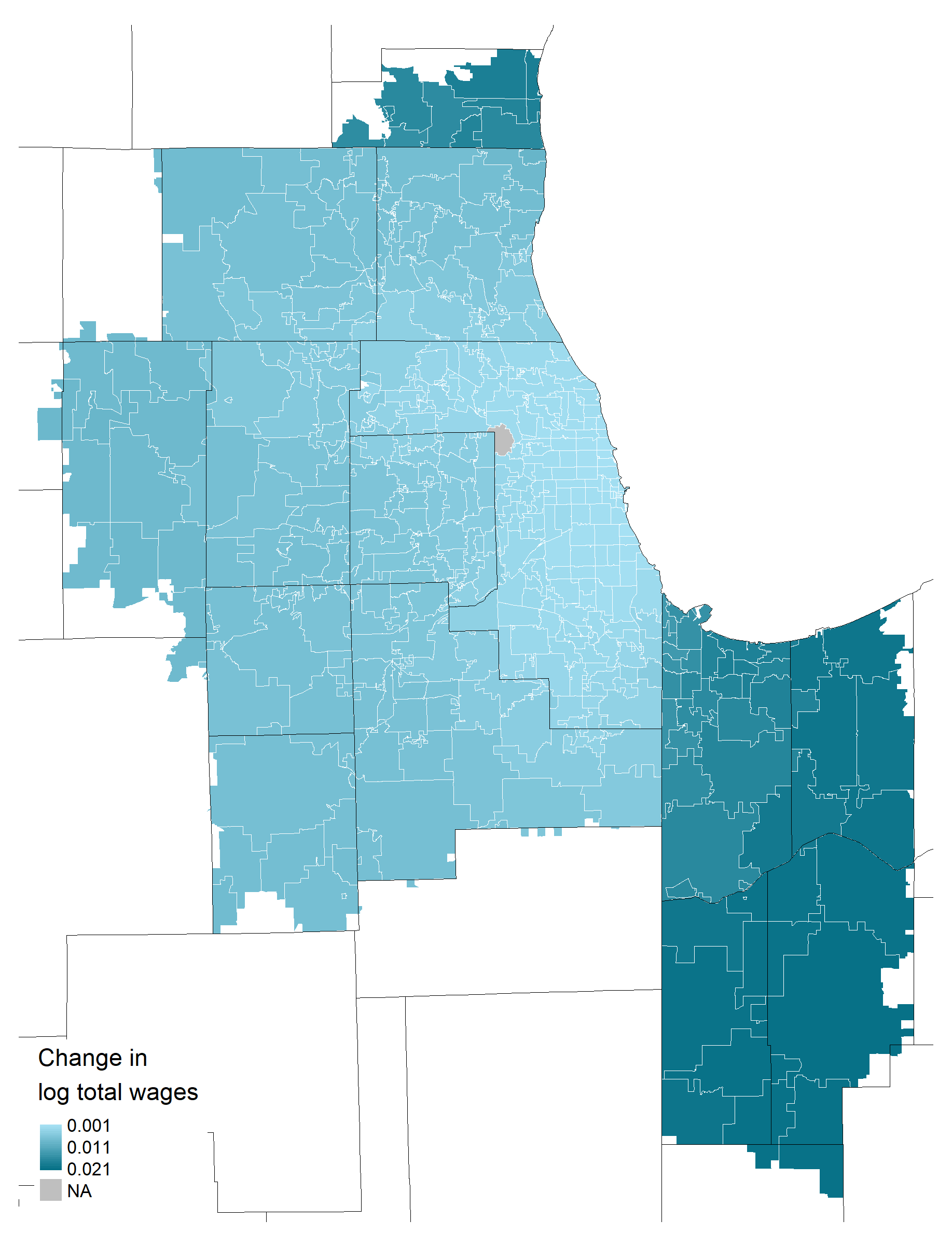}
    \end{subfigure}

    \begin{minipage}{.95\textwidth} \centering
        \vspace{2mm}
        Panel B: Increase in Chicago MW from \$13 to \$14
        \vspace{1mm}
    \end{minipage}
    
    \begin{subfigure}{.4\textwidth}  \centering
        \caption*{Changes in log rents}
        \includegraphics[width = 1\textwidth]
            {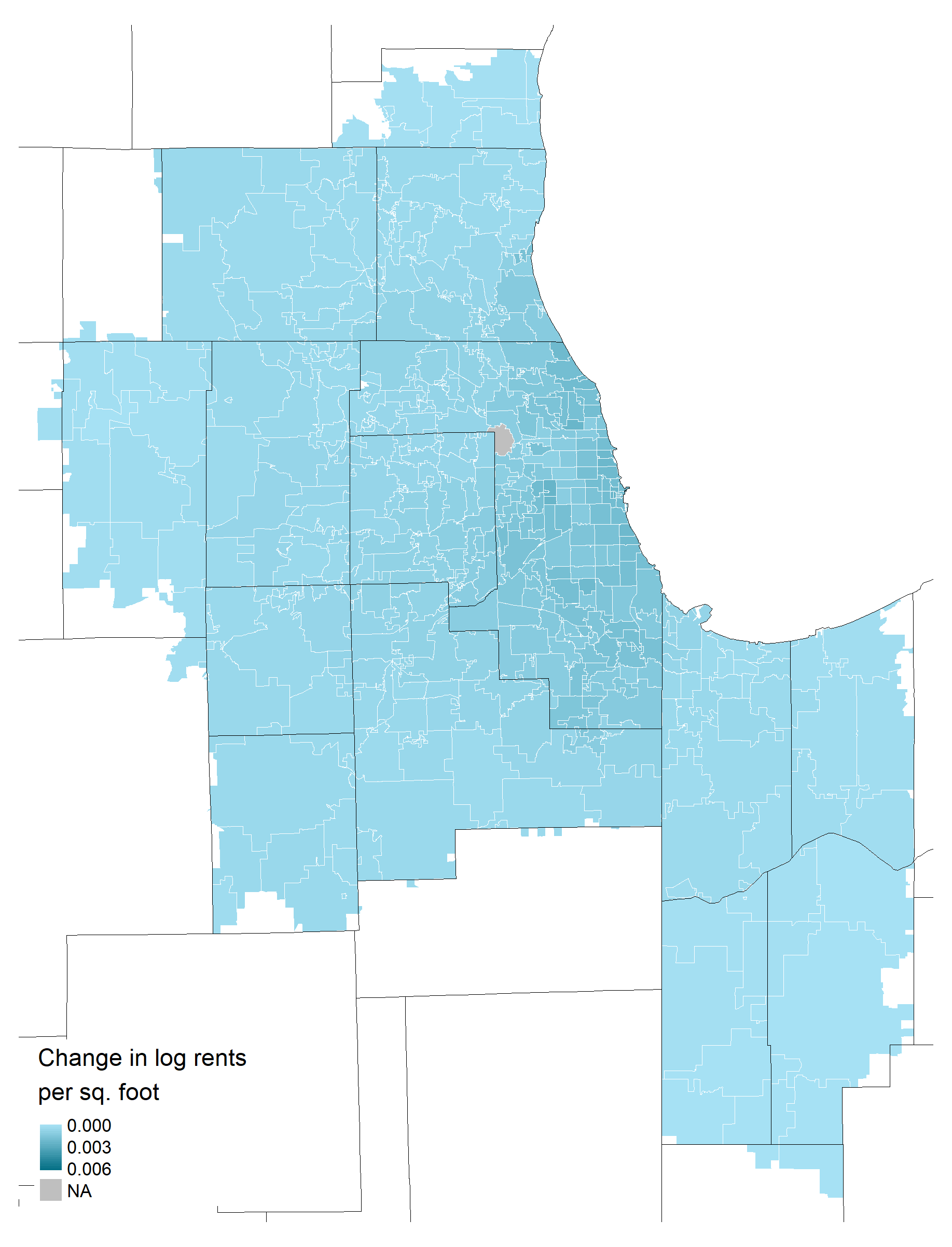}
    \end{subfigure}%
    $\quad\quad\quad\quad$%
    \begin{subfigure}{.4\textwidth}  \centering
        \caption*{Changes in log total wages}
        \includegraphics[width = 1\textwidth]
            {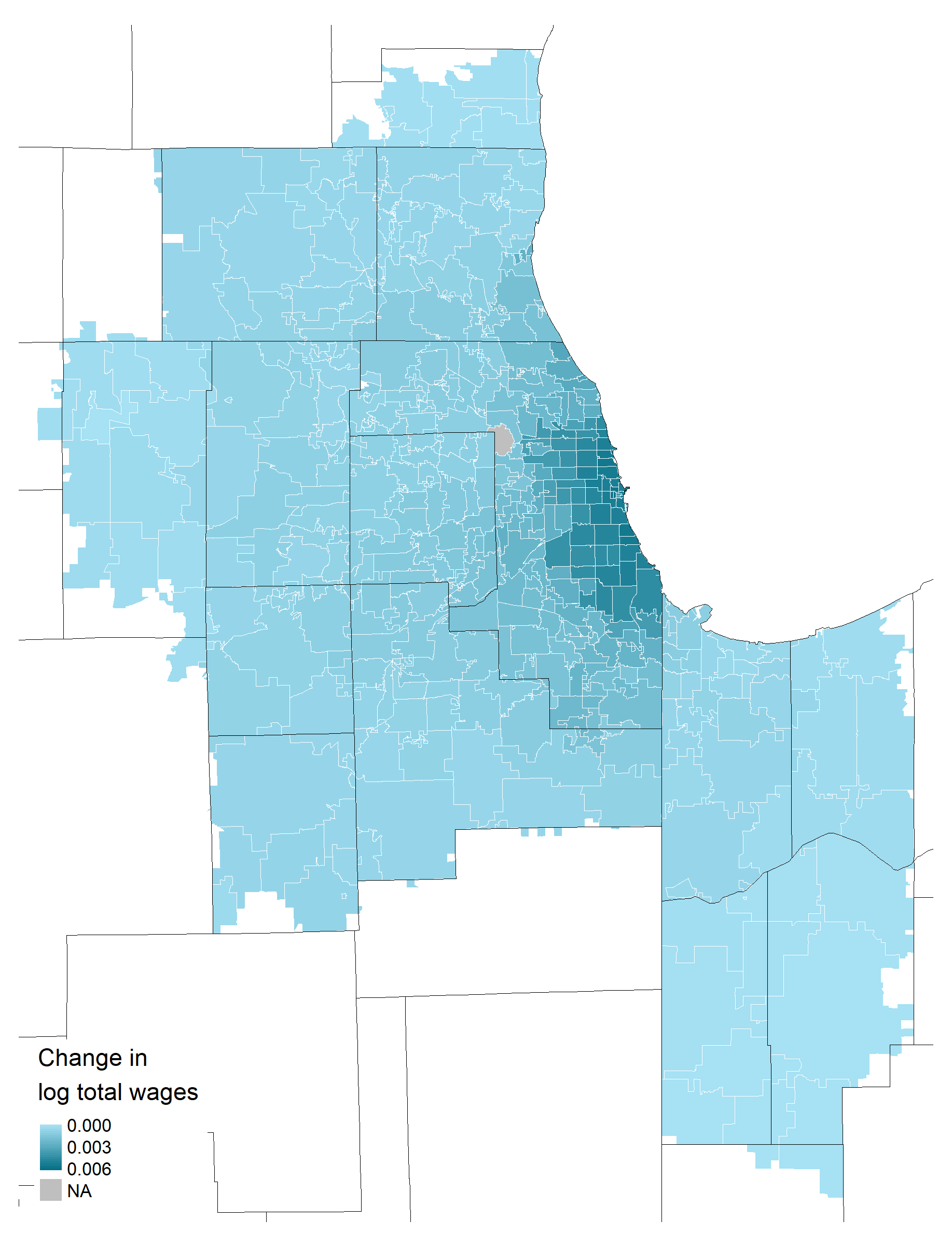}
    \end{subfigure}

    \begin{minipage}{.95\textwidth} \footnotesize
        \vspace{2.5mm}
        Notes: 
        Data are from the MW panel described in section \ref{sec:data_mw_panel} 
        and from LODES.
        The figures map the estimated changes in log total rents per square foot
        and log total wage income under different counterfactual MW policies 
        in the Chicago-Naperville-Elgin CBSA.
        Panel A is based on a counterfactual increase to \$9 in the 
        federal MW in January 2020, and Panel B on a counterfactual increase 
        from \$13 to \$14 in the Chicago City MW, both holding constant other 
        MW policies.
        The color scale has been standardized within each panel.
        To estimate the changes we follow the procedure described in Section 
        \ref{sec:counterfactual} assuming the following parameter values: 
        $\beta = \betaCf$, $\gamma = \gammaCf$, and $\varepsilon = 0.1$.
    \end{minipage}
\end{figure}

\clearpage
\begin{figure}[hbt!]
    \centering
    \caption{Estimated shares pocketed by landlords for different values of 
             the elasticity of wage income to the MW}
    \label{fig:cf_share_by_epsilon}

    \includegraphics[width = .75\textwidth]{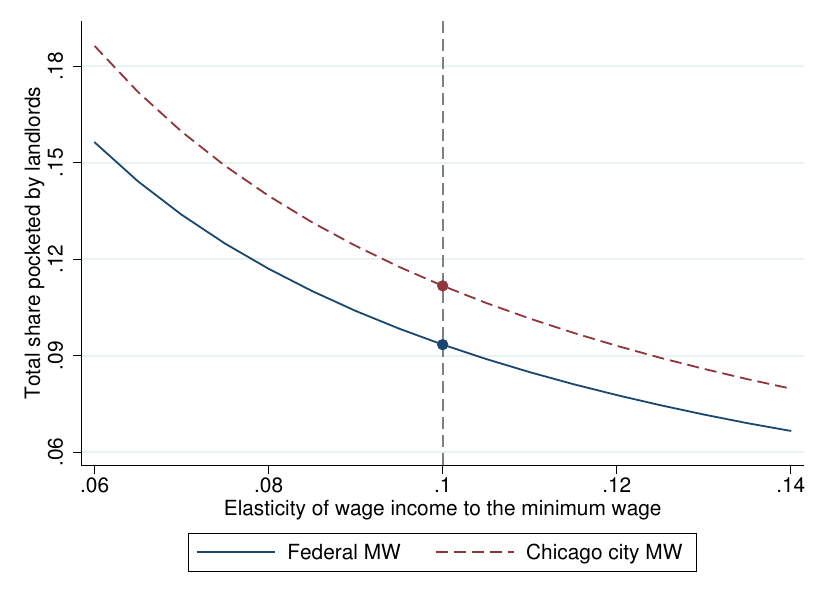}

    \begin{minipage}{.95\textwidth} \footnotesize
        \vspace{3mm}
        Notes:
        Data are from the MW panel described in section \ref{sec:data_mw_panel} 
        and from LODES.
        The figures show the estimated ZIP-code specific share of additional 
        income pocketed by landlords (``share pocketed'')
        under different counterfactual policies:
        an increase to \$9 in the federal MW in January 2020, and
        an increase from \$13 to \$14 in the Chicago City MW, 
        both holding constant other MW policies.
        The unit of observation is the ZIP code.
        To estimate it we follow the procedure described in Section 
        \ref{sec:counterfactual}, assuming the following parameter values: 
        $\beta = \betaCf$, $\gamma = \gammaCf$. 
        The x-axis shows a range of values for the elasticity of wage 
        income to the minimum wage $\varepsilon$.
        The line at $\epsilon=0.1$ corresponds to the estimates reported in
        Table \ref{tab:counterfactuals}.
    \end{minipage}
\end{figure}

\end{document}